\newcommand{\Sig}{\Sigma}
\newcommand{\vp}{\varphi}
\newcommand{\eps}{\epsilon}
\newcommand{\ra}{{\rightarrow}}
\newcommand{\N}{\ensuremath{\mathbb{N}}}
\newcommand{\Q}{\ensuremath{\mathbb{Q}}}
\newcommand{\Z}{\ensuremath{\mathbb{Z}}}
\newcommand{\R}{\ensuremath{\mathbb{R}}}
\renewcommand{\S}{\ensuremath{\mathcal{S}}}
\newcommand{\A}{\ensuremath{\mathcal{A}}}
\newcommand{\D}{\ensuremath{\mathcal{D}}}
\newcommand{\G}{\ensuremath{\mathcal{G}}}
\newcommand*\diff{\mathop{}\!\mathrm{d}}
\newcommand{\vx}{\ensuremath{\vec{x}}}
\newcommand{\vD}{\ensuremath{\vec{D}}}
\newcommand{\vI}{\ensuremath{\vec{I}}}
\newcommand{\ceil}[1]{{\left\lceil{#1}\right\rceil}}
\newtheorem{theorem}{Theorem}[section]
\newtheorem{corollary}[theorem]{Corollary}
\newtheorem{proposition}[theorem]{Proposition}
\newtheorem{definition}[theorem]{Definition}
\newtheorem{condition}{Condition}
\newenvironment{proof}
 {{\sl Proof.}\hspace*{1 ex}}%
 {{\nopagebreak\hspace*{\fill}$\Box$\par\vspace{12pt}}}
\title{Toward breaking the curse of dimensionality:\\ an FPTAS for
  stochastic dynamic programs \\with multidimensional actions and scalar
  states}
\author{%
Nir Halman
\thanks{Jerusalem School of Business Administration, The
    Hebrew University, Jerusalem, Israel, {\tt halman@huji.ac.il}}
\and Giacomo Nannicini
\thanks{IBM T.~J.~Watson, Yorktown Heights, NY, {\tt nannicini@us.ibm.com}}
} 
\begin{document}

\maketitle

\begin{abstract}
  We propose a Fully Polynomial-Time Approximation Scheme (FPTAS) for
  stochastic dynamic programs with multidimensional action, scalar
  state, convex costs and linear state transition function. The action
  spaces are polyhedral and described by parametric linear
  programs. This type of problems finds applications in the area of
  optimal planning under uncertainty, and can be thought of as the
  problem of optimally managing a single non-discrete resource over a
  finite time horizon. We show that under a value oracle model for the
  cost functions this result for one-dimensional state space is ``best
  possible'', because a similar dynamic programming model with
  two-dimensional state space does not admit a PTAS.

  The FPTAS relies on the solution of polynomial-sized linear programs
  to recursively compute an approximation of the value function at
  each stage. Our paper enlarges the class of dynamic programs that
  admit an FPTAS by showing, under suitable conditions, how to deal
  with multidimensional action spaces and with vectors of continuous
  random variables with bounded support. These results bring us one
  step closer to overcoming the curse of dimensionality of dynamic
  programming.
\end{abstract}



\section{Introduction}
\label{s:intro}
A dynamic program (DP) is a mathematical model for sequential decision
making. DPs are widely used by the operations research community to
model and solve a large variety of problems concerning optimal
planning under uncertainty. Unfortunately, DPs are affected by the
{\em curse of dimensionality} -- an expression coined by Richard
E.~Bellman more than 50 years ago~\cite{Be61} -- that makes their
solution very difficult in practice. There is a large body of work
devoted to ways of circumventing the curse, possibly foregoing optimality
or approximation guarantees: this is discussed in
Section~\ref{s:literature}.

This paper deals with a class of discrete-time finite-horizon
stochastic DPs characterized by a scalar state and a multidimensional
action, where the optimal action at each stage and state can be
computed as the solution of a linear program (LP). We now give a more
detailed description of the underlying mathematical model. The
evolution of the state of the DP is governed by the transition
function $f_t$ and the equation $I_{t+1} = f_t(I_t, \vx_t, \vD_t), \;
t=1,\dots,T$, where: $t$ is the discrete time index, $I_t \in \S_t$ is
the state of the system at time $t$ ($\S_t$ is the {\em state space}
at stage $t$), $\vx_t \in \A_t(I_t)$ is the action or decision to be
selected at time $t$ after observing state $I_t$ ($\A_t(I_t)$ is the
{\em action space} at stage $t$ and state $I_t$), $\vD_t$ is a vector
of interstage independent random variables (r.v.s) over the sample
space $\D_t$, and $T$ is the number of time periods. The random vector
$\vD_t$ represents an exogenous information flow, and can be
continuous or discrete. The cost function $g_t(I_t, \vx_t, \vD_t)$
gives the cost of performing action $\vx_t$ from state $I_t$ at time
$t$ for each possible realization of $\vD_t$. The function $g_{T+1}$
is also called the ``terminal cost function'', and it gives the cost
of leaving the system in state $I_{T+1}$ at the end of the time
horizon under consideration. Costs are accumulated over all time
periods: the total incurred cost is equal to $\sum_{t = 1}^T g_t(I_t,
\vx_t, \vD_t) + g_{T+1}(I_{T+1})$. We use $\vx_t$, $\vD_t$ to
emphasize vector quantities, whereas the other quantities are
scalars. The problem is that of choosing policies $\pi_t : \S_t \to
\bigcup_{I_t \in \S_t} \A(I_t)$ in order to determine a sequence of
actions $\vx_1 \in \A(I_1),\dots,\vx_T \in \A(I_T)$ that minimizes the
expectation of the total incurred cost. This problem is called a {\em
  stochastic dynamic program}. Formally, we want to determine:
\begin{align}
  z^\ast(I_1) = \min_{\pi_1,\dots,\pi_T}
  \mathbb{E} \left[\sum_{t = 1}^T g_t(I_t, \pi_t(I_t), \vD_t) +
    g_{T+1}(I_{T+1}) \right], \tag{DP} \label{eq:dp}\\
  \text{subject to: } I_{t+1} = f_t(I_t, \pi_t(I_t), \vD_t), \qquad
  \forall t=1,\dots,T, \notag
\end{align}
where $I_1$ is the initial state of the system and the expectation is
taken with respect to the joint probability distribution of the random
variables $\vD_t$. The seminal result of Bellman \cite{bellmandp} is
that the solution to this problem is given by a recursion:
\begin{theorem} \label{thm:Bellman} For every
  initial state $I_1$, the optimal value $z^\ast(I_1)$ of the DP is
  given by $z_1(I_1)$, where $z_1$ is the function defined by
  $z_{T+1}(I_{T+1})=g_{T+1}(I_{T+1})$ together with the recursion:
  \begin{equation}
  \label{eq:vf}
  z_t(I_t)=\min_{\vx_t \in \A_t(I_t)} \mathbb{E} \left\{
  g_t(I_t,\vx_t,\vD_t)+z_{t+1}(f_t(I_t,\vx_t,\vD_t)) \right\}, \quad
  t=1,\ldots,T.
  \end{equation}
\end{theorem}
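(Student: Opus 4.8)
The plan is to prove, by backward induction on $t$, the slightly stronger statement that for every $t\in\{1,\dots,T+1\}$ and every $I_t\in\S_t$ the recursively defined function $z_t$ equals the \emph{optimal cost-to-go from stage $t$},
\[
  V_t(I_t) := \min_{\pi_t,\dots,\pi_T}\mathbb{E}\left[ \sum_{s=t}^{T} g_s\bigl(I_s,\pi_s(I_s),\vD_s\bigr) + g_{T+1}(I_{T+1}) \;\Big|\; I_t \right],
\]
the trajectory $I_t,\dots,I_{T+1}$ being generated from $I_t$ via $I_{s+1}=f_s(I_s,\pi_s(I_s),\vD_s)$. The theorem is then the case $t=1$, since $V_1(I_1)=z^\ast(I_1)$ by the definition of \eqref{eq:dp}. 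The base case $t=T+1$ is immediate: no decisions remain, so $V_{T+1}=g_{T+1}=z_{T+1}$.

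For the inductive step I would assume $z_{t+1}\equiv V_{t+1}$ on $\S_{t+1}$ and fix a state $I_t$. Since the policy $\pi_t$ affects the objective only through the single action $\vx_t:=\pi_t(I_t)$, which ranges over $\A_t(I_t)$ independently of $\pi_{t+1},\dots,\pi_T$, one can peel off the first minimization and write $V_t(I_t)=\min_{\vx_t\in\A_t(I_t)}\,\min_{\pi_{t+1},\dots,\pi_T}\mathbb{E}[\,\cdot\,\mid I_t,\vx_t]$, where $\mathbb{E}[\,\cdot\,\mid I_t,\vx_t]$ denotes the expected cost accrued from stage $t$ onward when $\vx_t$ is played at stage $t$. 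By interstage independence of the $\vD_s$, conditioning on $\vD_t$ factorizes this expectation as
\[
  \mathbb{E}\left\{ g_t(I_t,\vx_t,\vD_t) + \mathbb{E}\left[ \sum_{s=t+1}^{T} g_s\bigl(I_s,\pi_s(I_s),\vD_s\bigr) + g_{T+1}(I_{T+1}) \;\Big|\; I_{t+1}=f_t(I_t,\vx_t,\vD_t) \right] \right\}.
\]
The hard part will be justifying the interchange of $\min_{\pi_{t+1},\dots,\pi_T}$ with the outer expectation over $\vD_t$. The key observation is that a policy is by definition a \emph{function of the observed state}: for each realized value of $I_{t+1}$ one may substitute a continuation policy attaining $V_{t+1}(I_{t+1})$ there, and gluing these state-wise choices yields an admissible policy; since conversely no policy can beat $V_{t+1}$ at any state, the inner minimum over continuation policies equals $V_{t+1}(I_{t+1})$ pointwise and commutes with $\mathbb{E}$. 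In full generality this is a measurable-selection argument; under the structural hypotheses of this paper, and trivially when $\vD_t$ has finite support, it is elementary.

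Carrying out the interchange and invoking the inductive hypothesis $V_{t+1}\equiv z_{t+1}$ then gives
\[
  V_t(I_t) = \min_{\vx_t\in\A_t(I_t)}\mathbb{E}\left\{ g_t(I_t,\vx_t,\vD_t) + z_{t+1}\bigl(f_t(I_t,\vx_t,\vD_t)\bigr) \right\} = z_t(I_t),
\]
the second equality being the defining recursion \eqref{eq:vf}. This closes the induction, and the case $t=1$ proves the theorem. Note that the argument uses only two substantive facts — interstage independence of the $\vD_t$ (to factorize the expectation) and the definition of a policy as state feedback (for the min--expectation interchange) — so it is unaffected by whether the $\vD_t$ are discrete or continuous.
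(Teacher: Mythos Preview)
Your argument is the standard backward-induction derivation of the Bellman equation and is correct in outline. However, the paper does not actually prove this theorem: it is stated as ``the seminal result of Bellman'' and attributed to \cite{bellmandp} without proof. So there is no paper proof to compare against --- the authors treat Theorem~\ref{thm:Bellman} as classical background.

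That said, your write-up is a faithful rendition of the textbook argument (see, e.g., \cite{bertsekasdp}). The one place where you are appropriately cautious is the interchange of $\min_{\pi_{t+1},\dots,\pi_T}$ with the outer expectation over $\vD_t$: you correctly flag that this is a measurable-selection issue in full generality, and that it becomes elementary under the finite-support or structured assumptions used later in the paper. Since the paper invokes the theorem only as motivation for the recursion \eqref{eq:vf} and immediately moves to the approximation scheme, this level of rigor is more than adequate for its purposes.
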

When the state and action spaces are finite, and the expectations can
be computed with a finite process, this recursion gives a finite
algorithm to compute the optimal value. However, this algorithm may
require exponential time in general. If the state and action spaces
are not finite (e.g.\ when they are intervals or polyhedra as is
discussed in the present paper), or if we cannot compute expectations
in finite time (e.g.\ when there are continuous r.v.s that do not
possess a closed-form formula for the expectation), computing
$z^*(I_1)$ may be intractable.

This paper proposes an FPTAS for a specific class of DPs. A
polynomial-time approximation scheme (PTAS) is an approximation scheme
that returns a solution whose cost is at most $(1+\epsilon)$ times the
optimal cost, where $\epsilon > 0$ is a given parameter. A PTAS runs
in time polynomial in the (binary) input size. If the algorithm also
runs in time polynomial in $1/\epsilon$, we call it an FPTAS. To
construct an FPTAS, we must impose additional structure on the DP,
because it is known that not all DPs admit a polynomial-time
approximation scheme (see e.g.~\cite[Thm.~9.2]{halman14full}).  We
assume that the state $I_t$ is a scalar, whereas the action $\vx_t$
can be vector-valued. We show in Sect.~\ref{s:hardness} that if the
cost functions are only accessible via value oracles, the restriction
that $I_t$ is a scalar cannot be relaxed. At each stage, the vector
$\vD_t$ is composed of a given number of independent but not
necessarily identically distributed r.v.s with bounded support, that
can have continuous or discrete distribution. The vectors $\vD_t$ are
also interstage independent. (If the vectors are not independent, it
is known that the DP may be APX-hard and therefore cannot admit an
FPTAS, see e.g.\ \cite[Thm.~10.1, Cor.~10.2]{halman14full}.)
Furthermore, we assume that: the transition functions $f_t$ are affine;
the costs functions $g_t$ decompose to the sum of a deterministic
piecewise linear convex function and a convex function that may depend
on the r.v.s via an affine transformation; and the action spaces
$\A_t(I_t)$ are polyhedral sets, described as the feasible
region of a parametric LP with right-hand side parameter $I_t$. A
formal description of the assumptions is given in
Sect.~\ref{s:assumptions}. DPs satisfying these assumptions can be
seen as multistage stochastic LPs, see e.g.~\cite{birgesp}, with a
single variable linking consecutive stages (the state variable) and
stochastic r.h.s.\ vector. The constraint matrix and the matrix
linking the state between consecutive stages (also called ``technology
matrix'') are deterministic. However, unlike the sample average
approximation approach typically used in multistage stochastic LP, our
algorithm is deterministic and its running time depends polynomially
on the number of stages. Our framework allows some degree of convex
nonlinearity in the objective function, see
Condition~\ref{con:functions}(ii). One can think of the type of
optimization problems addressed here as stochastic resource management
problems with a single resource, see Sect.~\ref{s:application}. An
example of a problem that exhibits the required characteristics in the
context of energy resource allocation is described in
\cite{powell12smart}. More applications are described in
\cite{nascimento13}: management of water in a reservoir, of cement for
a construction company, of the cash reserve for an investment bank.

\paragraph{Organization of the paper.} This paper is organized as
follows. In Sect.~\ref{s:preliminaries} we define our notation, state
our assumptions, and present an example of an application. In
Sect.~\ref{s:hardness} we discuss the necessity of our assumptions and
show some related hardness results. Sect.~\ref{s:size} introduces an
algorithm to deal with the issue of number sizes growing too
rapidly. Sect.~\ref{s:exp} contains an FPTAS for the sum of r.v.s, and
shows how to efficiently compute the expectation in the DP in the
presence of continuous r.v.s. Sect.~\ref{s:scheme} puts all the
building blocks together to design the FPTAS for problem \eqref{eq:dp}
satisfying our assumptions. Sect.~\ref{s:conclusions} concludes the
paper. In the rest of the Introduction, we position our paper as
compared to the existing literature, we provide a summary of our
contributions, and give an overview of the techniques used.

\subsection{Relevance to existing literature}
\label{s:literature}
Dynamic programming is an invaluable tool for sequential decision
making under uncertainty, and it has received a large amount of
attention. DPs are often very difficult to solve in practice; for this
reason, several approximate solution methodologies have been
developed. \cite{powelldp} discusses explicitly the three {\em curses
  of dimensionality} of DPs, namely: the large dimensionality of the
state space, of the action space, and the difficulty or impossibility
of computing expectations.

To deal with these curses, \cite{CDJ14} studies fixed-dimensional
stochastic dynamic programs with discrete state and action spaces over
a finite horizon. Assuming that the cost-to-go functions are discrete
$L^\natural$-convex (classes of discrete convex functions are
discussed later in this subsection), \cite{CDJ14} proposes a
pseudopolynomial-time approximation scheme that satisfies an arbitrary
pre-specified additive error guarantee. The proposed approximation
algorithm is a generalization of the explicit enumeration
algorithm. The main differences between our paper and \cite{CDJ14}
are: (i) \cite{CDJ14} considers discrete state and action spaces (as
opposed to continuous); (ii) it considers fixed dimensional (as
opposed to one-dimensional) state spaces; (iii) it gives additive (as
opposed to relative) error approximation; (iv) the running time of the
approximation algorithm in \cite{CDJ14} is pseudopolynomial (as
opposed to polynomial) in the binary size of the input. Both
\cite{CDJ14} and the current paper are based on generalization of the
technique of $K$-approximation sets and functions. Another relevant
work in dealing with the curses of dimensionality in options pricing
and optimal stopping is \cite{CG18}. While \cite{CG18} provides
rigorous guarantees on the approximation error (additive and, in some
cases, relative), the assumptions and techniques are very different
from our paper.


The discipline known as Approximate Dynamic Programming (ADP)
strives to deal with the three curses of dimensionality, while
hopefully providing theoretical guarantees on the solution
quality. Comprehensive references in this area are
\cite{bertsekasdp,powelldp}. In most cases, ADP approaches are content
with proving asymptotic convergence to a ``best possible'' value
function, e.g., the best value function approximation that can be
obtained from a given set of basis functions \cite{defarias03}. This
is considerably different from the approach presented in this paper:
we compute an $\epsilon$-approximate solution in polynomial time, for
any given $\epsilon > 0$.

Our paper gives a framework to construct an FPTAS for a continuous
optimization problem. The literature contains only a few general
frameworks that yield approximation schemes for non-discrete
optimization problems. The framework in \cite{swamy12} deals with
stochastic LPs, and is a randomized scheme, whereas the algorithm
given in the present paper is deterministic. The framework of
\cite{halman14full} deals with stochastic discrete DPs in which the
r.v.s are described explicitly as lists of scenarios $(d, \Pr (D=d))$,
and the single period cost functions possess either monotone or convex
structure. \cite{nannifptas15} studies a subclass of the DP model of
\cite{halman14full}, in which the single period cost functions are
assumed to possess convex structure, and provides a faster FPTAS from
both theoretical worst-case upper bounds and practical standpoint. An
extension of \cite{halman14full} to continuous state and action spaces
is given in \cite{nannifptascdpfull}: however,
\cite{nannifptascdpfull} still deals with scalar state and action
spaces, and discrete (scalar) r.v.s. Our paper lifts some of those
restrictions.

To construct an approximation scheme for \eqref{eq:dp}, we study the
problem of approximating the cumulative distribution function (CDF) of
$\sum_{i=1}^n X_i$ where $X_1,\dots,X_n$ are given r.v.s. This problem
is well known to be \#P-hard, see e.g.~\cite{kleinberg97allocating},
\cite[Thm.~4.1]{halman09}. It plays a central role in stochastic
optimization because many problems in this area inherit \#P-hardness
from it, e.g.~\cite{halman09,halman14full}. The problem is related to
counting knapsack solutions: given a binary knapsack $\sum_{i=1}^n a_i
x_i \le b$, a possible way to determine the number of feasible
solutions is to define discrete r.v.s $X_i$ equal to $a_i$ or $0$ with
probability $\frac{1}{2}$ each, and then compute $\Pr(\sum_{i=1}^n X_i
\le b)$. Based on the technique for counting knapsack solutions
developed in \cite{gopalan11fptas} (see also \cite{halman16}),
\cite{li14convolution} gives the first FPTAS to approximate the CDF of
the sum of r.v.s. The approach presented in this paper to deal with
vectors of r.v.s gives, as a by-product, an FPTAS for the same
problem, under weaker assumptions: the difference between
\cite{li14convolution} and our paper is discussed toward the end of
the next section.

\begin{figure}[h]
  \begin{center}
    \includegraphics[width=0.6\textwidth]{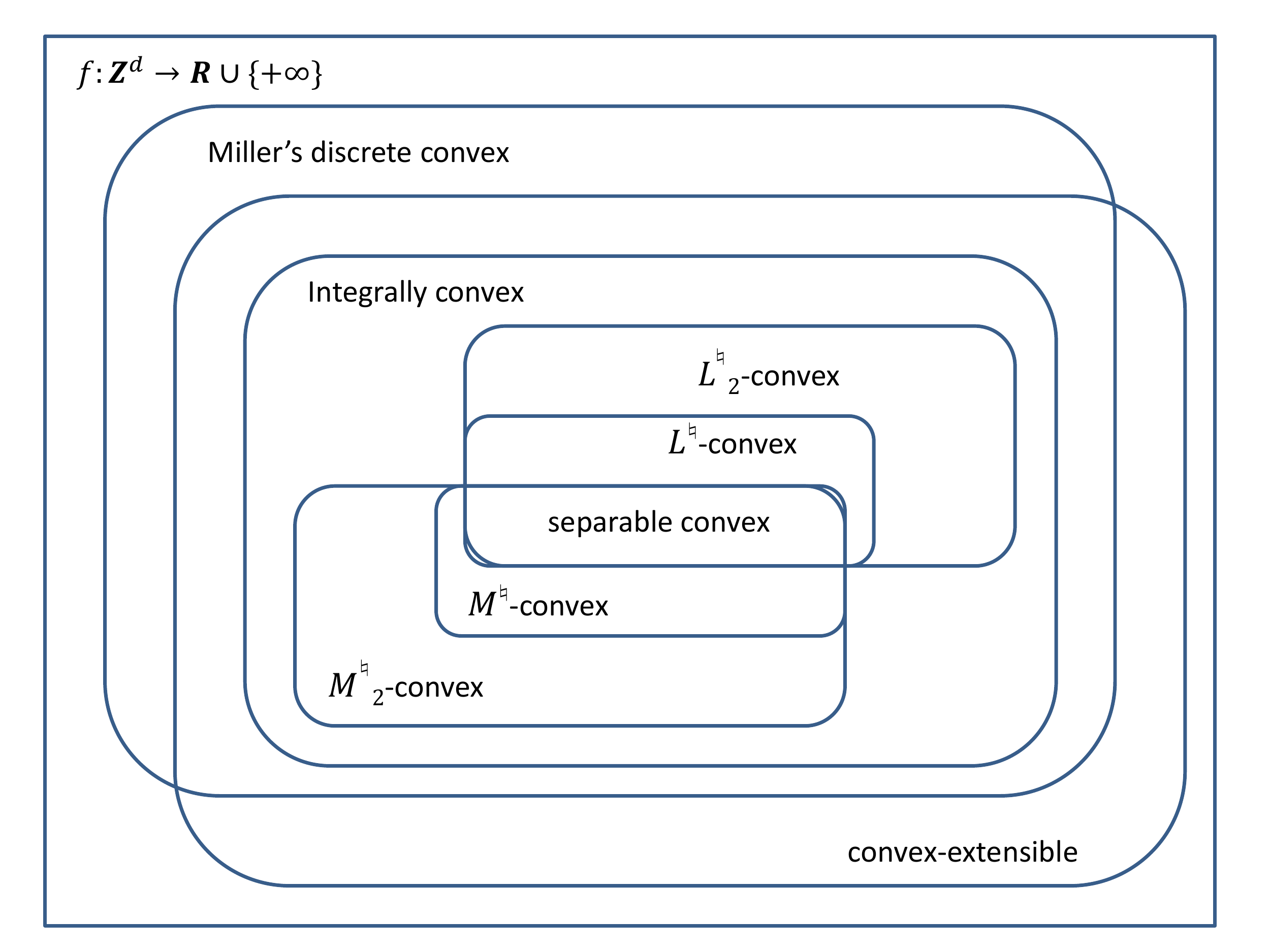}
  \end{center}
  \vskip-0.5cm
  \caption{Classes of discrete convex functions (adapted from Figure 1.15 in \cite{Mu03})} \label{fig:murota}
\end{figure}

Another area of relevance to the present paper is that of discrete
convexity, a discrete analog of convexity studied in the discrete
optimization literature. The first investigation of a class of
discrete functions for which local optimality implies global
optimality is due to Miller \cite{Mi71}.  Favati and Tardella
\cite{FT90} considered a certain special way of extending functions
defined over the integer lattice to piecewise-linear functions defined
over the real space, and they introduced the concept of ``integrally
convex functions''.  \cite{Mu03} introduced the concepts of ``convex
extensible'', ``$L^{\natural}$-convexity'' and
``$M^{\natural}$-convexity,'' in which ``L'' stands for ``Lattice''
and ``M'' stands for ``Matroid''; see \cite[Sec.~1.4]{Mu03}. The
relationship between several classes of discrete convex functions is
depicted in Fig.~\ref{fig:murota}. \cite[Sec.~3]{H15} shows that
separable convex functions in $\Z^d$ admit polylogarithmic-space
approximations, while any approximation of a two-dimensional
nondecreasing convex function in the sense of Miller may require
exponential space in the size of the domain of the function. Of the
classes represented in Fig.~\ref{fig:murota}, \cite{H15} leaves open
the approximability status of the class of convex-extensible
functions; it is settled in the present paper, see
Theorem~\ref{thm:bivarexthard} below.

\subsection{Our contribution}
Our paper makes progress toward overcoming some of the curses of
dimensionality, while keeping the strong theoretical guarantee of
finding an $\epsilon$-approximate solution. First, the framework
presented in this paper allows an action vector of arbitrary
dimension, and the running time depends polynomially on such
dimension. Second, we show how to handle vectors of continuous r.v.s
with bounded support under suitable conditions using an oracle for
their CDF, yielding an efficient approximate computation of expected
values of the cost functions. We remark that continuous r.v.s are
harder to handle than their discrete counterpart, because an exact
computation of expected values cannot be carried out in finite time in
general.

Thus, we summarize below our contribution regarding the three curses
of dimensionality:
\begin{enumerate}
\item We can handle scalar state spaces in polynomial time, as was
  first shown in \cite{nannifptascdpfull} under different
  assumptions. We show that if the cost functions are only accessible
  via value oracles, our continuous DP model with two-dimensional
  state space does not admit a PTAS.
\item We can handle vector action spaces of arbitrary dimension in
  polynomial time.
\item We can handle vectors of continuous r.v.s with bounded support
  in polynomial time.  We can also handle vectors of discrete r.v.s,
  extending the work of \cite{nannifptascdpfull} on scalar discrete
  r.v.s to the vector case. A more precise definition of the required
  conditions is given in Sect.~\ref{s:assumptions}.
\end{enumerate}


To prove that a DP with multidimensional state space does not admit a
PTAS when cost functions are described by an oracle, we prove in
Section~\ref{sec:multihard} the following result (we
define discrete convex-extensible in Section~\ref{sec:multihard}).
\begin{theorem}
  \label{thm:bivarexthard}
  There exists a family of two-dimensional positive nondecreasing
  discrete convex-extensible functions for which any approximation
  (with approximation ratio below a certain threshold) requires
  exponential space in the size of the domain of the function,
  regardless of the scheme used to represent the approximation.
\end{theorem}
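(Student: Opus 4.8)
The strategy is to exhibit an explicit family of functions $h_k:\{0,1,\dots,N\}^2\to\mathbb{Z}_{\geq 0}$, indexed by a parameter $k$ encoding an exponential-size combinatorial object, such that (i) each $h_k$ is nondecreasing, positive, and discrete convex-extensible (i.e.\ its convex-closure/piecewise-linear extension to $\mathbb{R}^2$ agrees with $h_k$ on the lattice), and (ii) no data structure of subexponential size can distinguish enough members of the family to within a multiplicative factor $1+\eps$ for some fixed small $\eps>0$. The key is an adversary/counting argument: if there were an approximation scheme using only $o(2^{N})$ bits (or, in the oracle model, $o(2^N)$ queries), then by pigeonhole two distinct functions $h_k, h_{k'}$ in a sufficiently large subfamily would receive the same approximation, forcing the approximation to be bad for at least one of them at some lattice point.

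**Constructing the family.** I would build $h_k$ so that its ``free'' behavior is concentrated along the anti-diagonal $\{(i,j): i+j = N\}$ (or a similar one-dimensional slice), where an essentially arbitrary nondecreasing sequence can be planted. The convex-extensibility constraint is what makes this delicate: along a one-dimensional fiber a convex-extensible function is just an ordinary convex sequence, but in two dimensions the extension must be jointly convex, which couples the values on the anti-diagonal with their neighbors. The trick — analogous to the one used for Miller-convex functions in \cite{H15} — is to choose a base convex function $q(i,j)$ that is \emph{strictly} convex with large curvature, and then perturb it by small amounts $\delta_S$ on a selected anti-diagonal, where $S$ ranges over subsets of a size-$\Theta(N)$ index set; strict convexity of the base with a curvature gap exceeding the perturbation magnitude guarantees the perturbed function is still convex-extensible, while monotonicity is arranged by making the base steeply increasing. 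Each subset $S$ yields a distinct function, giving a family of size $2^{\Theta(N)}$.

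**The lower-bound argument.** With the family in hand, I would argue as follows. Fix the approximation ratio threshold $\alpha = 1+\eps$ below the level dictated by the relative size of the perturbations versus the base values. Suppose an approximation $\tilde h$ (in any representation) uses space $s$. Since $\tilde h$ is determined by $s$ bits, it takes at most $2^s$ distinct values as we vary over our $2^{\Theta(N)}$ functions; if $s$ is subexponential in $N$, two functions $h_S \ne h_{S'}$ share the same $\tilde h$. Picking a lattice point where $h_S$ and $h_{S'}$ disagree by the planted gap, I show $\tilde h$ violates the $\alpha$-approximation bound for one of them — a contradiction. In the oracle version, the same pigeonhole applies to the at most $2^{(\text{\#queries})}$ query-answer transcripts. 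The domain has size $(N+1)^2 = \mathrm{poly}(N)$, so ``exponential in $N$'' is exponential in the size of the domain, as claimed.

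**Main obstacle.** The hard part will be simultaneously enforcing all three structural properties — positivity, monotonicity, \emph{and} discrete convex-extensibility — while still retaining enough degrees of freedom to plant $2^{\Theta(N)}$ distinguishable perturbations with a constant relative gap. Convex-extensibility is the binding constraint: the jointly-convex piecewise-linear extension rigidly constrains how much one can move a single lattice value without re-triangulating, so I must be careful that the base function's curvature budget dominates the perturbations uniformly over the whole domain, not just locally, and that the relative (multiplicative) size of the surviving perturbations stays bounded below by a constant rather than shrinking with $N$ — otherwise one only rules out an FPTAS, not a PTAS. Verifying convex-extensibility of the perturbed functions will likely require an explicit description of the lower convex envelope (e.g.\ via the triangulation induced by the base function) and a quantitative check that the perturbation prestigiously respects all the defining inequalities.
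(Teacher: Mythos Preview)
Your counting/pigeonhole argument is correct and matches the paper's. The construction, however, is where the gap lies --- and it is precisely the obstacle you flag but do not resolve. A strictly-convex base of the sort you describe forces the curvature (hence the second differences) to grow with the base values if the relative perturbation is to stay constant; but then an arbitrary $\{0,1\}$-pattern of perturbations along the anti-diagonal will in general \emph{not} preserve convex-extensibility, because the lower convex envelope in two dimensions couples each anti-diagonal point to off-anti-diagonal neighbors, not just its two anti-diagonal neighbors. Conversely, if you make the base curved enough to absorb arbitrary $\{0,1\}$ perturbations (e.g.\ $q(i,j)=C(i^2+j^2)$ with large $C$), the relative gap $1/(CN^2)$ vanishes and you only rule out an FPTAS. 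You have correctly identified the tension but not shown how to thread it.

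The paper avoids this trade-off entirely by a different construction. Rather than perturbing a base, it takes the family of functions on $[1,\dots,U]^2$ that arise as pointwise maxima of $O(\sqrt{U})$ affine hyperplanes, one of which is the constant $A$. Each such function has a ``flat'' region where it equals $A$; the shape of this region encodes an integer partition of $U$, and there are $\Omega(2^{\sqrt{U}})$ such partitions. Because the functions are restrictions to $\mathbb{Z}^2$ of piecewise-linear convex functions, convex-extensibility is automatic --- there is nothing to check. Because every non-flat hyperplane has integer slope $\ge 1$, any two functions in the family differ at some lattice point by at least $1$ while taking values in $\{A,A+1,\dots\}$, so the relative gap is the fixed constant $(A+1)/A$. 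The pigeonhole then gives an $\Omega(\sqrt{U})$ space lower bound, exponential in the $O(\log U)$-bit description of the domain. Note the family has size $2^{\Theta(\sqrt{U})}$, not $2^{\Theta(U)}$: the combinatorics of integer partitions, not subsets, is what makes the construction compatible with joint convexity.

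A minor point: your last sentence conflates the cardinality of the domain $(N+1)^2$ with its description size. ``Exponential in the size of the domain'' here means exponential in $\log N$ (the number of bits to write down $N$), which both your intended $\Omega(N)$ and the paper's $\Omega(\sqrt{U})$ satisfy.
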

This result is of independent interest and resolves the
approximability status of discrete convex-extensible functions, see
Fig.~\ref{fig:murota}. Since discrete convex-extensible functions
coincide with their convex envelope by definition, the above result
implies that two-dimensional nondecreasing piecewise-linear convex
functions do not admit an approximation that is polylogarithmic in the
size of the domain (i.e., polynomial in the (binary) size of the
domain's description).

To handle vectors of r.v.s, we assume that such vectors appear in the
cost and transition functions only via affine transformations. Our
FPTAS to approximate the weighted sum of r.v.s has weaker
assumptions than \cite{li14convolution}. The work of
\cite{li14convolution} assumes that we are given integer discrete
r.v.s and an exact oracle for the corresponding CDFs. It constructs an
FPTAS for the sum of r.v.s by discretizing a DP recurrence
relation. We relax their assumptions: the approach described in this
paper only requires an approximate oracle for the CDF of the r.v.s,
and it can also handle non-integer discrete r.v.s and continuous r.v.s
with bounded support. The running time is essentially the same. This
contribution of our paper extends the applicability of approximation
schemes for the sum of r.v.s.

The main constructive result shown in this paper is that a stochastic
DP satisfying some structural assumptions (Conditions
\ref{con:sets}-\ref{con:functions}) admits an FPTAS.  The
approximation scheme is remarkably short and simple: its pseudocode
consists of less than ten lines that use three different
subroutines. The majority of this paper is devoted to their
analysis. While the problem that we solve is continuous in nature, the
technique used to build the approximation is discrete in nature, and
may be useful for the construction of approximation schemes for
continuous optimization problems. We remark that our approximation
scheme is fully deterministic and does not rely on any form of random
sampling, unlike known sample average approximation schemes for
multistage optimization problems.

While the work presented in this paper is still far from the
generality of ADP approaches, there are several real-world
applications that fit into our framework, see
Sect.~\ref{s:application}. Furthermore, our framework provides much
stronger theoretical guarantees than ADP: despite the fact that an
exact solution to the problem may be impossible to compute in finite
time, we determine an $\epsilon$-approximation of the optimum for
arbitrary $\epsilon$. To the best of our knowledge, the present paper
is the first that uses FPTASs to tackle two curses of dimensionality
of DP, i.e., multidimensional action spaces and r.v.s: other FPTASs
for DPs mentioned in our literature review do not address these two
aspects.

\subsection{Techniques used}
\label{s:techniques}
The approximation scheme discussed in this paper relies on the concept
of $K$-approximation sets and functions, introduced in
\cite{halman09}. Intuitively, a $K$-approximation function (with $K =
1 + \epsilon > 1$) is a function that has a relative error of at most
$K-1$ (i.e., $\epsilon$) with respect to a given function. A
$K$-approximation set is a set of points that induces a
$K$-approximation function via linear interpolation. A
$K$-approximation set of a function with codomain $[1, U]$ can be
constructed with $O(\log_K U)$ points because we can allow the
function values to increase by roughly a factor $K$ between two
consecutive points. The papers \cite{halman09,halman14full} define
$K$-approximation sets for functions over discrete domains. The paper
\cite{nannifptascdpfull} extends this to continuous domains, but uses
both relative and absolute error measures. Since we work with
continuous domain, the definition of $K$-approximation sets used in
this paper is essentially the same as in \cite{nannifptascdpfull} when
only relative error is allowed, i.e., the additive error is fixed to
0. Formal definitions are given in Sect.~\ref{s:definitions}.

The approach used in this paper to deal with vector action spaces is
based on LP, and for this reason we impose structure on the problem
that allows us to use an exact LP solution methodology as a
subroutine. Since our approximation scheme proceeds recursively, we
must be careful to ensure that the size of the LPs does not grow too
fast. This is achieved with a scaling technique, finding a
$K$-approximation set with integer domain and codomain values for a
suitably scaled version of the function to be approximated; see
Sect.~\ref{s:size}.

Regarding the approximation for the weighted sum of r.v.s, our
construction is not based on counting knapsack solutions as in
\cite{li14convolution}, but rather on constructing compact
representations for the value function of a DP.  As a result, we
obtain a compact representation for the CDF of the sum of r.v.s over
the entire domain, rather than its value at a single point. The
construction is iterative and amounts to building a $K$-approximation
set for the CDF of a sum of r.v.s, adding one r.v.\ at a time. This is
discussed in Sect.~\ref{s:exp}.

\section{Preliminaries}
\label{s:preliminaries}
In this section, we formally introduce our notation and give the
necessary definitions, discussing the type of problems that can be
handled by the algorithm we propose.

\subsection{Definitions}
\label{s:definitions}
Let $\N, \Z, \Q, \R^+$ and $\R$ be the sets of nonnegative integers,
integers, rational numbers, nonnegative real numbers and real numbers
respectively. We distinguish between three types of continuous random
variables: continuous r.v.s that may have unbounded support and for
which $\Pr(X=x)=0$ for all $x$, mixed continuous r.v.s that may have
$\Pr(X=x)>0$ for some $x$, and truncated continuous r.v.s that have
bounded support with strictly positive probability on the endpoints
only.  Vectors are denoted with an overhead arrow, such as in
$\vx$. Subscripts should generally be intended to index a stage,
e.g.\ $\vx_t$ is the action vector at stage $t$ rather than the $t$-th
component of a vector $\vx$. When it is necessary to index a component
of a vector, it will be made explicit, e.g.\ $\vD_{t,i}$ will be the
$i$-th component of the vector $\vD_t$.  For $\ell, u \in \Z$, we
denote a finite set of the form $\{\ell, \ell + 1, \dots, u\}$ by
$[\ell,\dots,u]$, whereas $[\ell, u]$ denotes a real interval.

Given a real-valued function over a bounded linearly-ordered domain
$D$, $\varphi: D \to \R$, such that $\varphi$ is not identically zero,
we denote $D^{\min} := \min_{x \in D}\{x\}$, $D^{\max} := \max_{x \in
  D}\{x\}$, $\varphi^{\min} := \min_{x \in D} \{|\varphi(x)| :
\varphi(x) \neq 0\}$, and $\varphi^{\max} := \max_{x \in D}
\{|\varphi(x)|\}$. We write $x^+ := \max\{0, x\}$.  For any function
$\varphi : D \to \R$, $t_{\varphi}$ denotes an upper bound on the time
needed to evaluate $\varphi$ on a single point in its domain. A
function $\varphi : D \subseteq \R \to \R$ is called Lipschitz
continuous with a given Lipschitz constant $\kappa$
($\kappa$-Lipschitz continuous, in short), if $|\varphi(x) -
\varphi(y)| \le \kappa |x-y|$ for all $x, y \in D$. Given a Lipschitz
continuous function $\varphi: D \to \R$, we denote by $\kappa_\varphi$
its Lipschitz constant. Let $X$ be a set, and let $Y(x)$ be a set for
every $x \in X$. We denote by $X \otimes Y$ the set $\bigcup_{x \in X}
Y(x)$. We write $\log z$ to denote $\log_2 z$.  When indicating
running times of algorithms, we give bounds on the number of
operations, that we assume to have unitary cost. For arithmetic
operations, the number of bit operations depends on the size of the
numbers: because we always ensure that the size of the numbers is
polynomially bounded in the (binary) input size, the number of bit
operations is at most a polylogarithmic factor larger than the number
of arithmetic operations.

As mentioned in Sect.~\ref{s:techniques}, the approximation scheme
discussed in this paper relies on of $K$-approximation sets and
functions \cite{halman09}. These concepts are formalized below. The
definitions are based on \cite{halman14full,nannifptascdpfull}.
\begin{definition}
  \label{def:apxfun}
  Let $K \ge 1$ and let $\varphi : D \subset \R
  \to \R^+$. We say that $\tilde{\varphi} : D \to \R^+$ is a {\em
    $K$-approximation function} of $\varphi$ (or more briefly, a
  $K$-approximation of $\varphi$) if for all $x \in D$ we have
  $\varphi(x) \le \tilde{\varphi}(x) \le K\varphi(x)$.
\end{definition}
\begin{definition}
  \label{defn:convexext}
  Let $\varphi : D \subset \R \to \R$ be a convex function. For every
  finite $E \subseteq D$, the {\em convex extension of $\varphi$
    induced by $E$} is the function $\hat{\varphi} : [E^{\min},
    E^{\max}] \to \R$ defined as the lower envelope of the convex hull
  of $\{(x, \varphi(x)) : x \in E\}$.
\end{definition}
\begin{definition}
  \label{def:monoext}
  Let $\varphi : D \subset \R \to \R$ be nondecreasing. For every finite $E
  \subseteq D$, the {\em monotone extension of $\varphi$ induced by
    $E$} is the function $\hat{\varphi} : [E^{\min}, E^{\max}] \to
    \R$ defined as $\hat{\varphi}(x) := \min\{\varphi(y) : y \in E,
    y \ge x\}$.
\end{definition}
\begin{definition}
  \label{def:apxset}
  Let $K \ge 1$ and let $\varphi : D \subset \R \to \R^+$ be a convex
  (resp.\ nondecreasing) function. We say that a finite set $W \subset D$
  is a {\em $K$-approximation set of $\varphi$} if the convex
  extension (resp.\ the monotone extension) of $\varphi$ induced by
  $W$ is a $K$-approximation function of $\varphi$.
\end{definition}
For a monotone nondecreasing function, an algorithm to compute a
$K$-approximation set in polynomial time is described in
\cite{halman14full} if the function is defined over integers. If the
function is defined over a real interval and is strictly positive,
such an algorithm is given in \cite{nannifptascdpfull}, see the
appendix \ref{apx:functions}. We call this algorithm {\sc ApxSetInc},
consistently with these papers: whether the discrete or continuous
version should be applied will be clear from the context.
$K$-approximation functions can be combined as indicated below.
\begin{proposition} [{\bf Calculus of $K$-approximation functions}, Prop.~5.1 in \cite{halman14full}]
  \label{prp:CAF} For $i=1,2$ let $K_i \geq 1$, let $\varphi_i:D
  \rightarrow \R^+$ be an arbitrary function over domain $D$, and let
  $\tilde{{\varphi}_i}:D \rightarrow \R$ be a $K_i$-approximation of
  $\varphi_i$. Let $\psi_1:D \rightarrow D$, and let $\alpha,\beta \in
  \R^+$. Then:
  \begin{enumerate}
  \item $\varphi_1$ is a 1-approximation of itself,
  \item \label{item:CAFlin}(linearity of appr.)  $\alpha + \beta
    \tilde{\varphi_1}$ is a $K_1$-approximation of $\alpha + \beta
    \varphi_1$,
  \item \label{item:CAFsum}(summation of appr.)
    $\tilde{\varphi_1}+\tilde{\varphi_2}$ is a
    $\max\{K_1,K_2\}$-approximation of $\varphi_1 + \varphi_2$,
  \item \label{item:CAFtra} (composition of appr.)
    $\tilde{\varphi_1}(\psi_1)$ is a $K_1$-approximation of
    $\varphi_1(\psi_1)$,
  \item \label{item:CAFmin} (minimization of appr.) $\min
    \{\tilde{\varphi_1},\tilde{\varphi_2}\}$ is a
    $\max\{K_1,K_2\}$-approximation of
    $\min\{\varphi_1,\varphi_2\}$,
  \item \label{item:CAFmax} (maximization of appr.) $\max
    \{\tilde{\varphi_1},\tilde{\varphi_2}\}$ is a
    $\max\{K_1,K_2\}$-approximation of
    $\max\{\varphi_1,\varphi_2\}$,
  \item \label{item:CAFapx} (appr.\ of appr.) If
    $\varphi_2=\tilde{\varphi_1}$ then $\tilde{\varphi_2}$ is a
    $K_1K_2$-approximation of $\varphi_1$.
  \end{enumerate}
\end{proposition}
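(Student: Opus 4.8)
The plan is to verify each of the seven claims directly from Definition~\ref{def:apxfun}, i.e., from the defining double inequality $\varphi(x) \le \tilde{\varphi}(x) \le K\varphi(x)$ holding for all $x \in D$. Each item reduces to a short pointwise chain of inequalities, so the proof is entirely elementary; the only real work is organizing the bookkeeping cleanly and being careful about where nonnegativity is used.

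First I would record two auxiliary facts that will be reused throughout. Fact (a): if $\gamma \ge 0$ and $K \ge 1$, then $\gamma \le K\gamma$. Fact (b): for nonnegative reals $a,b$ and $K_1,K_2 \ge 1$ one has $\min\{K_1 a, K_2 b\} \le \max\{K_1,K_2\}\min\{a,b\}$ and $\max\{K_1 a, K_2 b\} \le \max\{K_1,K_2\}\max\{a,b\}$; both follow by splitting into the cases $a \le b$ and $a > b$ and invoking (a). I would also note at the outset that each $\tilde{\varphi_i}$ is nonnegative, since it dominates the nonnegative $\varphi_i$; hence (b) may be applied pointwise with $a = \varphi_1(x)$ and $b = \varphi_2(x)$, and the interpolated/combined functions in items~\ref{item:CAFsum}, \ref{item:CAFmin}, \ref{item:CAFmax} indeed have codomain in $\R^+$.

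The items then follow in order. Item~1 is immediate. For item~\ref{item:CAFlin}, multiply $\varphi_1(x) \le \tilde{\varphi_1}(x) \le K_1\varphi_1(x)$ by $\beta \ge 0$ and add $\alpha \ge 0$; the lower bound is immediate, and the upper bound follows since $\alpha + \beta K_1\varphi_1(x) \le K_1\alpha + \beta K_1\varphi_1(x) = K_1(\alpha + \beta\varphi_1(x))$ by (a) applied to $\alpha$. For item~\ref{item:CAFsum}, add the two defining inequalities and apply the $\max$-part of (b). Item~\ref{item:CAFtra} is the substitution $x \mapsto \psi_1(x)$ in the defining inequalities for $\varphi_1$, which is legitimate because $\psi_1$ maps $D$ into $D$. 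For items~\ref{item:CAFmin} and~\ref{item:CAFmax}, the lower bounds use that $\min$ and $\max$ are nondecreasing in each argument together with $\tilde{\varphi_i} \ge \varphi_i$, and the upper bounds use $\tilde{\varphi_i} \le K_i\varphi_i$ together with the corresponding inequality in (b). Finally, for item~\ref{item:CAFapx}: setting $\varphi_2 = \tilde{\varphi_1}$ gives $\varphi_1(x) \le \varphi_2(x) \le K_1\varphi_1(x)$ pointwise, and combining with $\varphi_2(x) \le \tilde{\varphi_2}(x) \le K_2\varphi_2(x)$ yields $\varphi_1(x) \le \tilde{\varphi_2}(x) \le K_2\varphi_2(x) \le K_1 K_2\varphi_1(x)$.

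The step most prone to slips is tracking the hypotheses $\alpha,\beta \ge 0$ and the nonnegativity of the function values: these are exactly what make Fact~(a) valid and are therefore essential in items~\ref{item:CAFlin}, \ref{item:CAFmin}, and~\ref{item:CAFmax}. Beyond that, there is no genuine obstacle; the proposition is a routine consequence of the definition, included here because it is invoked repeatedly in the analysis of the approximation scheme.
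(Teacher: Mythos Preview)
Your proof is correct and is exactly the standard elementary verification one would expect. The paper itself does not prove this proposition; it is quoted verbatim from \cite{halman14full} (Prop.~5.1 there) and stated without proof, so there is no ``paper's own proof'' to compare against beyond the implicit pointer to the cited reference, where the argument is the same pointwise check you give.
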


We call \emph{canonical representation} of an approximation set $W$ of
a function $\vp$ its representation as a linearly ordered set of
points and the corresponding function values. Formally, we write the
canonical representation as $\{(x,\vp(x))\; | \; x \in W\}$,
represented as an array sorted by its $x$ coordinate. Such a
representation corresponds to the data that is normally stored in a
computer implementation of the proposed algorithms. Let $\vp$ be a
convex (resp.\ monotone) function, $W$ a $K$-approximation set of
$\vp$, and let $\hat \vp$ be the convex (resp.\ monotone) extension of
$\vp$ induced by $W$. We say that $\hat \vp$ is a $K$-approximation
function of $\vp$ that is given in the form of a canonical
representation. (Note that by Def.~\ref{def:apxset}, $\hat \vp$ is
indeed a $K$-approximation function of $\vp$.) Given the canonical
representation and any query point $x$ in the domain of $\vp$, one can
compute in $O(\log |W|)$ time the value of $\hat \vp$ at
$x$. Therefore, the canonical representation serves as a value oracle
for $\hat \vp$ with query time $O(\log |W|)$. Given a canonical
representation, we can recover the approximation set $W$ on which it
was constructed in time linear in $|W|$.

We define a routine called {\sc CompressInc}($\varphi, [A,B], K$),
that given a monotone nondecreasing function $\varphi:[A,B]
\rightarrow \R^+$ and an approximation factor $K$, constructs a value
oracle (in the form of a canonical representation) for a
$K$-approximation function of $\vp$. Details are given in appendix
\ref{apx:functions}. Similar routines will be constructed in the
remainder of this paper.

The main idea for the approximation algorithm proposed in this paper
is to construct a value function approximation following the backward
recursion \eqref{eq:vf}. The value function approximation is stored as
a $K$-approximation set. At each stage, we must be able to efficiently
compute the expected value appearing in \eqref{eq:vf}, and to perform
the minimization over the action space. We will show how to achieve
these goals under the assumptions detailed in the next section.

\subsection{Assumptions}
\label{s:assumptions}
To construct an FPTAS, we require the following conditions to be
satisfied:
\begin{condition} \label{con:sets}[Domains]
  $\S_{T+1}$ and $\S_t$ for $t=1,\ldots,T$ are intervals on the real
  line. $\A_t(I_t):=\{\vx_t : A_t \vx_t \ge \vec{b}_t +
  \vec{\delta}_{b_t} I_t, \vx_t \ge 0 \} \subset \R^p$ is a
  $p$-dimensional polyhedral set that is expressed as the feasible set
  of a parametric LP with $p$ variables, where: the right-hand side
  vector is an affine function of the parameter $I_t$; $A_t
  \in \Q^{m \times p}$; and $\vec{b}_t, \vec{\delta}_{b_t} \in \Q^{m}$
  for $t=1,\ldots,T$.
\end{condition}
\begin{condition} \label{con:implicit}[Description of random events]
  For every $t=1,\dots,T+1$, we have $\vD_t \in \R^{\ell}$.  For every
  $t=1,\dots,T+1$ and $i=1,\dots,\ell$, one of the following two
  conditions hold for the $i$-th random variable $\vD_{t,i}$ of the
  vector $\vD_t$:
  \begin{itemize}
  \item[(i)] The random variable $\vD_{t,i}$ is truncated continuous
    with compact support ${\rm support}(\vD_{t,i}) = [\vD_{t,i}^{\min},
      \vD_{t,i}^{\max}] \subset \R$ of length $n_{t,i} =
    \vD_{t,i}^{\max} - \vD_{t,i}^{\min}$, and its CDF is Lipschitz
    continuous.
  \item[(ii)] The random variable $\vD_{t,i}$ is discrete, its support
     ${\rm support}(\vD_{t,i}) \subset [\vD_{t,i}^{\min}, \vD_{t,i}^{\max}] \subset
    \R$ consists of $n_{t,i} < \infty$ elements, and the $k$-th largest
    element $x$ in the support can be identified in time
    polylogarithmic in $n_{t,i}$.
  \end{itemize}
  Furthermore, $\Pr(\vD_{t,i} = \vD_{t,i}^{\min}) > 0$, $\Pr(\vD_t
  = \vD_{t,i}^{\max}) > 0$, and the information about the random
  events is given via value oracles to the CDF. All $\vD_{t,i}$ are
  independent.
\end{condition}
\begin{condition} \label{con:functions}[Structure of the functions]
  For every $t=1,\ldots,T$, the function $f_t(I_t,\vx_t,\vD_t) : \S_t
  \otimes \A_t \times \D_t \to \S_{t+1}$ is linear in its variables
  and is expressed as $f_t(I_t, \vx_t, \vD_t) := \theta^I I_t +
  \vec{\theta}^x \cdot \vx_t + \vec{\theta}^D \cdot \vD_t$. The
  function $g_t$ can be expressed as
  $g_t(I_t,\vx_t,\vD_t)=g_t^I(I_t,\vx_t)+g_t^D(f^g_t(I_t,\vx_t,\vD_t))$,
  where: $g_t^I : \S_t \otimes \A_t \to \R^+$ is a piecewise linear
  convex function described as the pointwise maximum of a set of $q_t
  < \infty$ given hyperplanes; and $f^g_t : \S_t \otimes \A_t \times
  \D_t \to \G_t$ is an affine function expressed as $f^g_t(I_t, \vx_t,
  \vD_t) := \sigma^I I_t + \vec{\sigma}^x \cdot \vx_t + \vec{\sigma}^D
  \cdot \vD_t$. One of the following conditions hold:
  \begin{itemize}
    \item[(i)] The function $g_{T+1} : \S_{T+1} \to \R^+$ is positive
      piecewise linear convex over $\S_{T+1}$ with $m_{T+1}$ given
      breakpoints and slopes, and minimum value $g_{T+1}^{\min} >
      0$. Furthermore, for all $t=1,\ldots,T$ the function
      $g_t^D(\cdot) : \G_t \subset \R \to \R^+$ is a piecewise linear
      convex function over a compact interval $\G_t$ with $m_t$ given
      breakpoints and slopes.
    \item[(ii)] The function $g_{T+1} : \S_{T+1} \to \R^+$ is
      Lipschitz continuous and convex over $\S_{T+1}$ and is described
      via a value oracle. There is a given positive bound $\tilde
      g_{T+1}^{\min} > 0$ on the minimum value of
      $g_{T+1}(\cdot)$. Furthermore, for all $t=1,\ldots,T$ the
      function $g_t^D(\cdot) : \G_t \subset \R \to \R^+$ is
      non-negative Lipschitz continuous and convex over a compact
      interval $\G_t$, and is described via a value oracle.
  \end{itemize}
\end{condition}
We give next a few remarks about our DP model.  In Condition
\ref{con:implicit}(i) and Condition~\ref{con:functions}(ii) the
Lipschitz constant need not be given a-priori. However, the running
time of our algorithms depends polylogarithmically on it. In Condition
\ref{con:implicit}(ii), we use $n_{t,i}$ to indicate either the size
of the support in terms of interval length for continuous random
variables, or the cardinality of the support for discrete random
variables: this allows us to write the running time in a more compact
way. An example of a truncated continuous r.v.\ satisfying
Condition~\ref{con:implicit} is that of a Gaussian r.v.~$Y$ with given
mean and standard deviation, clipped to a bounded interval: for
example, if $Y$ represents a demand, we can truncate it from below at
zero because demand cannot be negative ($\Pr(X=0)=\Pr(Y <0)$), and
from above at some positive order capacity $u$ because all demand
above this value is treated identically
($\Pr(X=u)=\Pr(Y>u)$). Truncated mixed r.v.s can in principle be
handled under assumptions similar to Condition \ref{con:implicit}, but
their treatment is cumbersome and is not pursued here. We remark that
the implicit model for the description of the r.v.s, as stated in
Condition \ref{con:implicit}, is both more general and more compact
than listing all possible values in the support of the r.v.s and the
corresponding probabilities. The latter approach, while simpler and
more commonly employed, would only be viable for discrete r.v.s with
support of small size.

Condition \ref{con:functions}(i) is the traditional model in which the
cost functions are analytically known in explicit form. Condition
\ref{con:functions}(ii) is a more flexible model that allows some of
the cost functions to be described by value oracles. The FPTAS
proposed in this paper is essentially the same under both Condition
\ref{con:functions}(i) and \ref{con:functions}(ii), but some of our
hardness results assume the oracle model. Studying the oracle model is
appealing because positive results that hold in this setting have weak
assumptions and can therefore be broadly applied. Value oracles can
also allow strong negative results, see Sect.~\ref{s:hardness}.


The input data of the problem includes: the number of time periods $T$;
the initial state $I_1$; the constant $\gamma=\min
\{\Pr(\vD^{\max}_{t,i}),\Pr(\vD^{\min}_{t,i}) \;|\; t=1,\ldots,T,
i=1,\ldots,\ell\}$ (Condition \ref{con:implicit} implies $\gamma>0$);
the maximum cost in a time period $U_g$; the maximum Lipschitz
constant $\kappa$ of the cost functions $g_t^I, g_t^D$ (and of the CDF
of $\vD_{t,i}$ in the case of Condition \ref{con:implicit}(i)) for
$t=1,\dots,T+1$; the minimum value $\mu := g_{T+1}^{\min}$ of the
terminal cost function; the maximum length $U_S$ of the state spaces;
the maximum size $U_A$ of the coefficients in the constraint matrix or
right-hand side vector of the LPs describing $A_t(I_t)$; the maximum
size $U_f$ of the coefficients in the vectors $\vec{\theta}^x,
\vec{\sigma}^x, \vec{\theta}^D, \vec{\sigma}^D$ and of the scalars
$\theta^I,\sigma^I$ in the description of $f_t$ and $f^g_t$; the
maximum numbers of pieces $m^* = \max_t m_t$ and $q^* = \max_t q_t$ in
the piecewise description of the cost functions; the maximum size of
the support $n^* = \max_{t,i} n_{t,i}$ of the r.v.s. Our algorithms
run in time polynomial in $\log I_1$, $\log 1/\gamma$, $\log 1/\mu$,
$\log U_g$, $\log \kappa$, $\log U_S$, $\log U_A$, $\log U_f$, $m^*$,
$q^*$, $p$, $\ell$, $T\log n^*$. Throughout this paper, the Lipschitz
constants of the functions involved need not be known, but the running
time of the algorithms depends on them: intuitively, a function with
high Lipschitz constant may change quickly, and it may take longer for
the algorithms to identify sudden ``jumps'' in function value with the
required precision. The necessity of our assumptions is discussed in
Sect.~\ref{s:hardness}.

\subsection{An application to resource management under uncertainty}
\label{s:application}
We describe here an example of the type of models allowed under the
assumptions given in the previous section. The example involves managing
a resource of non-discrete nature, e.g., liquid natural gas or cement,
at a central storing facility. Let $G = (V, E)$ be a directed graph
with $|V| = m + 1$ nodes with a special node $v_0$ that is the central
storing facility; for $v \in V$, we denote by $\delta^+(v)$ its set of
outgoing edges and by $\delta^-(v)$ its set of ingoing edges. The
optimization is performed over a finite time horizon $T$. The state of
the system $I_t$ is the amount of resource available at the central
storing facility at the beginning of time period $t$.  At the end of
every time period, there is an unknown arrival of the resource encoded
by a r.v.\ $\vD_{t,0}$, e.g., cargo ships (in which case, the r.v.\ is
discrete). There are $m$ locations, $v_1,\dots,v_m$ that consume the
resource but do not have (between time periods) storing capabilities.
Each location has unknown demand at time period $t$, distributed according
to a continuous r.v.\ $\vD_{t,i}$ with $i=1,\dots,m$. The demand
cannot be transferred to other locations: unsatisfied demand or excess
inventory at each of the locations is lost.  The resource can be moved
from the central storing facility to the $m$ locations, as well as
between locations, over a capacitated flow network described by $G$,
potentially with losses $1-a_{ij}$ over the arcs $(i,j) \in E$ (i.e.,
the flow variables may appear with coefficients $< 1$ in the flow
balance constraints at some nodes). There is no backlogging at the
central storing facility: inventory must stay nonnegative (but we
could easily amend the formulation to allow backlogging). At each time
period $t$, one observes the state of the system $I_t$, and based on
the distribution of the demand in time periods $t,t+1,\ldots,T$
decides on the flow $f_{t,ij}$ over each arc $(i,j) \in E$. At the end
of the period a cost is incurred that consists of the transportation
cost given by the flow network with unit cost $c_{ij}$ on each arc,
and shortage (resp.\ disposal) cost $s_i$ (resp.\ $h_i$) that is
accounted for every unit by which the demand $\vD_{t,i}$ at location
$i$ is not met (resp.\ is exceeded). At the final time period $T+1$,
there is a disposal cost $h_0$ for every unit left in the
inventory. We denote by $y_{t,i}$, $i=0,\dots,m$ the flow balance
(exiting minus entering) at location $i$, therefore $y_{t,i}$ is
positive if the node sends out more than it receives, negative
otherwise. Furthermore, we denote by $w_{t,i}$ an auxiliary variable
to encode the cost incurred at the end of time period $t-1$ at
location $i$, i.e., the maximum between the shortage and disposal
costs.

This problem can be formulated as a multistage stochastic linear
program as follows:
\begin{alignat}{3}
  \min \; && \displaystyle \mathbb{E}_{\vD} \left[\sum_{t=1}^{T} \left(\sum_{(i,j) \in E} c_{ij} f_{t,ij}(\vD_t) + \sum_{i=1}^m w_{t+1,i}(\vD_t) \right) + h_0 I_{T+1}(\vD_{T}) \right] \label{eq:lpobj} \\
  t=1,\dots,T && - I_t(\vD_{t-1}) + I_{t+1}(\vD_t) + y_{t,0}(\vD_t) = \vD_{t,0} \label{eq:lpinventory} \\
  t=1,\dots,T, i=0,\dots,m && \sum_{(i,j) \in \delta^+(i)} f_{t,ij}(\vD_t) - \sum_{(j,i) \in \delta^-(i)} a_{ji} f_{t,ji}(\vD_t) = y_{t,i}(\vD_t) \label{eq:lpbalance} \\
  t=1,\dots,T && y_{t,0}(\vD_t) \leq I_t(\vD_{t-1}) \label{eq:lpnonnegativeinventory} \\
  t=2,\dots,T+1, i=1,\dots,m && -h_i y_{t-1,i}(\vD_{t-1}) - h_i \vD_{t-1,i} \le w_{t,i}(\vD_t) \label{eq:lpholding} \\
  t=2,\dots,T+1, i=1,\dots,m && s_i y_{t-1,i}(\vD_{t-1}) + s_i \vD_{t-1,i} \le w_{t,i}(\vD_t) \label{eq:lpshortage} \\
  t=1,\dots,T, (i,j) \in E && f_{t,ij}(\vD_t) \ge 0 \label{eq:lpnonneg1} \\
  t=1,\dots,T, i=1,\dots,m && y_{t,i}(\vD_t) \le 0 \label{eq:lpnonneg2} \\
  && I_1(\vD_0) = \bar{I}_1. \label{eq:lpinitcond}
\end{alignat}
In the above formulation, $\bar{I}_1$ is the initial inventory. The
inventory level $I_{t+1}(\vD_{t})$ depends on the random variable at
time period $t$, as indicated in constraint \eqref{eq:lpinventory}:
$I_{t+1}$ is automatically determined after all decisions at time
period $t$ are taken and $\vD_{t}$ is revealed. Similarly, the
auxiliary variable $w_{t,i}(\vD_t)$ encodes the cost at location $i$
at the end of the previous time period, because such cost can only be
computed after the demand at a given time period is revealed. $\vD_0$
only serves the purpose of keeping this notation consistent for
$I_1(\vD_0)$, and is otherwise not used in the model. Constraint
\eqref{eq:lpnonnegativeinventory} implies $I_t(\vD_{t-1}) \ge 0$.

If the $\vD_{t,i}$ are all discrete r.v.s, this problem admits a
deterministic equivalent formulation by introducing a copy of the
decision variables for each sample path, but the number of sample
paths is exponential in $T$. We now show that a DP formulation of this
problem satisfies Conditions \ref{con:sets}-\ref{con:functions}.

Let $I_t$ be the state variable. The action vector is defined by
$\vx_t = \left((y_{t,i})_{i=1,\dots,m}, (f_{t,ij})_{(i,j) \in
  E}\right)$; for simplicity, we use $y_{t,i}, f_{t,ij}$ to refer to
the variables in the action vector $\vx_t$ corresponding to the
variables with the same name in
\eqref{eq:lpobj}-\eqref{eq:lpinitcond}.  The transition function is
$f_t(I_t, \vx_t, \vD_t) = I_t - y_{t,0} + \vD_{t,0}$. The action space
is $\A_t(I_t) = \{ \vx_t : \sum_{(i,j) \in \delta^+(i)} f_{t,ij} -
\sum_{(j,i) \in \delta^-(i)} a_{ji} f_{t,ji} = y_{t,i} \; \forall
i=0,\dots,m; y_{t,0} \le I_t; f_{t,ij} \geq 0 \; \forall (i,j) \in E;
y_{t,i} \le 0 \; \forall i=1,\dots,m\}$. The cost functions are:
$g^I_t(I_t, \vx) = \sum_{(i,j) \in E} c_{ij} f_{t,ij}$, $g^D_t(I_t,
\vx, \vD_t) = h_i (-y_{t,i} - \vD_{t,i})^+ + s_i (y_{t,i} +
\vD_{t,i})^+$, $g_{T+1}(I_{T+1}) = h_0 I_{T+1}$. It is easy to verify
that with the cost functions defined as given, we obtain an equivalent
description to problem \eqref{eq:lpobj}-\eqref{eq:lpinitcond}, and
Conditions \ref{con:sets}-\ref{con:functions}(i) are satisfied.


\section{Hardness results}
\label{s:hardness}
We now discuss the necessity of our assumptions. Regarding Condition
\ref{con:sets}, it is an open question under what additional assumptions the restriction that
the state spaces $\S_t$ are intervals on the real line can be lifted
under our DP model. However, the next subsections show
that constructing an approximation of the value function
is difficult already whenever $\S_t$ are two-dimensional boxes (Cor.~\ref{cor:2ddphard}).
We show this by first proving the existence of a class of two-dimensional piecewise linear convex functions that are hard to approximate.

Regarding Condition \ref{con:implicit}, the conditions on truncated
r.v.s seem difficult to relax under the oracle model for the CDFs:
\cite[Prop.~2.6]{nannifptascdpfull} shows that a continuous function
over a compact interval may not admit an efficient approximation
unless its values at the endpoints of the domain are bounded away from
zero. This result holds even if the function is monotone and Lipschitz
continuous. Hence, approximating the CDF of a continuous r.v.\ that
does not have positive probability mass at the endpoints seems
difficult. (\cite{nannifptascdpfull} also shows that for bounded
Lipschitz continuous functions we can drop the strict nonnegativity
requirement if we allow both relative and absolute error at the same
time; however, in this paper we aim for an FPTAS in the usual sense.)
Regarding Condition \ref{con:functions}, the nonnegativity requirement
on the cost functions $g_t$ cannot be relaxed:
\cite{nannienergystorage} exhibits a DP that is \#P-hard to
approximate to any constant factor, and their problem can be cast into
our framework if we allow the cost functions to be of either sign. In
other words, the \#P-hard DP in \cite{nannienergystorage} satisfies
Conditions \ref{con:sets}-\ref{con:functions} except for the
nonnegativity requirement on $g_t$. It is an open question whether the
restriction $g_{T+1}^{\min} > 0$ (or $\tilde g_{T+1}^{\min} > 0$ in
the setting of Condition~\ref{con:functions}(ii)) can be relaxed to
$g_{T+1}^{\min} \ge 0$ (to $\tilde g_{T+1}^{\min} \ge 0$ in the
setting of Condition~\ref{con:functions}(ii)). The difficulty stemming
from $g_{T+1}^{\min} = 0$ is that we would have to keep track of the
exact location of the zeroes of the value function at each stage.

\subsection{Hardness of approximation of multivariate convex functions}
\label{sec:multihard}
The DP model discussed in this paper naturally yields piecewise linear
convex value functions. To extend the framework to a multidimensional
state space setting we would have to efficiently build approximations
of such functions. However, the next results show that some piecewise
linear convex functions do not admit an efficient approximation. We
first show that there is an exponential number of two-dimensional
piecewise linear convex functions with distinct values at integer
points, and use this to prove that approximations of such functions
require space exponential in the size of the domain.

\begin{theorem}
  \label{thm:bivarhard}
  For any nonnegative integer $A$, there exist $\Omega(2^{\sqrt{U}})$
  distinct nondecreasing piecewise linear convex functions $\varphi :
  [1, U]^2 \to [A, A+U]$ that attain the value $A$ at distinct sets of
  points in $[1,\dots,U]^2$, and at least $A + 1$ at the other sets of
  integer points. These functions are described as the pointwise
  maximum of $O(\sqrt{U})$ hyperplanes,
\end{theorem}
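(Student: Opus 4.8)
The plan is to construct the required family of functions explicitly as the pointwise maximum of a collection of affine functions (hyperplanes in $\R^3$, since the domain is two-dimensional), one hyperplane for each ``segment'' of a piecewise linear convex profile, and to encode a binary string into the choice of slopes so that distinct strings yield functions with distinct value-$A$ level sets. Concretely, I would partition the ``diagonal band'' of $[1,U]^2$ into roughly $\sqrt{U}$ consecutive anti-diagonal strips and, for each strip, make a binary choice that tilts the corresponding supporting hyperplane one of two ways. Each such choice is independent, giving $2^{\Theta(\sqrt{U})}$ functions. The nondecreasing requirement is handled by taking all the relevant slopes nonnegative; the convexity requirement is automatic from taking a pointwise maximum of affine functions; the codomain bound $[A, A+U]$ is enforced by keeping every hyperplane's value on $[1,U]^2$ within $[A, A+U]$, which is easy if the slopes are $O(1/\sqrt{U})$ and there are $O(\sqrt{U})$ of them — the total ``rise'' across the domain is then $O(1)\cdot$(a few) but we scale the increments so the cumulative rise is at most $U$. (In fact a cleaner bookkeeping: the breakpoints are at distances $\Theta(\sqrt U)$ apart and each piece has slope at most $1$, so the rise over the whole diagonal is at most $U$.)

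First I would set up coordinates: write a point of the domain as $(x_1, x_2)$ and consider the function along the direction $x_1 + x_2$. Fix an integer $k = \Theta(\sqrt{U})$ and breakpoints $0 = t_0 < t_1 < \dots < t_k$ with $t_{j+1} - t_j \ge 2$, chosen so that $t_k \le U$; this guarantees there are integer lattice points strictly between consecutive breakpoints, which is what we need to witness distinct level sets. For a binary string $s \in \{0,1\}^k$, define hyperplanes $H_j^{s}$ whose restriction to the line $\{x_1 + x_2 = r\}$ gives the value $A + \sum_{i < j} c_i + (\text{slope}_j^{s})(r - t_j)$, where the slope on piece $j$ is one of two distinct nonnegative values depending on $s_j$, and the offsets $c_i$ are chosen for continuity of the lower envelope. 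Set $\varphi_s := \max_j H_j^{s}$ restricted to $[1,U]^2$. Convexity and the piecewise-linear structure with $O(\sqrt U)$ pieces are then immediate, and monotonicity follows because along any coordinate-increasing path the value is nondecreasing (all slopes nonnegative, and moving in either coordinate increases $x_1 + x_2$).

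The key step — and the main obstacle — is showing that distinct strings $s \ne s'$ give functions that \emph{attain the value $A$ at distinct sets of integer points}, not merely that they are distinct functions. The natural fix is to arrange the construction so that $\varphi_s(x_1,x_2) = A$ exactly on a ``staircase'' region whose outer boundary is a lattice path determined by $s$: on anti-diagonal strip $j$, the width of the value-$A$ region (i.e. how far from the corner $(1,1)$ the function stays flat at its minimum) is dictated by $s_j$, because a steeper first active piece pulls the region in and a shallower one lets it extend. Since $t_{j+1} - t_j \ge 2$, flipping a single bit moves this boundary by at least one full lattice step, so there is an integer point that lies in the value-$A$ set for one string and not the other. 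Making this precise requires being careful that the hyperplane for piece $j$ is the \emph{active} one (achieves the max) precisely on strip $j$ — i.e. that the lower envelope really does switch pieces at the $t_j$ — which I would ensure by choosing the two candidate slopes on successive pieces to be strictly increasing in a way that dominates the two-way choices, so the ordering of active pieces is never in doubt. Finally I would verify the codomain: the maximum value is attained at the corner $(U,U)$ and equals $A + \sum_{i} (\text{slope}_i)(t_{i+1}-t_i) \le A + U$ by the choice of slopes $\le 1$ and $\sum (t_{i+1}-t_i) = t_k \le U$, while the minimum is $A$ at $(1,1)$. This gives all four asserted properties: $\Omega(2^{\sqrt U})$ functions, nondecreasing, piecewise linear convex with $O(\sqrt U)$ hyperplanes, codomain $[A, A+U]$, value $A$ on distinct integer sets and $\ge A+1$ elsewhere (the last because away from the value-$A$ staircase the function has strictly increased by at least one full slope-step, and we may scale so that step is $\ge 1$, or simply round — but rounding breaks convexity, so instead I would pick the smaller of the two slopes on piece $0$ to be $\ge 1$ divided by the strip width, making the increment across one strip at least $1$).
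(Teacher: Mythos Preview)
Your construction has a fundamental gap. As you describe it, each hyperplane $H_j^s$ is specified by its value on the lines $\{x_1 + x_2 = r\}$ via the formula $A + \sum_{i<j} c_i + (\text{slope}_j^s)(r - t_j)$; an affine function that is constant on every such line is a function of $x_1 + x_2$ alone, and hence so is the pointwise maximum $\varphi_s$. But then the level set $\{\varphi_s = A\}$ is necessarily of the form $\{(x_1,x_2) : x_1 + x_2 \le c_s\}$ for a single threshold $c_s$ --- a triangle, not a staircase. On the integer grid $[1,\dots,U]^2$ there are only $O(U)$ such triangular regions, so you cannot obtain $\Omega(2^{\sqrt U})$ distinct value-$A$ sets this way, regardless of how the slopes are chosen on the strips. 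The binary choices $s_j$ affect how fast $\varphi_s$ rises \emph{after} leaving the minimum, but not \emph{where} it first leaves it: that location is fixed by the first breakpoint with positive slope. Your own description anticipates a ``staircase region whose outer boundary is a lattice path determined by $s$,'' but the construction as written cannot produce one.

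The paper avoids this by breaking the symmetry between the two coordinates. It indexes the family by integer partitions $r_1 \le \dots \le r_U$ of $U$ (of which there are $\Omega(2^{\sqrt U})$), and builds each function from a flat floor at $A$ together with hyperplanes having slope $1$ in the $x$-direction but slope $r_k$ in the $y$-direction. Because the $y$-slope varies while the $x$-slope is fixed, the minimum set $\{(x,y) : x \le U - \sum_{i=1}^y r_i\}$ is a genuine staircase whose profile encodes the partition. The number of hyperplanes needed is one plus the number of \emph{distinct} values among $r_1,\dots,r_U$, which is $O(\sqrt U)$ for any partition of $U$; and since all slopes are integers, the value at any non-minimum integer point is automatically $\ge A+1$, sidestepping the rounding difficulty you flag at the end. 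To repair your approach you would have to give the hyperplanes distinct slopes in the two coordinate directions, at which point you are essentially rebuilding the paper's argument.
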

\begin{proof}
  Let
  \begin{equation*}
    \Phi := \left\{\varphi_{r_1,\dots,r_U} : [1, U]^2 \to [A, A+U] \middle| r_1,\dots,r_U \in [0, U]; \sum_{i=1}^U r_i = U, r_1 \le r_2 \le \dots \le r_U \right\}
  \end{equation*}
  be a family of bivariate discrete functions defined as follows: for $x,y \in
  [1,\dots,U]$,
  \begin{equation}
    \label{eq:varphidef}
    \varphi_{r_1,\dots,r_U}(x,y) =
    \begin{cases}
      A & \text{if } x = 1,\dots,U - \sum_{i=1}^y r_i \\
      A + x - U + \sum_{i=1}^y r_i & \text{otherwise}.
    \end{cases}
  \end{equation}
  \cite[Prop.~3.4]{H15} shows (taking $A = 0$) that any pair of
  elements of $\Phi$ attains the value $A$ at different points on the
  integer grid. Furthermore, it shows that the number of such
  functions is equal to the number of partitions of the integer $U$,
  which is bounded below by $\frac{H}{U}e^{2\sqrt{U}} \ge H
  2^{\sqrt{U}}$. In Fig.~\ref{fig:varphiex} we illustrate the shape of
  $\varphi_{r_1,\dots,r_U}$ by plotting its values for two choices of
  $r_1,\dots,r_U$.

  \begin{figure}[tb]
    \centering
    \includegraphics[width=0.4\textwidth]{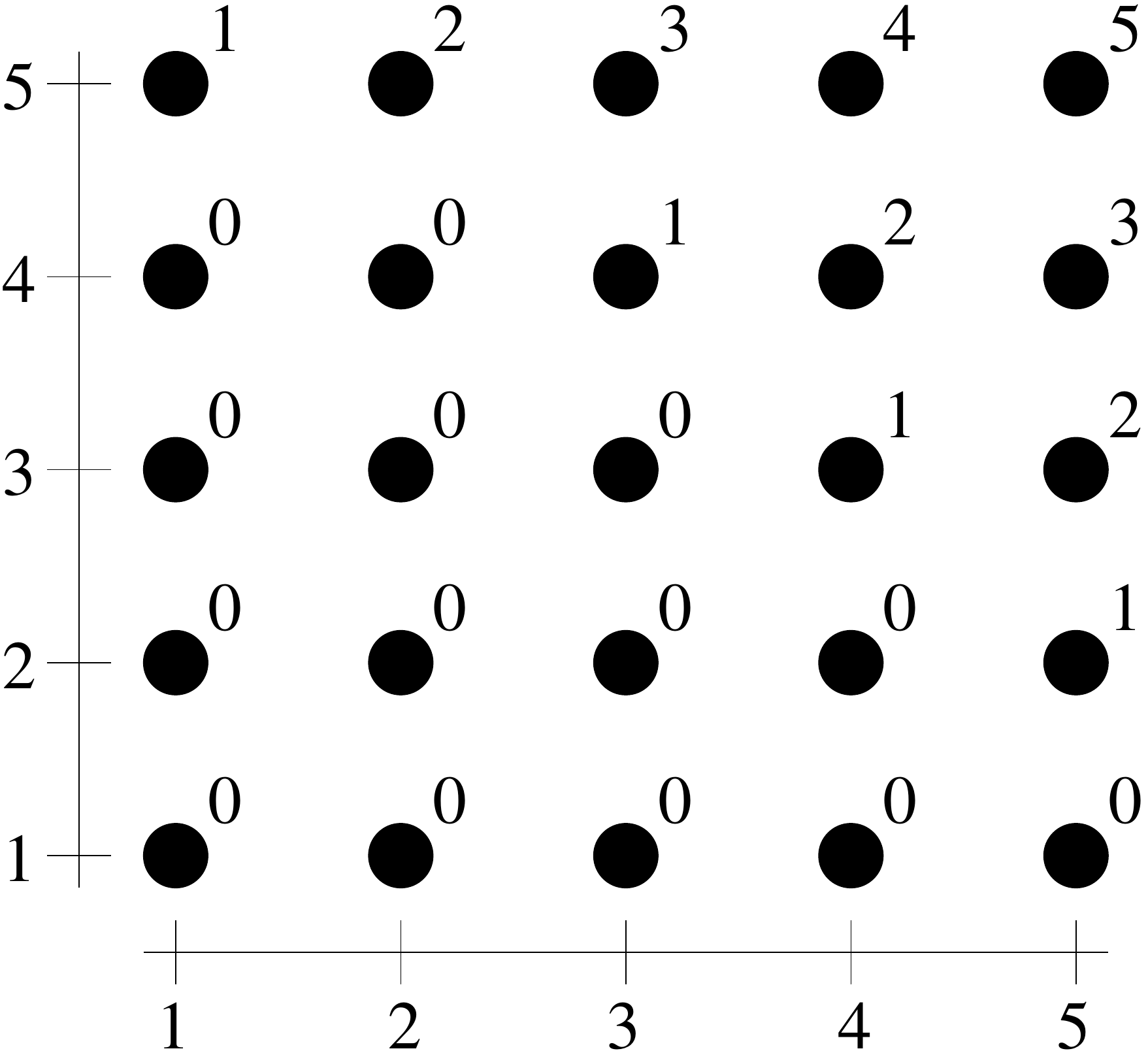}\hspace{5em}
    \includegraphics[width=0.4\textwidth]{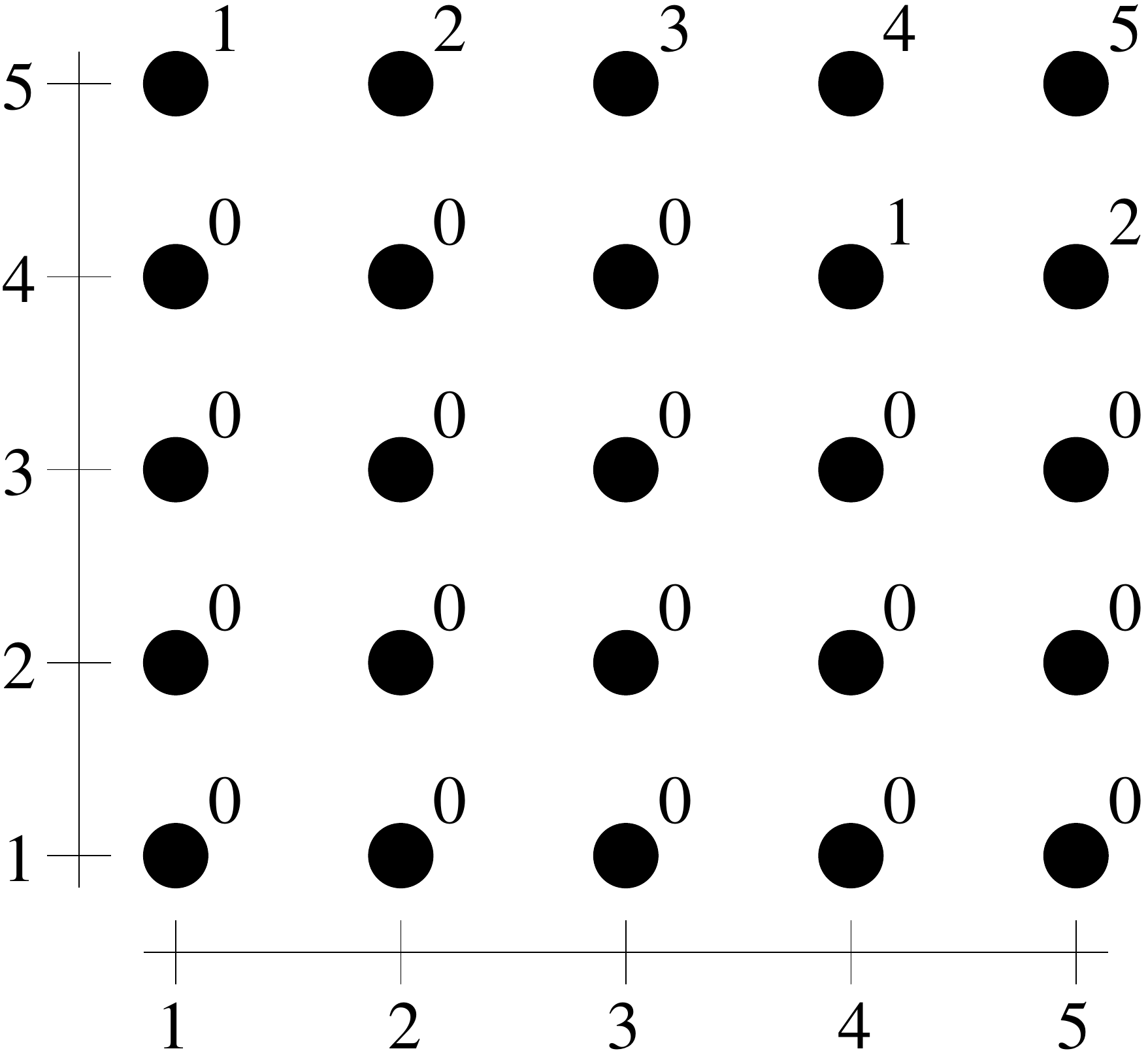}
    \caption{Values of the function $\varphi_{0,1,1,1,2}$ (left) and
      $\varphi_{0,0,0,2,3}$ (right) on the integer grid
      $[1,\dots,5]^2$, with $A = 0$. At each integer point in the
      plane, we indicate the corresponding function value next to it.}
    \label{fig:varphiex}
  \end{figure}

  Let $\text{diff}(r_1,\dots,r_U)$ be the number of distinct values
  among $r_1,\dots,r_U$. We show next that each function
  $\varphi_{r_1,\dots,r_U}$ can be described as the pointwise maximum
  of at most $\text{diff}(r_1,\dots,r_U) + 1$ hyperplanes, in a such a
  way that the value of the pointwise maximum on the integer grid
  matches that of $\varphi_{r_1,\dots,r_U}$.
  Note that the slopes of $\varphi_{r_1,\dots,r_U}$ satisfy the following relationship:
  \begin{align}
    \label{eq:xslope}
    \varphi_{r_1,\dots,r_U}(x+1, y) - \varphi_{r_1,\dots,r_U}(x, y) &=
    \begin{cases} 0 & \text{if } x < U - \sum_{i=1}^y r_i \\
      1 & \text{otherwise,}
    \end{cases} \\
    \label{eq:yslope}
    \varphi_{r_1,\dots,r_U}(x, y+1) - \varphi_{r_1,\dots,r_U}(x, y) &=
    \begin{cases} 0 & \text{if } x \le U - \sum_{i=1}^{y+1} r_i \\
      r_{y+1} + \min\{x - U + \sum_{i=1}^y r_i, 0\} & \text{otherwise.}
    \end{cases}
  \end{align}
  Furthermore, $\varphi_{r_1,\dots,r_U}(x,U) = A + x$ for all valid
  choices of $r_1,\dots,r_U$.

  {\em Claim.} For $1 \le k \le U - 1$, there exists $S_k := \{(a, b,
  c) \in \N\}$, with $|S_k| \le \text{diff}(r_k,\dots,r_U) + 1$, such that
  $\varphi_{r_1,\dots,r_U}(x, y) = \max_{(a,b,c) \in S_k}\{ax +by +
  c\}$ for all $1 \le x \le U, y \ge k$.

  {\em Proof of the claim.} By backward induction. We start with $k =
  U-1$. Consider the set $S_k := \{(0, 0, A), (1, r_U, -r_UU)\}$,
  which corresponds to two hyperplanes given in the form $ax + by +
  c$: one ``flat'' hyperplane at the value $A$, and one hyperplane
  with slope $1$ along the $x$ axis, $r_U$ along the $y$
  axis. Consider the latter hyperplane, which we call $H$. By the
  choice of $c = -r_UU$, $H$ interpolates $\varphi_{r_1,\dots,r_U}$ at
  the points $(x,U)$ for all $1 \le x \le U$. Furthermore, by
  \eqref{eq:yslope}, it interpolates $\varphi_{r_1,\dots,r_U}$ at the
  points $(x, U-1)$ for which $x \ge U - \sum_{i=1}^{k} r_i$, because
  for such points its slope on the $y$ direction matches the slope of
  $\varphi_{r_1,\dots,r_U}$. It remains to analyze the points $(x,
  U-1)$ with $x < U - \sum_{i=1}^{k} r_i$. Let $P$ be the set of such
  points. For $(x,y) \in P$, $\varphi_{r_1,\dots,r_U}(x,y)$ has value $A$
  by definition, see \eqref{eq:varphidef}. Furthermore,
  $\varphi_{r_1,\dots,r_U}(x,y+1) = \varphi_{r_1,\dots,r_U}(x,U) = x +
  A < U - \sum_{i=1}^{k} r_i + A = r_U + A$. Then, since $H$
  interpolates $\varphi_{r_1,\dots,r_U}$ at $(x,y+1)$, and its slope
  is $r_U$ in the $y$ direction, we have that the value of $H$ at
  $(x,y)$ is equal to $x + A - r_U < A$. But then the maximum of the
  two hyperplanes in $S_k$ at $(x,y) \in P$ is given by $A$ (i.e., it
  is given by the ``flat'' hyperplane), which is exactly the value of
  $\varphi_{r_1,\dots,r_U}(x,y)$ by equation
  \eqref{eq:varphidef}. This concludes the initial step of the
  induction.

  We now need to prove the induction step $k$, assuming the claim is
  true for $k+1$. Suppose first that $r_{k} = r_{k+1}$. In this
  case, a similar argument to the one above shows that no further
  hyperplane is needed in $S_k$, because the slope of
  $\varphi_{r_1,\dots,r_U}$ along the $y$ axis for points with $x \ge
  U - \sum_{i=1}^{k} r_i$ and $y=k$ is the same as for $y=k+1$, and
  for the remaining points, the ``flat'' hyperplane suffices. We need
  to analyze the case $r_{k} < r_{k+1}$. In this case, let $S_k =
  S_{k+1} \cup \{(1, r_{k+1}, \varphi_{r_1,\dots,r_U}(U,k+1) - U -
  r_{k+1}(k+1))\}$. Let $H$ be the hyperplane defined by the
  coefficients $(1, r_{k+1}, \varphi_{r_1,\dots,r_U}(U,k+1) - U -
  r_{k+1}(k+1))$; it is easy to verify that $H$ interpolates
  $\varphi_{r_1,\dots,r_U}$ at $(x,k+1)$ for all $1 \le x \le U$ by
  construction. The argument now proceeds as before: the hyperplane
  interpolates $\varphi_{r_1,\dots,r_U}$ at points $(x, k)$ for which
  $x \ge U - \sum_{i=1}^{k} r_i$, because for such points its slope on
  the $y$ direction matches the slope $r_{k+1}$ of
  $\varphi_{r_1,\dots,r_U}$. The remaining points with $y=k$ have
  value $A$, therefore the ``flat'' hyperplane interpolates them and
  $H$ has value $< A$ at them. Furthermore, because the $r_{k+1} <
  r_{k+2} \le \dots \le r_U$, the value of $H$ at all points $(x,y)$
  with $y > k$ is smaller than $\varphi_{r_1,\dots,r_U}(x,y)$. This
  concludes the inductive claim.

  The above claim shows that we can construct a set $S_1$ containing
  hyperplanes such that for every $(x,y) \in [1,\dots,U]^2$, the
  pointwise maximum of the hyperplanes at $(x,y)$ is equal to
  $\varphi_{r_1,\dots,r_U}(x,y)$. Note that all hyperplanes except the
  ``flat'' hyperplane have positive integer slopes. Thus, the
  pointwise maximum of the hyperplanes at integer points is
  integer-valued. This implies that the value of the pointwise maximum
  of such hyperplanes on the integer grid is either $A$ or $\ge A+1$.
  To conclude the proof, we need a bound on $|S_1| =
  \text{diff}(r_1,\dots,r_U) + 1$. Recall that $\sum_{i=1}^U r_i = U,
  r_1 \le r_2 \le \dots \le r_U$. The largest possible number of
  distinct values of $r_i$ is attained when all $r_i$ that increase,
  increase by exactly one. This implies that it is the largest number
  $d$ such that $\sum_{j=1}^d j = \frac{d}{2}(d+1)\le U$. Then,
  $\text{diff}(r_1,\dots,r_U) = O(\sqrt{U})$.
\end{proof}

\begin{corollary}
  \label{cor:2dconvexhard}
  Let $A$ and $U$ be nonnegative integers, and let $\Phi$ be a family
  of nondecreasing convex functions $\varphi: [1, U]^2 \to [A, A+U]$
  that are described as a pointwise maximum of hyperplanes. Then, any
  approximation of $\varphi \in \Phi$ that attains relative error less
  than $\frac{A+1}{A}$, or absolute error less than $1$, requires
  $\Omega(\sqrt{U})$ space regardless of the scheme used to represent
  the function.
\end{corollary}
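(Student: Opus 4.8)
The plan is to derive Corollary~\ref{cor:2dconvexhard} as a counting/pigeonhole consequence of Theorem~\ref{thm:bivarhard}. Theorem~\ref{thm:bivarhard} gives us a family $\Phi$ of $\Omega(2^{\sqrt U})$ nondecreasing piecewise linear convex functions $\varphi:[1,U]^2\to[A,A+U]$, each expressed as the pointwise maximum of $O(\sqrt U)$ hyperplanes, with the crucial property that each member of the family attains its minimum value $A$ on a \emph{distinct} subset of the integer grid $[1,\dots,U]^2$ and takes a value $\ge A+1$ at the remaining integer points. The key observation is that any function $\tilde\varphi$ that approximates some $\varphi\in\Phi$ with relative error strictly less than $\frac{A+1}{A}$ (equivalently, with absolute error strictly less than $1$ at the points where $\varphi$ equals $A$, since the gap between $A$ and the next attainable value is exactly $1$) must let us recover exactly the set $\{(x,y)\in[1,\dots,U]^2 : \varphi(x,y)=A\}$: at a point where $\varphi=A$ we have $\tilde\varphi < A\cdot\frac{A+1}{A}=A+1$, while at a point where $\varphi\ge A+1$ we have $\tilde\varphi\ge A+1$. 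Hence the value of $\tilde\varphi$ at the integer points, thresholded at $A+1$, uniquely identifies which member of $\Phi$ was approximated.

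First I would make the above separation argument precise, handling both the relative-error case and the absolute-error case uniformly: in either case, for every integer point $(x,y)$, the approximation value $\tilde\varphi(x,y)$ lies in $[A,A+1)$ iff $\varphi(x,y)=A$, and lies in $[A+1,\infty)$ otherwise. Next I would argue that any data structure (``scheme'') representing $\tilde\varphi$ must, by querying it appropriately, allow an observer to reconstruct the bitstring indexed by $[1,\dots,U]^2$ that records for each integer point whether $\varphi$ equals $A$ there. Since there are $\Omega(2^{\sqrt U})$ distinct such bitstrings across the family $\Phi$, an information-theoretic pigeonhole argument shows that the representation must use at least $\log_2\Omega(2^{\sqrt U})=\Omega(\sqrt U)$ bits in the worst case, i.e.\ $\Omega(\sqrt U)$ space. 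This is exactly the claim, and it is independent of the representation scheme because the argument only uses the input/output behavior of the approximation.

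The one point that needs a little care — and is the only real obstacle — is making the phrase ``regardless of the scheme used to represent the function'' rigorous. The cleanest way is to adopt the same convention as in \cite{H15}: a representation scheme is any encoding together with a decoding procedure (of arbitrary computational power) that, given the encoding and a query point, returns the approximation's value; the space is the bit-length of the encoding. Under this convention the counting argument is immediate: the map from $\Phi$ to encodings would have to be injective (two functions in $\Phi$ with distinct zero-sets of $\varphi - A$ cannot share an encoding, since the decoded approximation determines the zero-set), so some encoding has length $\ge\lceil\log_2|\Phi|\rceil=\Omega(\sqrt U)$. I would also note in passing that since $U\le 2^{\lceil\log U\rceil}$, the bound $\Omega(\sqrt U)$ is $2^{\Omega(\sqrt{\log(\text{domain size})})}$, i.e.\ superpolynomial in the binary description size of the domain; this is the form of the statement used to conclude Theorem~\ref{thm:bivarexthard} and Corollary~\ref{cor:2ddphard}. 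No nontrivial calculation is required beyond the threshold argument and the $\log$ of the family size.
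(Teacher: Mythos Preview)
Your proposal is correct and follows essentially the same approach as the paper: invoke Theorem~\ref{thm:bivarhard} to obtain $\Omega(2^{\sqrt U})$ functions distinguishable by their $A$-level sets on the integer grid, observe that any approximation meeting the stated error bound recovers this level set via a threshold at $A+1$, and conclude by a counting argument that some encoding must use $\Omega(\sqrt U)$ bits. Your treatment is in fact more careful than the paper's in spelling out the representation-scheme convention and the threshold separation; the paper's proof is the same argument in about four sentences.
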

\begin{proof}
  By Thm.~\ref{thm:bivarhard}, there are $\Omega(2^{\sqrt{U}})$
  different nondecreasing convex functions that attain the value $A$
  at distinct points in $[1,\dots,U]^2$, and value at least $A+1$ at
  the remaining integer points. Any approximation of $\varphi$ that
  attains relative error less than $\frac{A+1}{A}$, or absolute error
  less than $1$, characterizes exactly the points at which $\varphi$
  attains the value $A$. Thus, the approximation must distinguish
  between all functions in the family $\Phi$ described in the proof of
  Thm.~\ref{thm:bivarhard}. This implies that every function in $\Phi$
  must have a different description, regardless of the scheme used to
  describe a function. Since $|\Phi| \in \Omega(2^{\sqrt{U}})$, it
  follows that any approximation of $\varphi$ must take
  $\Omega(\sqrt{U})$ space.
\end{proof}
As mentioned in Sect.~\ref{s:literature}, there exist several
special classes of multivariate discrete convex functions; the
inclusion relationships among these classes are depicted in
Figure~\ref{fig:murota}. Corollary \ref{cor:2dconvexhard} has
implications on the approximability of convex-extensible functions,
which we define next. Let $f:\Z^d \ra \R^+$ be a function defined over
the $d$-dimensional lattice of integer points. The convex closure of
$f$ is defined to be a function $\bar f : \R^n \ra \R^+$ given by
\begin{equation*}
\bar f(x) = \sup_{p \in \R^n, \alpha \in \R} \{ \langle p,x \rangle +\alpha \; | \;  \langle p,y \rangle +\alpha \leq f(y), \; \forall y \in \Z^n\} \quad \forall x \in \R^n.
\end{equation*}
If this function $\bar f$ coincides with $f$ on integer points, i.e.,
if $\bar f(x)=f(x) \quad \forall x \in \Z^n$, we say that $f$ is {\em
  convex extensible} and call $\bar f$ the {\em convex extension of}
$f$ \cite[Chap. 3.4]{Mu03}.

The starting point for the proof of Theorem~\ref{thm:bivarhard} is the
family $\Phi$ of bivariate discrete convex functions (in the sense of
Miller \cite{Mi71}) over domain $[1,\ldots,U]^2$ as defined in the
proof of \cite[Prop. 3.4]{H15}. For each function $f \in \Phi$ we
construct a set $S_1$ of $O(\sqrt{U})$ hyperplanes and define a
(continuous) convex function $\bar f:[1,U] \ra \R^+$ as the pointwise
maximum of the hyperplanes in $S_1$. We then show that $f$ coincides
with $\bar f$ on the integer square $[1,\ldots,U]^2$ and therefore the
family $\Phi$ consists of discrete functions that are
convex-extensible (and are also convex in the sense of Miller at the
same time). Using the fact that $|\Phi|=\Omega(\sqrt{U})$
\cite[Prop. 3.4]{H15}, and following similar arguments as in the proof
of Corollary~\ref{cor:2dconvexhard}, Theorem~\ref{thm:bivarexthard}
follows. We remark that the functions in \cite[Prop. 3.4]{H15} are
discrete convex as opposed to convex in the ordinary sense, and the
two concepts do not necessarily coincide; this paper settles the
status of discrete convex-extensible and continuous (ordinary)
piecewise linear convex functions, showing their inapproximability in
space polynomial in the (binary) size of the domain.


\subsection{Hardness of multidimensional DP}
\label{s:oraclehardness}


In this section we make some remarks about the approximability of
\eqref{eq:dp} when functions $g_{T+1}, g_t^D$ are described by a value
oracle and are not necessarily piecewise linear, as in
Condition~\ref{con:functions}(ii). We focus on these two components of
the cost function because the remaining terms $g^I_t$ appear in the
minimization over an $\ell$-dimensional polyhedron; achieving an FPTAS
for the case with $g^I_t$ described by a value oracle would therefore
require many additional mathematical tools, and is outside the scope
of this paper.

As a consequence of Cor.~\ref{cor:2dconvexhard}, a two-dimensional
generalization of problem \eqref{eq:dp} is hard to approximate under
the oracle model. We formalize this below. Consider the problem:
\begin{align}
  z^\ast(\vec{I}_1) = \min_{\pi_1,\dots,\pi_T}
  \mathbb{E} \left[\sum_{t = 1}^T g_t(\vec{I}_t, \pi_t(\vec{I}_t), \vD_t) +
    g_{T+1}(\vec{I}_{T+1}) \right], \tag{MD-DP} \label{eq:vecdp}\\
  \text{subject to: } \vec{I}_{t+1} = \vec{f}_t(\vec{I}_t, \pi_t(\vec{I}_t), \vD_t), \qquad
  \forall t=1,\dots,T, \notag
\end{align}
where $\vec{I}_t \in \R^d$, $\vec{f}_t$ is a vector-valued function,
each component of $\vec{f}_t$ is affine, $g_{T+1}$ is strictly
positive convex, and $g_t$ can be expressed as
$g_t(\vI_t,\vx_t,\vD_t)=g_t^I(\vI_t,\vx_t)+g_t^D(
\vec{f}^g_t(\vI_{t,i},\vx_t,\vD_t))$, where $g_t^I$ and $g_t^D(\cdot)$ are
nonnegative convex, and $\vec{f}^g_{t}$ has $d$ components each of
which is affine. Then, if $g_{T+1}$ or $g_t^D$ are described by value
oracles, Cor.~\ref{cor:2dconvexhard} immediately implies the
following.

\begin{corollary}
  \label{cor:2ddphard}
  There does not exist a PTAS for continuous dynamic programs of the
  form \eqref{eq:vecdp} with cost functions described by a value
  oracle, even if $d=2, T=0$, $g_1$ is nonincreasing, piecewise
  linear convex and bounded from below by a given positive number.
\end{corollary}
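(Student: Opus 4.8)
The plan is to obtain the statement as an immediate consequence of Corollary~\ref{cor:2dconvexhard}. First I would specialize \eqref{eq:vecdp} by setting $T=0$ and $d=2$. Then the sum $\sum_{t=1}^{T}$ is empty and there are no transition equations, so $\vec{I}_{T+1}=\vec{I}_1$ and $z^\ast(\vec{I}_1)=\mathbb{E}[g_{T+1}(\vec{I}_{T+1})]=g_1(\vec{I}_1)$ with no randomness involved. Thus, solving this DP (exactly or approximately) is the same as representing (exactly or approximately) the terminal cost function $g_1:[1,U]^2\to\R^+$ over its domain; and in the oracle model $g_1$ is accessed only through a value oracle, so the $\Theta(\sqrt{U})$-hyperplane description of $g_1$ is \emph{not} part of the input: the binary input size of the DP instance is $O(\log U)$ plus the remaining parameters, all of which are trivial when $T=0$.

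Next I would instantiate $g_1$ from the family $\Phi$ of Theorem~\ref{thm:bivarhard} with $A=1$, composed with the affine reflection $(x,y)\mapsto(U+1-x,\,U+1-y)$. Composing with this reflection keeps the function a pointwise maximum of the same number of hyperplanes (the linear parts are just transformed affinely), hence piecewise linear and convex; it keeps the range equal to $[1,\,U+1]$, so $g_1$ is bounded below by the positive constant $1$; and it turns the nondecreasing functions of Theorem~\ref{thm:bivarhard} into nonincreasing ones, matching the hypotheses of the statement. I would then fix the absolute constant $\epsilon:=\tfrac12$, so that $1+\epsilon<2=\tfrac{A+1}{A}$.

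Finally I would invoke the counting argument behind Corollary~\ref{cor:2dconvexhard}: there are $\Omega(2^{\sqrt{U}})$ distinct reflected functions in this family, any two of which must be distinguished by any approximation with relative error below $\tfrac{A+1}{A}=2$ (equivalently, any $(1+\tfrac12)$-approximation), so any such approximation occupies $\Omega(\sqrt{U})$ space. A PTAS, by definition, runs in time polynomial in the input size for the fixed $\epsilon=\tfrac12$, hence in $\mathrm{poly}(\log U)$ time and therefore in $\mathrm{poly}(\log U)$ space; for $U$ large enough this contradicts the $\Omega(\sqrt{U})$ lower bound, so no PTAS can exist. The proof has essentially no hard part; the only things needing care are the affine reflection (to pass from the nondecreasing family of Theorem~\ref{thm:bivarhard} to the nonincreasing $g_1$ required here) and the choice of $A=1$, which ensures that the ruled-out approximation ratio is an absolute constant strictly above $1$, so that the impossibility genuinely concerns a PTAS rather than a scheme whose accuracy is allowed to depend on the input.
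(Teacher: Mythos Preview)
Your proposal is correct and follows essentially the same approach as the paper, which simply states that Cor.~\ref{cor:2dconvexhard} ``immediately implies'' the result without further detail. Your write-up is in fact more careful than the paper's: you explicitly handle the mismatch between the \emph{nondecreasing} family of Theorem~\ref{thm:bivarhard}/Corollary~\ref{cor:2dconvexhard} and the \emph{nonincreasing} hypothesis in the statement via the affine reflection $(x,y)\mapsto(U{+}1{-}x,\,U{+}1{-}y)$, and you spell out the choice $A=1$, $\epsilon=\tfrac12$ to make the PTAS contradiction concrete.
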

By contrast, when $d=1$ this paper provides an FPTAS under the same
assumptions.

\subsection{Hardness of approximating the value function of two-stage stochastic linear programs}

In addition to DP, a widely used paradigm for optimization under
uncertainty is stochastic programming. Due its importance and
applicability, we find it worthy to report that
Thm.~\ref{thm:bivarhard} has implications on the hardness of
approximation of stochastic programs. The concept of value functions
is widely used in stochastic programming just as in DP. It is well
known that the value function of a stochastic linear program may have
a number of hyperplanes that is proportional to the number of values
in the support of the r.v.s. In this section we show that if there are
two variables linking consecutive stages, even an approximation of the
value function may require a large number of hyperplanes. In
particular, if the r.v.s are described in a compact form, i.e., as a
value oracle to their CDF, then such number is exponential in the
input size: this is formalized in the next theorem.  By contrast, if
there is only one linking variable, we can construct such an
approximation, with the approach discussed in the previous section.

\begin{theorem}
  \label{thm:2sslphard}
  There does not exist a PTAS to compute the optimal value function
  for the second stage of a two-stage stochastic linear program with a
  single discrete r.v.\ described via an oracle to its CDF, even if
  there are only two variables linking the first and second stage,
  and the value function is bounded from below by a given positive
  number.
\end{theorem}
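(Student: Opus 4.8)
The plan is to reduce from the hardness result for bivariate piecewise linear convex functions, Theorem~\ref{thm:bivarhard} and its consequence Corollary~\ref{cor:2dconvexhard}. Recall that the value function of the second stage of a two-stage stochastic LP, as a function of the vector of first-stage decision variables (or more precisely, of the right-hand side terms they contribute to the second stage), is a convex piecewise linear function; moreover if the r.v.\ takes finitely many values, this value function is an expectation $\sum_k p_k Q(\xi_k, \cdot)$ of second-stage value functions, one per scenario. The key observation is that, with two linking variables, we have enough freedom to realize essentially any member of the family $\Phi$ from the proof of Theorem~\ref{thm:bivarhard} as such a value function. So the first step is to set up a two-stage stochastic LP whose second-stage value function $Q(\vec{I}, D)$, viewed as a function of the two linking variables $\vec{I} = (I^1, I^2)$, can be made to coincide (on the integer grid $[1,\dots,U]^2$, and convex everywhere) with an arbitrarily chosen $\varphi_{r_1,\dots,r_U} \in \Phi$, by appropriately choosing the (polynomially many) data of the LP together with a single discrete r.v.\ $D$ described via an oracle to its CDF.

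Concretely, I would exploit the structure exposed in the proof of Theorem~\ref{thm:bivarhard}: each $\varphi_{r_1,\dots,r_U}$ is a pointwise maximum of $O(\sqrt{U})$ hyperplanes, but crucially the ``shape'' is governed by the $r_i$ and the partial sums $\sum_{i=1}^y r_i$, which is exactly the kind of quantity that naturally arises as a cumulative sum over the support of a discrete r.v.\ of size $U$. So I would encode the partition $(r_1,\dots,r_U)$ of $U$ into the support and CDF of $D$: letting $D$ take value $j$ with a probability chosen so that the cumulative distribution recovers the partial sums $\sum_{i=1}^j r_i / U$, the second-stage problem for a fixed scenario is a small LP (a few constraints, the number of which is independent of $U$) in which one linking variable plays the role of $x$ and the other the role of $y$, and whose parametric optimal value, after taking the expectation over $D$, reproduces $\varphi_{r_1,\dots,r_U}$. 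Because $D$ is given only through a value oracle to its CDF, the input size is $\mathrm{poly}(\log U)$, whereas by Corollary~\ref{cor:2dconvexhard} any representation that distinguishes the functions in $\Phi$ — equivalently, any approximation with relative error below $(A+1)/A$, which is achieved e.g.\ by any $(1+\epsilon)$-approximation once $A$ is chosen large relative to $1/\epsilon$ — needs $\Omega(\sqrt{U})$ space, hence $\Omega(\sqrt{U})$ hyperplanes, which is exponential in the input size.

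The remaining steps are: (i) shift the construction by a positive constant $A$ so that the value function is bounded below by a given positive number, matching the ``bounded from below by a given positive number'' hypothesis in the statement and making the relative-error version of Corollary~\ref{cor:2dconvexhard} applicable with threshold $(A+1)/A$, which can be forced below $1+\epsilon$ for any desired $\epsilon>0$ by taking $A \ge 1/\epsilon$; (ii) verify that the second-stage LP has exactly two variables linking the stages, by checking that only $I^1, I^2$ appear in both stages and all scenario-dependent recourse variables are internal to the second stage; (iii) conclude that a PTAS for computing this value function would in particular produce, in time polynomial in the (polylogarithmic) input size, an approximation whose support must distinguish the $\Omega(2^{\sqrt{U}})$ functions of $\Phi$, contradicting the space lower bound.

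The main obstacle I anticipate is step (i)–(ii) of the construction: engineering a \emph{fixed-size} (independent of $U$) second-stage LP whose parametric value function, after averaging over the $U$-point distribution, \emph{exactly} equals $\varphi_{r_1,\dots,r_U}$ on the grid and is convex everywhere. The natural attempt writes $Q(I^1,I^2,j)$ as something like $\max\{0, I^1 - U + (\text{partial sum up to } j)\}$ handled via recourse variables, so that $\mathbb{E}_D[Q] = \sum_j \Pr(D=j)\,\max\{0,\dots\}$; one must check that this telescopes to the closed form in \eqref{eq:varphidef}, in particular reproducing the dependence on $y=I^2$ through the cutoff $U - \sum_{i=1}^y r_i$, and that the resulting expectation is convex (it is, being an expectation of convex recourse functions) and matches the required pointwise-maximum-of-$O(\sqrt U)$-hyperplanes description so that the hardness from Corollary~\ref{cor:2dconvexhard} transfers verbatim. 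Once the dictionary between the partition $(r_i)$ and the oracle-described CDF of $D$ is pinned down, the counting and space-lower-bound argument is a direct repeat of the proof of Corollary~\ref{cor:2dconvexhard}.
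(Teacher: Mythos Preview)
Your overall strategy is right and matches the paper's: encode the partition $(r_1,\dots,r_U)$ into the distribution of a single oracle-described discrete r.v., build a two-stage stochastic LP whose second-stage value function equals $\varphi_{r_1,\dots,r_U}$, and invoke Corollary~\ref{cor:2dconvexhard}. The step you correctly flag as the obstacle is indeed the crux, and the ``natural attempt'' you sketch does not close it.

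Concretely, your per-scenario recourse value $Q(I^1,I^2,j)=\max\{0,\, I^1 - U + \sum_{i\le j} r_i\}$ has two problems. First, it does not depend on $I^2$ at all, so its expectation over $D$ is a univariate function of $I^1$ and cannot equal the bivariate $\varphi_{r_1,\dots,r_U}$; the ``cutoff $U-\sum_{i=1}^y r_i$'' you want to reproduce never enters. Second, the quantity $\sum_{i\le j} r_i$ is not available to a fixed-size per-scenario LP that only sees the realization $j$; feeding the partial sums in as LP data makes the input size $\Omega(U)$ rather than $O(\log U)$, and the space lower bound no longer beats the input size. More fundamentally, in the standard recourse form $\mathbb{E}_D[Q(x,y,D)]$ the expectation sits \emph{inside}, whereas $\varphi_{r_1,\dots,r_U}(x,y)-A=(x-U+\sum_{i\le y} r_i)^+$ has a $(\cdot)^+$ applied \emph{after} an expectation-like quantity; your ``telescoping'' hope runs into Jensen's inequality in the wrong direction.

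The paper resolves this by stepping outside the plain recourse model and allowing an \emph{expected-value constraint} in the second stage: with variables $w(D),z(D)$, the constraints are $w(D)\ge y-D$, $w(D)\ge 0$, $z(D)\ge U\,\mathbb{E}_D[w(D)]+x-U$, $z(D)\ge 0$, and the objective is $z(D)+A$. This forces $w(D)=(y-D)^+$, hence $U\,\mathbb{E}_D[w(D)]=\sum_i(\Delta_i-\Delta_{i-1})(y-b_i)^+$ when $D$ is supported on the $O(\sqrt{U})$ slope-breakpoints $b_i$ with $\Pr(D=b_i)=(\Delta_i-\Delta_{i-1})/U$; then $z=(U\,\mathbb{E}_D[(y-D)^+]+x-U)^+$ and one checks $Q(x,y)=\varphi_{r_1,\dots,r_U}(x,y)$. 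The LP has a constant number of constraints, $D$ is given only via a CDF oracle, so the input is $O(\log U)$, and Corollary~\ref{cor:2dconvexhard} finishes the argument exactly as you outline. The expected-value-constraint device is the missing idea in your proposal.
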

\begin{proof}
  We show how to construct a two-stage stochastic linear program such
  that, for any partitioning $r_1,\dots,r_U$ of a given integer $U$,
  the optimal second-stage value function $Q(x,y)$ is exactly the
  function $\varphi_{r_1,\dots,r_U}$ described in the proof of
  Thm.~\ref{thm:bivarhard}. At the same time, the input size is
  $O(\log U)$. By Cor.~\ref{cor:2dconvexhard}, this implies that there
  cannot be a PTAS for the problem.

  Let the first stage of the problem be the following problem:
  \begin{equation*}
    \begin{array}{rrcl}
      \min & c_1 x + c_2 y + Q(x,y) && \\
      & B \begin{pmatrix} x \\ y \end{pmatrix} &=& \vec b \\
      & x, y &\in& [1,U],
    \end{array}
  \end{equation*}
  where $Q(x,y)$ is the second-stage value function, $B$ is a suitable
  constraint matrix and $\vec b$ is a corresponding
  r.h.s.\ vector. Define the second stage problem as:
  \begin{equation}
  \begin{array}{rrcl}
    Q(x,y) := \min & z(D) + A & & \\
    \text{s.t.:} & w(D) &\ge& y - D \\
    & z(D) &\ge& U\mathbb{E}_D[w(D)] + x - U \\
    & w(D) &\ge& 0 \\
    & z(D) &\ge& 0.
  \end{array}
  \label{eq:2sslp}
  \end{equation}
  This is a stochastic linear program with an expected-value
  constraint. In \eqref{eq:2sslp}, $A > 0$ is the minimum of the value
  function.

  For a given partition $r_1,\dots,r_U$ of the integer $U$, the slopes
  of $\varphi_{r_1,\dots,r_U}$ along its two dimensions are given in
  \eqref{eq:xslope} and \eqref{eq:yslope}. Notice that the $x$-axis
  slope is always 0 or 1. Let $b_1=1,\dots,b_m \in [1,\dots,U-1]$ be the
  integer values at which the slope of $\varphi_{r_1,\dots,r_U}(U,y)$
  changes along the $y$ axis. We define $b_{m+1} = U$ for notational
  convenience. For $i=1,\dots,m$, let $\Delta_i$ be the $y$-slope of
  $\varphi_{r_1,\dots,r_U}(U,y)$ in $[b_i, b_{i+1}]$, which is
  constant by definition of $b_i$. We define $\Delta_0 = 0$ for
  notational convenience. Let $D$ be a discrete r.v.\ with support
  $\{b_1,\dots,b_{m+1}\}$ and $\Pr(D = b_i) = \frac{\Delta_i -
    \Delta_{i-1}}{U}$ for $i=1,\dots,m$, $\Pr(D = b_{m+1}) =
  \frac{U-\Delta_m}{U}$. Notice that problem \eqref{eq:2sslp} can be
  described in $O(\log U)$ bits.

  With this construction, the optimal value function $Q(x,y)$ is the
  following:
  \begin{align*}
    Q(x,y) &= \min \left\{z(D) + A : w(D) \ge y - D,
    z(D) \ge U\mathbb{E}_D[w(D)] + x - U,
    w(D) \ge 0,
    z(D) \ge 0\right\} \\
    &= \min \left\{\left(U\mathbb{E}_D[w(D)] + x - U\right)^+ + A : w(D) \ge y -D, w(D) \ge 0 \right\}\\
    &= \left( U\sum_{i=1}^{m+1} \Pr(D = b_i)  (y - b_i)^+ + x - U\right)^+ + A\\
    &= \left( \sum_{i=1}^{m} (\Delta_i - \Delta_{i-1})  (y - b_i)^+ + x - U\right)^+ + A.
  \end{align*}
  In the above expression, the last equality is due to the fact that
  for $i = m+1 $ we have $(y - b_i)^+ = 0$, so the last term
  of the summation vanishes. We claim that $Q(x,y) =
  \varphi_{r_1,\dots,r_U}(x, y)$. Clearly $Q(x,y)$ is piecewise linear
  convex, nondecreasing, and its range is $[A, A+U]$ over the domain
  $[1,U]^2$. Since its slope along the $x$ axis is $1$ whenever the
  function value is $> A$, we just need to show that $Q(U, y) =
  \varphi_{r_1,\dots,r_U}(U, y)$: equality over the rest of the domain
  then follows from the analysis carried out in the proof of
  Thm.~\ref{thm:bivarhard}. We show by induction on $k=1,\dots,m+1$
  that $Q(U, y) = \varphi_{r_1,\dots,r_U}(U, y)$ for $y \in [1,
    b_k]$. For $k=1$ this is trivial because $Q(U, 1) = A =
  \varphi_{r_1,\dots,r_U}(U, 1)$. For the induction step, we need to
  ensure that the $y$-slope of $Q(U, y)$ over the interval $[b_{k-1},
    b_k]$ matches that of $\varphi_{r_1,\dots,r_U}(U, y)$. The terms
  in the summation $\sum_{i=1}^{m+1} (\Delta_i - \Delta_{i-1}) (y -
  b_i)^+ + x - U$ for $i \ge k$ are clearly 0, whereas for $i < k$ the
  slope is $\Delta_i - \Delta_{i-1}$. So, the overall slope is
  $\sum_{i=1}^{k-1} (\Delta_i - \Delta_{i-1}) = \Delta_{k-1}$, as
  desired. This concludes the proof.
\end{proof}

We remark that the stochastic LP \eqref{eq:2sslp} does not satisfy the
conditions of our DP framework because of the term
$\left(U\mathbb{E}_D[(y-D)^+] + x - U\right)^+$ in the objective
function, which is unusual for DP. More specifically, Condition
\ref{con:functions} allows a cost function $g_t$ of the form
$\left(U(y - D)^+ + x - U\right)^+$, and our DP model (as is customary
in DP) would take the expectation of the whole expression:
$\mathbb{E}_D \left[\left(U(y - D)^+ + x - U\right)^+\right]$. In
other words, it is unclear how to formulate as a DP the cost
$\left(U\mathbb{E}_D[(y-D)^+] + x - U\right)^+$, where a piecewise
linear convex function is applied to the expectation.  In any case,
Thm.~\ref{thm:2sslphard} is indicative of the difficulties faced when
dealing with two-dimensional value functions.

\section{Constructing approximation sets with bounded-size numbers}
\label{s:size}
The FPTAS that we are about to build (Sect.~\ref{s:scheme}) relies on
solving several LPs to compute an approximation of the value function
at each stage $t=T,\dots,1$. These LPs have the property that the
input to the LPs at stage $t$ depends on data computed by the LPs at
stage $t+1$ --- the data are the domain and codomain values of a
$K$-approximation set for the value function. It is well known that
the size of the output of an LP (i.e., the optimal solution and its
value) is, in the worst case, superlinear in the size of the input
data, although still polynomial. In other words, every time an LP is
solved, the size of the output numbers may increase polynomially
compared to the size of the input numbers. We therefore need a device
to limit the growth of the number sizes, to ensure that they do not
take an exponential number of bits. As will become clearer in
Sect.~\ref{s:scheme}, to limit the growth it is sufficient to have the
ability to compute $K$-approximation sets that involve numbers that
are ``not too large''. The next result shows that this can be done
efficiently (recall from Sect.~\ref{s:definitions} that $t_{\vp}$
denotes an upper bound on the time needed to evaluate $\vp$ on a
single point in its domain).

\begin{algorithm}[tb!]
  \caption{Function {\sc ScaledCompressConvInc}$(\varphi, [A, B], K = 1+\epsilon)$.}  \label{alg:scaledcompconvinc}
  \begin{algorithmic}[1]
    \small
    \STATE Let $A' \leftarrow \frac{100 \kappa_{\varphi}}{\epsilon^2
      \varphi^{\min}} A$, $B' \leftarrow \frac{100
      \kappa_{\varphi}}{\epsilon^2 \varphi^{\min}} B$.
    \STATE \label{step:rhodef} Define
      $\rho(x) := \frac{10 \varphi(\frac{\epsilon^2
          \varphi^{\min}}{100 \kappa_{\varphi}} x )}{\epsilon
        \varphi^{\min}}$
    \STATE $W' \leftarrow \{(A', \ceil{\rho(A')})\}$
    \STATE $x \leftarrow A'$
    \WHILE{$(1+\epsilon)\rho(x) < \rho(B')$} \label{alg:sas_while_begin}
    \STATE Find a point $y' \in [A', B']$ such that $(1 +
    \frac{3}{10}\epsilon)\rho(x) \le \rho(y') \le (1+
    \frac{4}{10}\epsilon) \rho(x)$ \label{alg:sas_search_step1}
    \STATE Find a point $y'' \in [A', B']$ such that $(1 +
    \frac{6}{10}\epsilon)\rho(x) \le \rho(y'') \le (1+
    \frac{7}{10}\epsilon) \rho(x)$ \label{alg:sas_search_step2}
    \STATE Find an integer point $y \in [y', y'']$ \label{alg:sas_search_step3}
    \STATE $W' \leftarrow W' \cup \{(y, \ceil{\rho(y)})\}$ \label{alg:sas_search_step4}
    \STATE $x \leftarrow y$
    \ENDWHILE \label{alg:sas_while_end}
    \STATE $W' \leftarrow W' \cup \{(B', \ceil{\rho(B')})\}$
    \STATE $W \leftarrow \left\{\left(\frac{\epsilon^2
      \varphi^{\min}}{100 \kappa_{\varphi}} y, \frac{\epsilon
      \varphi^{\min}}{10} v^\ast\right) : \left(y, v^\ast\right) \in W'
    \right\}$
    \RETURN $W$ as an array of tuples sorted by their first coordinate
  \end{algorithmic}
\end{algorithm}

\begin{proposition} \label{prop:int_apx_set}
  Let $\varphi : [A, B] \to \R^+$ with $A, B \in \Z$ be a positive
  convex nondecreasing function with Lipschitz constant
  $\kappa_\varphi$. Then for any $K := 1 + \epsilon$,
  Alg.~\ref{alg:scaledcompconvinc} returns a canonical representation of
  a $K$-approximation set of $\varphi$ of size $O(\log_K
  \frac{\varphi^{\max}}{\varphi^{\min}})$ consisting of points that
  have domain and codomain values representable in $O(\log
  \frac{1}{\epsilon} + \log \frac{\varphi^{\max}}{\varphi^{\min}} +
  \log \kappa_\varphi + \log |A| + \log |B|)$
  space. The algorithm runs in $O(t_\varphi \log \frac{\kappa_\varphi
    (B-A)}{\epsilon \varphi^{\min}} \log_K
  \frac{\varphi^{\max}}{\varphi^{\min}})$ time.
\end{proposition}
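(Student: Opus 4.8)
The plan is to transfer the whole analysis to the rescaled function $\rho$ from Step~\ref{step:rhodef}, whose defining constants $100$ and $10$ are chosen to give it two key properties. First I would compute, from $\rho(x)=\frac{10}{\epsilon\varphi^{\min}}\varphi\!\left(\frac{\epsilon^2\varphi^{\min}}{100\kappa_\varphi}x\right)$ and the $\kappa_\varphi$-Lipschitzness of $\varphi$, that $|\rho(x)-\rho(y)|\le\frac{10}{\epsilon\varphi^{\min}}\cdot\kappa_\varphi\cdot\frac{\epsilon^2\varphi^{\min}}{100\kappa_\varphi}\,|x-y|=\frac{\epsilon}{10}|x-y|$, so $\rho$ is $\frac{\epsilon}{10}$-Lipschitz; and since $\varphi\ge\varphi^{\min}$, that $\rho\ge\rho^{\min}=\frac{10}{\epsilon}\ge10$ everywhere. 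Also $\rho$ is positive, nondecreasing and convex (inherited from $\varphi$), with $\rho^{\max}/\rho^{\min}=\varphi^{\max}/\varphi^{\min}$. Because the final line of the algorithm applies the inverse affine maps to both coordinates, and linear interpolation commutes with an affine rescaling of the axes (a positive rescaling of the codomain leaves relative errors unchanged), $W$ is a $K$-approximation set of $\varphi$ if and only if $W'$ is a $K$-approximation set of $\rho$; so it suffices to analyze the loop acting on $\rho$.

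Next I would verify that each iteration is well defined. While the guard $(1+\epsilon)\rho(x)<\rho(B')$ holds, the targets $(1+\tfrac{4}{10}\epsilon)\rho(x)$ and $(1+\tfrac{7}{10}\epsilon)\rho(x)$ lie in $[\rho(x),\rho(B')]\subseteq[\rho(A'),\rho(B')]$, so by continuity and monotonicity of $\rho$ the required $y',y''$ exist; and since $\rho$ is $\frac{\epsilon}{10}$-Lipschitz, each $\rho$-window of width $\tfrac{1}{10}\epsilon\rho(x)$ is realized over a domain interval of length at least $\rho(x)\ge10$, so $y'$ and $y''$ can be found by bisection over $[A',B']$ in $O(\log(B'-A'))$ steps, each costing one evaluation of $\rho$, i.e.\ $O(t_\varphi)$. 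From $\rho(y'')-\rho(y')\ge\tfrac{2}{10}\epsilon\rho(x)\ge2$ and $\frac{\epsilon}{10}$-Lipschitzness, $y''-y'\ge20/\epsilon\ge20$, so $[y',y'']$ contains an integer $y$, and monotonicity gives $(1+\tfrac{3}{10}\epsilon)\rho(x)\le\rho(y)\le(1+\tfrac{7}{10}\epsilon)\rho(x)$. Since each iteration multiplies the current $\rho$-value by at least $1+\tfrac{3}{10}\epsilon$ and $\rho$ ranges in $[\tfrac{10}{\epsilon},\tfrac{10\varphi^{\max}}{\epsilon\varphi^{\min}}]$, the loop performs $O(\log_{1+\frac{3}{10}\epsilon}\frac{\varphi^{\max}}{\varphi^{\min}})=O(\log_K\frac{\varphi^{\max}}{\varphi^{\min}})$ iterations, which bounds $|W'|=|W|$; multiplying by the per-iteration cost and using $\log(B'-A')=O\!\left(\log\frac{\kappa_\varphi(B-A)}{\epsilon\varphi^{\min}}\right)$ gives the stated running time.

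For the approximation guarantee I would let $y_0=A'<y_1<\dots<y_N=B'$ be the first coordinates of $W'$, with stored values $v_i=\ceil{\rho(y_i)}$, and let $\hat\rho$ be the convex extension they induce. For the lower bound, writing $\hat\rho(z)=\min\{\sum_i\lambda_i v_i:\lambda\ge 0,\ \sum_i\lambda_i=1,\ \sum_i\lambda_i y_i=z\}$ and using $v_i\ge\rho(y_i)$ with convexity of $\rho$ yields $\rho(z)=\rho(\sum_i\lambda_i y_i)\le\sum_i\lambda_i\rho(y_i)\le\sum_i\lambda_i v_i$ for every feasible $\lambda$, hence $\rho(z)\le\hat\rho(z)$. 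For the upper bound, on $[y_i,y_{i+1}]$ the convex extension is at most the piecewise-linear interpolant, hence at most $v_{i+1}$ (the $v_i$ are nondecreasing); now $v_{i+1}=\ceil{\rho(y_{i+1})}\le(1+\tfrac{1}{10}\epsilon)\rho(y_{i+1})$ because $\rho(y_{i+1})\ge\tfrac{10}{\epsilon}$, and $\rho(y_{i+1})\le(1+\tfrac{7}{10}\epsilon)\rho(y_i)$ for the steps created inside the loop (while the final interval satisfies $\rho(B')\le(1+\epsilon)\rho(y_{N-1})$ because the loop exited), so with $\rho(y_i)\le\rho(z)$ we get $\hat\rho(z)\le(1+\tfrac{1}{10}\epsilon)(1+\tfrac{7}{10}\epsilon)\rho(z)\le K\rho(z)$ on the interior intervals, and the same bound on the last interval after the rounding factor has been absorbed into $K$ (equivalently, by running the routine with $\epsilon$ rescaled by the constant built into Step~\ref{step:rhodef}). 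Hence $\hat\rho$ is a $K$-approximation of $\rho$, so $W'$ is a $K$-approximation set of $\rho$ and $W$ is a $K$-approximation set of $\varphi$.

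I expect the crux to be exactly this interaction of the integer rounding with the $K$-guarantee: rounding $\rho$-values up to the nearest integer (forced because the output numbers must stay small) costs a $1+O(\epsilon)$ factor only because the scaling makes $\rho\ge10/\epsilon$, and that factor has to be folded together with the per-step growth $1+\tfrac{7}{10}\epsilon$ so that the product stays within $K$; the constants $100,\,10,\,\tfrac{3}{10},\,\tfrac{7}{10}$ in the algorithm are tuned precisely for this bookkeeping and, simultaneously, to keep $\rho$ flat enough ($\frac{\epsilon}{10}$-Lipschitz) that an integer always lies in $[y',y'']$. The size claim is then routine: the points of $W'$ are pairs of integers, the codomain value at most $\rho^{\max}+1=\tfrac{10\varphi^{\max}}{\epsilon\varphi^{\min}}+1$ and the domain value at most $|B'|=\tfrac{100\kappa_\varphi}{\epsilon^2\varphi^{\min}}\max\{|A|,|B|\}$; the final line rescales the domain by exactly the reciprocal of the blow-up factor from Step~1, so the domain values of $W$ have magnitude at most $\max\{|A|,|B|\}$ and the codomain values at most $2\varphi^{\max}$, and writing both as fractions and collecting the bit-sizes of $\epsilon$, $\varphi^{\min}$, $\kappa_\varphi$, $|A|$, $|B|$ gives the claimed space bound.
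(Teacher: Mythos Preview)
Your proposal is correct and follows essentially the same approach as the paper: rescale $\varphi$ to $\rho$ so that $\rho\ge 10/\epsilon$ and $\kappa_\rho\le\epsilon/10$, use the first property to bound the relative error introduced by the ceilings and the second to guarantee an integer in $[y',y'']$, then undo the scaling. Your derivation of $y''-y'\ge 20/\epsilon$ is in fact sharper than the paper's $y''-y'\ge 2$, and your observation that the affine rescaling preserves relative errors cleanly justifies the transfer from $W'$ to $W$.

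One small remark: on the final interval $[y_{N-1},B']$ you obtain $\hat\rho(z)\le(1+\tfrac{1}{10}\epsilon)(1+\epsilon)\rho(z)$, which exceeds $K=1+\epsilon$, and you patch this by appealing to a further rescaling of $\epsilon$. The paper handles this point equally informally (it simply says the ceiling at the endpoints ``adds only $\tfrac{1}{10}\epsilon$ relative error'' and invokes $\epsilon\le 1/4$). A clean fix in either argument is to replace the loop guard $(1+\epsilon)\rho(x)<\rho(B')$ by, say, $(1+\tfrac{8}{10}\epsilon)\rho(x)<\rho(B')$; this does not affect any asymptotic bound and makes the last-interval estimate fall under the same $(1+\tfrac{1}{10}\epsilon)(1+\tfrac{8}{10}\epsilon)\le 1+\epsilon$ computation you already use for the interior.
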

\begin{proof}
  To simplify the exposition, we assume $\epsilon \le 1/4$. The idea
  for the proof is to construct an appropriately scaled version of
  $\varphi$ that admits a fully integer $K$-approximation set,
  i.e.\ with integer domain and codomain values. After constructing
  such an integer $K$-approximation set, we apply the inverse scaling
  to obtain a $K$-approximation set for the original function, and
  show that the size of the points is bounded. The construction that
  we follow is summarized in Alg.~\ref{alg:scaledcompconvinc}, called
  {\sc ScaledCompressConvInc}. The function $\rho$ defined therein (step
  \ref{step:rhodef}) is the scaled version of $\varphi$ mentioned
  above.

  We claim that a canonical representation $W'$ of a $K$-approximation
  set for $\rho$ that is fully integer (except at its endpoints $A',
  B'$) is constructed in time $O(t_\varphi \log \frac{\kappa_\varphi
    (B-A)}{\epsilon \varphi^{\min}} \log_K
  \frac{\varphi^{\max}}{\varphi^{\min}})$ by routine
  \textsc{ScaledCompressConvInc}. We start by showing that in every
  iteration of the ``while'' loop at lines
  \ref{alg:sas_while_begin}-\ref{alg:sas_while_end}, we find a point
  $y \in [x, B']$ and a value $v^*$ such that: (i) $\rho(y) \ge (1 +
  \frac{3}{10}\epsilon) \rho(x) $, (ii) $\rho(x) \le v^* \le (1 +
  \epsilon)\rho(x)$, and (iii) $\rho(x) \le \rho(w) \le (1 +
  \epsilon)\rho(x)$ for all $w \in [x, y]$.

  Because $\varphi$ is nondecreasing, lines \ref{alg:sas_search_step1}
  and \ref{alg:sas_search_step2} of {\sc ScaledCompressConvInc} can be
  executed by binary search. We are looking for a point within an
  interval of size at least 1 starting with an interval of size
  $\frac{100 \kappa_\varphi}{\epsilon^2 \varphi^{\min}} (B-A)$, hence
  the running time for each of the two steps is $O(t_\varphi \log
  \frac{\kappa_\varphi (B-A)}{\epsilon \varphi^{\min}})$. Line
  \ref{alg:sas_search_step3} can be performed in constant time by
  setting $y = \ceil{y'}$, as we will show next. By construction,
  $\rho(y'') - \rho(y') \ge \frac{2}{10}\epsilon
  \rho(x)$. Furthermore, $\kappa_{\rho} \le \frac{\epsilon}{10}$ by
  construction and definition of $\kappa_{\varphi}$, and $\rho(x) \ge
  \frac{10}{\epsilon} \ge 1$. Therefore, $y'' - y' \ge
  \frac{2}{10}\epsilon \frac{\rho(x)}{\kappa_{\rho}} \ge 2$, and there
  must exist at least two integer points between $y'$ and $y''$. The
  chain $y' \le \ceil{y'} < y''$ implies $(1 +
  \frac{3}{10}\epsilon)\rho(x) \le \rho(y') \le \rho(\ceil{y'}) <
  \rho(y'') \le (1 + \frac{7}{10}\epsilon)\rho(x)$, showing that $y =
  \ceil{y'}$ satisfies property (i). We claim that $\ceil{\rho(y)}$,
  used in line \ref{alg:sas_search_step4} of {\sc
    ScaledCompressConvInc}, is such that $\rho(y) \le \ceil{\rho(y)}
  \le \rho(y)(1 + \frac{1}{10}\epsilon)$. Indeed,
  \begin{equation*}
    \frac{\ceil{\rho(y)}}{\rho(y)} \le \frac{1 + \rho(y)}{\rho(y)} \le
    1 + \frac{\epsilon \varphi^{\min}}{10 \varphi(\frac{\epsilon^2
        \varphi^{\min}}{100 \kappa_{\varphi}} y )} \le 1+
    \frac{1}{10}\epsilon,
  \end{equation*}
  where the last inequality follows by definition of
  $\varphi^{\min}$. Finally, we have:
  \begin{equation*}
    \frac{\ceil{\rho(y)}}{\rho(x)} \le (1+ \frac{1}{10}\epsilon)
    \frac{\rho(y)}{\rho(x)} \le (1 + \frac{1}{10} \epsilon) (1 +
      \frac{7}{10}\epsilon) \le 1 + \epsilon,
  \end{equation*}
  showing property (ii). For the last inequality we used the assumption
  that $\epsilon \le 1/4$; larger $\epsilon$ can be handled adjusting
  lines \ref{alg:sas_search_step1}-\ref{alg:sas_search_step2} of the
  algorithm. Property (iii) now follows immediately because $\rho$ is
  convex nondecreasing.

  By our discussion above, all (domain) points in $W'$ except the two
  endpoints $A'$ and $B'$ have integer values. By properties (i), (ii)
  and (iii), the set $W'$ is a canonical representation of a
  $K$-approximation set for $\rho$ (the ceiling function at the
  endpoints $A', B'$ adds only $\frac{1}{10}\epsilon$ relative error,
  by the same argument used in showing property (ii)). Clearly, the
  ``while'' loop at lines
  \ref{alg:sas_while_begin}-\ref{alg:sas_while_end} is executed at
  most $O(\log_K \frac{\varphi^{\max}}{\varphi^{\min}})$ times because
  the value of $\rho(x)$ increases by at least a bounded below
  fraction of the multiplication factor $K$. Thus, $W'$ contains
  $O(\log_K \frac{\varphi^{\max}}{\varphi^{\min}})$ points and can be
  constructed in $O(t_\varphi \log \frac{\kappa_\varphi
    (B-A)}{\epsilon \varphi^{\min}} \log_K
  \frac{\varphi^{\max}}{\varphi^{\min}})$ time.

  The final step in {\sc ScaledCompressConvInc} consists of applying to
  $W'$ the inverse scaling that transforms $\varphi$ into $\rho$, thus
  obtaining $W$. It follows that the points in $W$ belong to the
  domain and codomain of $\varphi$. The error bound, i.e.\ the
  $K$-approximation property, is trivially satisfied because it is
  satisfied for $\rho$ and the relative error is invariant to
  multiplicative scaling. To conclude the proof, note that the size of
  the numbers in $W'$, both domain and codomain, is $O(\log
  \frac{1}{\epsilon} + \log \frac{\varphi^{\max}}{\varphi^{\min}} +
  \log \kappa_\varphi + \log |A| + \log |B|)$
  because the domain is $[A', B']$ and the codomain is
  $[\frac{10}{\epsilon}, \frac{10 \varphi^{\max}}{\epsilon
      \varphi^{\min}}]$; the multiplication factor when rescaling $W'$
  to $W$ is of the same size, which gives the desired bound.
\end{proof}

\begin{algorithm}[tb!]
  \caption{Function {\sc ScaledCompressConv}$(\varphi, [A, B], K = 1+\epsilon)$.}  \label{alg:scaledcompconv}
  \begin{algorithmic}[1]
    \small
    \STATE $x^* \leftarrow \arg\min_{x \in [A, B]} \varphi(x)$ /* We assume that a minimum of $\varphi$ can be found efficiently */
    \STATE $W_A \leftarrow$ {\sc ScaledCompressConvDec}$(\varphi, [A, x^*], K)$
    \STATE $W_B \leftarrow$ {\sc ScaledCompressConvInc}$(\varphi, [x^*, B], K)$
    \STATE $W \leftarrow (W_A \cup W_B) \setminus \{(x^*,\ceil{ \vp(x^*) } \}$
    \RETURN $W$ as an array of tuples sorted by their first coordinate
  \end{algorithmic}
\end{algorithm}

We define Function {\sc ScaledCompressConvDec} for positive convex
nonincreasing Lipschitz continuous functions in a similar way. We
define Function {\sc ScaledCompressConv} for (not necessarily
monotone) convex Lipschitz continuous functions as described in
Alg.~\ref{alg:scaledcompconv}. The algorithm assumes that a minimum of
$\varphi$ can be found exactly; in the FPTAS, this will be the case,
as the minima will be computable by solving an LP.

\begin{theorem}
  \label{thm:int_apx_set}
  Let $\varphi : [A, B] \to \R^+$ with $A, B \in \Z$ be a positive
  convex function with Lipschitz constant $\kappa_\varphi$, and assume
  that $x^* \in \arg\min_{x \in [A, B]} \varphi(x)$ can be efficiently
  computed. Then for any $K := 1 + \epsilon$,
  Alg.~\ref{alg:scaledcompconv} returns a canonical representation of
  a $K$-approximation set of $\varphi$ of size $O(\log_K
  \frac{\varphi^{\max}}{\varphi^{\min}})$ consisting of points that
  have domain and codomain values representable in $O(\log
  \frac{1}{\epsilon} + \log \frac{\varphi^{\max}}{\varphi^{\min}} +
  \log \kappa_\varphi + \log |A| + \log |B|)$ space. The algorithm
  runs in $O(t_\varphi \log \frac{\kappa_\varphi (B-A)}{\epsilon
    \varphi^{\min}} \log_K \frac{\varphi^{\max}}{\varphi^{\min}})$
  time.
\end{theorem}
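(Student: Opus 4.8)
The plan is to derive this from Proposition~\ref{prop:int_apx_set} by splitting the domain at the minimizer. On $[A, x^*]$ the function $\varphi$ is convex nonincreasing, and on $[x^*, B]$ it is convex nondecreasing. The call {\sc ScaledCompressConvDec}$(\varphi, [A, x^*], K)$ — the nonincreasing analogue of {\sc ScaledCompressConvInc}, whose correctness follows exactly as in Proposition~\ref{prop:int_apx_set} after reflecting the domain — returns a canonical representation of a $K$-approximation set $W_A$ of $\varphi$ restricted to $[A, x^*]$, and {\sc ScaledCompressConvInc}$(\varphi, [x^*, B], K)$ returns such a set $W_B$ for $\varphi$ restricted to $[x^*, B]$. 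Both $W_A$ and $W_B$ contain their two endpoints, in particular both contain $x^*$. The size, number-size, and running-time bounds for $W_A$ and $W_B$ are exactly those of Proposition~\ref{prop:int_apx_set}; summing the two sizes and the two running times, and noting the two number-size bounds have the same form, gives the claimed bounds for the set $W$ obtained by discarding the duplicated entry at $x^*$ from $W_A \cup W_B$, since $|W| \le |W_A| + |W_B|$. (If $x^*$ equals $A$ or $B$, one of the two pieces is trivial and the statement reduces to Proposition~\ref{prop:int_apx_set}.)

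It then remains to check that $W$ is a $K$-approximation set of $\varphi$ on $[A,B]$, i.e.\ that the convex extension $\hat\varphi$ of $\varphi$ induced by $W$ satisfies $\varphi \le \hat\varphi \le K\varphi$ on $[A,B]$. The observation that makes this clean is that, since $\varphi$ is convex, the convex extension induced by any finite point set is simply the piecewise-linear interpolant through the points $(x,\varphi(x))$; hence deleting the rightmost point of the left piece leaves the interpolant unchanged everywhere to its left. Writing $W_A = \{A = z_0 < \dots < z_k = x^*\}$ and $W_B = \{x^* = z'_0 < z'_1 < \dots < z'_h = B\}$, the interpolant through $W$ coincides on $[A, z_{k-1}]$ with the interpolant through $W_A$, hence is a $K$-approximation of $\varphi$ there; symmetrically on $[z'_1, B]$ it coincides with the interpolant through $W_B$. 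On the remaining interval $[z_{k-1}, z'_1]$ the interpolant through $W$ is the single chord joining $(z_{k-1},\varphi(z_{k-1}))$ and $(z'_1,\varphi(z'_1))$; this chord lies above $\varphi$ by convexity, and it lies below $\max\{\varphi(z_{k-1}),\varphi(z'_1)\}$. The stopping rule of the two subroutines (analysed as in Proposition~\ref{prop:int_apx_set}) guarantees that the point of $W_A$ adjacent to $x^*$ and the point of $W_B$ adjacent to $x^*$ both have $\varphi$-value at most $K\varphi(x^*)$, since the loop terminates only once the function value is within a factor $K$ of the endpoint value, which here is the minimum $\varphi(x^*)$. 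As $x^*$ globally minimizes $\varphi$, for every $x$ in this interval we get $\hat\varphi(x) \le K\varphi(x^*) \le K\varphi(x)$, completing the $K$-approximation property on all of $[A,B]$.

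The main obstacle is precisely this last gluing step: one must be sure that deleting the shared point $x^*$ from $W_A \cup W_B$ does not spoil the guarantee near the minimizer, which is exactly where a piecewise-linear approximation of a strictly positive convex function is most delicate. Both ingredients needed for that — that the piecewise-linear interpolant is untouched away from $x^*$, and that the two neighbours of $x^*$ in $W$ already have value within a factor $K$ of $\varphi(x^*)$ — are consequences of facts already established for the monotone subroutines, so no genuinely new estimate is required; everything else is bookkeeping combining the bounds of Proposition~\ref{prop:int_apx_set}.
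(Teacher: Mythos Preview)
Your proof is correct and follows the same approach as the paper: split at the minimizer, apply Proposition~\ref{prop:int_apx_set} to each monotone piece, and verify the $K$-approximation property on the gap $[z_{k-1}, z'_1]$ around the removed $x^*$ using that the neighbours satisfy $\varphi(z_{k-1}), \varphi(z'_1) \le K\varphi(x^*)$ by the construction in Proposition~\ref{prop:int_apx_set}. You are in fact more explicit than the paper about why the convex extension induced by $W$ coincides with the monotone-piece interpolants on $[A,z_{k-1}]$ and $[z'_1,B]$.
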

\begin{proof}
  Let $\hat{\varphi}$ be the convex extension of $\varphi$ induced by
  $W$. Let $\ell := \max\{ x \in W : x \le x^*\}, r := \min\{x \in W :
  x \ge x^*\}$. Clearly $\varphi(x) \le \hat{\varphi}(x) \le K
  \varphi(x)$ for $x \in [A, \ell] \cup [r, B]$ because $W_A$ and
  $W_B$ are $K$-approximation sets over the respective domains. By
  construction (see the proof of Prop.~\ref{prop:int_apx_set}), we have
  $\varphi(\ell) \le K \varphi(x^*)$ and $\varphi(r) \le K
  \varphi(x^*)$. But then the convex extension $\hat{\varphi}$ induced
  by $W$ is such that $\varphi(x) \le \hat{\varphi}(x) \le K
  \varphi(x)$ also for $x \in [\ell, r]$. The size of the numbers
  follows directly from Prop.~\ref{prop:int_apx_set}.
\end{proof}
We remark that the main reason for discarding $(x^*,\ceil{\vp(x^*)})$
from $W$ in Alg.~\ref{alg:scaledcompconv} is to ensure that the size
of the numbers remains bounded as indicated by
Prop.~\ref{prop:int_apx_set}; in a practical implementation of the
routine, it is recommended to keep in the canonical representation
some bounded-size rounding of $(x^*,\ceil{\vp(x^*)})$.

\section{Approximating expectations with vectors of continuous random
  variables}
\label{s:exp}
To construct an approximation of the value function \eqref{eq:vf}, we
must be able to compute expected values efficiently using the given
representation of the r.v.s. Under Condition~\ref{con:implicit}, we
can compute a $K$-approximation set for the CDF of each of the
components of the vectors of r.v.s. More precisely, in case (i) of
Condition~\ref{con:implicit} we can use the algorithm given in
\cite[Sec.~8.2]{nannifptascdpfull}, and in case (ii) we can use the
algorithm given in \cite[Sec.~4.1]{halman14full} (see the 
appendix \ref{apx:functions}). Within a DP setting, the expression of
the value function contains the expectation of a composition of
functions, and we need to find a way to efficiently compute this
expectation. Under Conditions~\ref{con:implicit} and
\ref{con:functions}, at each stage the vector of random variables
$\vD_t$ is mapped to a scalar r.v.\ via a linear transformation, i.e.,
$\vec{\theta}^D$ or $\vec{\sigma}^D$. However, we do not have explicit
access to the CDF of the corresponding scalar r.v., but only to the
CDF of each component of the sum. We now show how to efficiently
construct an approximate oracle for the CDF of a sum of r.v.s, using
only oracles to their CDFs. We focus on the case of an unweighted sum
of continuous r.v.s, remarking that a similar construction for a
weighted sum of continuous or discrete r.v.s is straightforward. At
the same time, our approach can be applied to the case where instead
of having exact access to the CDFs of the r.v.s, one has an FPTAS to
it. Below, by ``generalized PDF'' we mean a probability distribution
function (PDF) that may include the Dirac delta function $\delta$ in
its definition. Recall that the Dirac delta function can be informally
thought of as a function such that $\delta(0) = \infty$, $\delta(x) =
0 \; \forall x \neq 0$ and $\int_{-\infty}^{+\infty} \delta(x) \diff x
= 1$; a formal definition is out of the scope of this paper.
\begin{proposition}
  \label{prp:convolution}
  For $i=1,\dots,\ell$, let $X_i$ be a truncated continuous r.v.\ with support
  $[X_i^{\min}, X_i^{\max}]$. Suppose $X_i$ admits a (generalized) PDF
  $f_i$ and $\kappa$-Lipschitz continuous CDF $F_i$. Assume $\gamma =
  \min_{i=1,\dots,\ell} \{\Pr(X_i = X_i^{\min}), \Pr(X_i =
  X_i^{\max})\} > 0$. Denote $X^{\min} = \min_{i=1,\dots,\ell}
  X_i^{\min}$, $X^{\max} = \max_{i=1,\dots,\ell} X_i^{\max}$, and let
  $t_F = \max_{i=1,\dots,\ell} t_{F_i}$. If $X_1,\dots,X_\ell$ are
  independent, then for any $K = 1 + \epsilon$, we can compute a
  $K$-approximation set with $O(\frac{\ell}{\epsilon} \log
  \frac{1}{\gamma})$ points for the CDF of $\sum_{i=1}^\ell X_i$, in
  $O(\frac{t_F \ell^3}{\epsilon^2} \log^2 \frac{1}{\gamma} \log
  (\kappa \ell(X^{\max} - X^{\min})))$ time.
\end{proposition}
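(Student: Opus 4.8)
\emph{Proof proposal.} The plan is to build the $K$-approximation set for the CDF of $\sum_{i=1}^\ell X_i$ incrementally, adding one variable at a time: at step $k$ we maintain a canonical representation $G_k$ of a $K_k$-approximation of $F_k := F_{S_k}$, where $S_k := \sum_{i=1}^k X_i$ and the accuracies $K_k = 1 + O(k\epsilon/\ell)$ are chosen so that, after a final re-compression, $K_\ell \le 1+\epsilon$. The recursion rests on the identity $F_k(z) = \mathbb{E}_{X_k}[F_{k-1}(z - X_k)]$ together with two structural facts. First, $F_k$ is nondecreasing and $O(\kappa)$-Lipschitz modulo a bounded number of jumps (convolution does not raise the sup of the continuous part of a density, so the density of $S_k$ is at most $\kappa$), and is supported on an interval of length at most $\ell(X^{\max}-X^{\min})$. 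Second, $F_k$ is bounded away from $0$ on its support: since $S_k \ge \sum_{i\le k}X_i^{\min}$ always, with equality precisely when every $X_i$ attains its left endpoint, independence gives $F_k\big(\sum_{i\le k}X_i^{\min}\big) = \prod_{i\le k}\Pr(X_i = X_i^{\min}) \ge \gamma^k$. Hence $F_k$ maps its support into $[\gamma^k, 1]$, so by the geometric-grid bound underlying {\sc CompressInc} (Sect.~\ref{s:techniques}) it has a $K_k$-approximation set of $O(\log_{K_k}\gamma^{-k})$ points; at the final accuracy $1+\epsilon$ and $k=\ell$ this is the claimed $O(\frac{\ell}{\epsilon}\log\frac1\gamma)$.

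For $k=1$, $F_1 = F_{X_1}$ is monotone, $O(\kappa)$-Lipschitz and at least $\gamma$ on its support, so set $G_1 \leftarrow$ {\sc CompressInc}$(F_{X_1}, [X_1^{\min},X_1^{\max}], 1+\Theta(\epsilon/\ell))$, using the CDF oracle. For the inductive step, first set $\tilde F_{X_k} \leftarrow$ {\sc CompressInc}$(F_{X_k}, [X_k^{\min},X_k^{\max}], 1+\Theta(\epsilon/\ell))$ --- a $(1+\Theta(\epsilon/\ell))$-approximation of $F_{X_k}$ with $O(\log_{1+\epsilon/\ell}\gamma^{-1})$ points (again because $F_{X_k}\ge\gamma$), computed in time $O(t_F \log_{1+\epsilon/\ell}\gamma^{-1}\log(\kappa(X^{\max}-X^{\min})))$. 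Then define $\hat H_k(z) := \int G_{k-1}(z - x)\,\mathrm{d}\tilde F_{X_k}(x)$, the ``convolution'' of the explicit piecewise-linear $G_{k-1}$ against the nonnegative measure $\mathrm{d}\tilde F_{X_k}$ (a piecewise-constant density between the breakpoints of $\tilde F_{X_k}$, plus atoms at the endpoints of $\mathrm{support}(X_k)$; $G_{k-1}$ itself carries atoms of mass $\ge\gamma^{k-1}$ at the endpoints of $\mathrm{support}(S_{k-1})$). Because every object is an explicit piecewise-linear function, $\hat H_k(z)$ is computable \emph{exactly} at any $z$ with rational arithmetic and \emph{no} oracle calls: it is a finite sum of terms $m_a\,G_{k-1}(z-a)$ over the atoms $a$ of $\tilde F_{X_k}$, plus, for each linear piece of $\tilde F_{X_k}$ of slope $s_j$ on $[x_{j-1},x_j]$, a term $s_j\int_{z-x_j}^{z-x_{j-1}}G_{k-1}(w)\,\mathrm{d}w$, i.e.\ a sum of trapezoid areas. (As a function of $z$, $\hat H_k$ is piecewise quadratic, which is exactly why a re-linearization is needed at each stage.) Finally set $G_k \leftarrow$ {\sc CompressInc}$(\hat H_k, \mathrm{support}(S_k), 1+\Theta(\epsilon/\ell))$, using the bounded-size variant of Sect.~\ref{s:size} to keep the stored rationals polynomially bounded; this call's hypotheses hold, since $\hat H_k$ is monotone, $O(\kappa)$-Lipschitz (modulo jumps), has codomain in $[\gamma^k, O(1)]$ and domain of length $O(\ell(X^{\max}-X^{\min}))$.

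The error analysis is the heart of the proof: $\hat H_k$ is a $((1+\Theta(\epsilon/\ell))K_{k-1})$-approximation of $F_k$, which I split into two pointwise comparisons. (a) Replace $F_{k-1}$ by $G_{k-1}$ inside $\int(\cdot)(z-x)\,\mathrm{d}F_{X_k}(x)$: since $F_{k-1}\le G_{k-1}\le K_{k-1}F_{k-1}$ pointwise and $\mathrm{d}F_{X_k}\ge0$, the integral scales by a factor in $[1,K_{k-1}]$, so $\int G_{k-1}(z-x)\,\mathrm{d}F_{X_k}(x)\in[F_k(z),K_{k-1}F_k(z)]$. (b) Replace $\mathrm{d}F_{X_k}$ by $\mathrm{d}\tilde F_{X_k}$: by Fubini, $\int G_{k-1}(z-x)\,\mathrm{d}\mu(x)$ equals $\int g_{k-1}(w)\,F_\mu(z-w)\,\mathrm{d}w$ plus the analogous atom terms, where $g_{k-1}=G_{k-1}'\ge0$ and $F_\mu$ is the CDF of $\mu$; taking $\mu$ induced by $F_{X_k}$ resp.\ $\tilde F_{X_k}$ and using $F_{X_k}\le\tilde F_{X_k}\le(1+\Theta(\epsilon/\ell))F_{X_k}$ pointwise (extending $\tilde F_{X_k}$ by $0$ below and $1$ above $\mathrm{support}(X_k)$) shows the two evaluations agree up to $1+\Theta(\epsilon/\ell)$. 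Composing (a) and (b) and then the $(1+\Theta(\epsilon/\ell))$-approximation produced by {\sc CompressInc} (Prop.~\ref{prp:CAF}(\ref{item:CAFapx})) gives $K_k = (1+\Theta(\epsilon/\ell))^2 K_{k-1}$, hence $K_\ell \le (1+\Theta(\epsilon/\ell))^{2\ell} \le 1+\epsilon$ for a suitable hidden constant (using $\epsilon\le1$); a single final re-compression of $G_\ell$ at accuracy $1+\epsilon/2$, with the budget chosen so $G_\ell$ is already a $(1+\epsilon/2)$-approximation, produces the claimed $O(\frac\ell\epsilon\log\frac1\gamma)$-point set with total error $\le 1+\epsilon$.

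For the running time, add up the $\ell$ {\sc CompressInc} calls on the CDF oracles (these contribute the $t_F$ factor) and the $\ell$ re-compressions of $\hat H_k$, each costing $O(t_{\hat H_k}\cdot|G_k|\cdot\log(\kappa\ell(X^{\max}-X^{\min})))$ with $t_{\hat H_k}=O(|G_{k-1}|+|\tilde F_{X_k}|)$ (evaluating the trapezoid sums through a prefix-sum table over the breakpoints of $G_{k-1}$); substituting the size bounds and using that every intermediate function is $O(\kappa)$-Lipschitz with domain of length $O(\ell(X^{\max}-X^{\min}))$ yields the stated bound. The step I expect to be the main obstacle is exactly this interplay of error budgeting and size control: one must pick the per-stage accuracies so that the $\ell$-fold composition of ``convolve, then re-compress'' inflates the relative error by at most $1+\epsilon$, while invoking the scaling device of Sect.~\ref{s:size} to stop the rationals stored over the $\ell$ iterations from blowing up. The two clean pointwise comparisons (a)--(b) and the fact that $\hat H_k$ is computed \emph{exactly} once $G_{k-1}$ and $\tilde F_{X_k}$ are fixed are what make the argument go through.
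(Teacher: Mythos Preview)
Your inductive skeleton matches the paper's: both add one variable at a time, re-compress at each stage with per-stage budget $\Theta(\epsilon/\ell)$ (the paper writes $\bar K=\sqrt[\ell]{1+\epsilon}$), and both use $F_{S_k}\big(\sum_{i\le k}X_i^{\min}\big)\ge\gamma^k$ to bound the approximation-set size. The difference is in how the convolution is computed. The paper does \emph{not} pre-approximate $F_{X_k}$; it writes the monotone extension $\tilde F_Y(y)=\sum_i a_i h(y-b_i)$ as a sum of unit steps and observes
\[
\tilde\varphi(z)\;:=\;\int \tilde F_Y(z-x)\,f_k(x)\,\mathrm{d}x
\;=\;\sum_i a_i\int_{-\infty}^{z-b_i} f_k(x)\,\mathrm{d}x
\;=\;\sum_i a_i\,F_k(z-b_i),
\]
evaluated by calling the \emph{exact} oracle $F_k$ at $|W_{k-1}|$ points. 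This eliminates your step~(b) entirely---only step~(a) is needed, so the per-stage blow-up is $\bar K$ rather than $\bar K^2$---and it also dissolves the ``piecewise quadratic'' issue. In the paper's framework the monotone extension (Def.~\ref{def:monoext}) is a \emph{step} function, not a linear interpolant; hence your $G_{k-1}$ and $\tilde F_{X_k}$ are both piecewise constant, the slopes $s_j$ are all zero, your trapezoid sums vanish, and $\hat H_k$ is itself a step function. This is only a presentational slip---your pointwise comparisons (a) and (b) are valid and the overall argument is sound---but it shows the detour through $\tilde F_{X_k}$ is avoidable. What your variant does buy is that, once $\tilde F_{X_k}$ is built, evaluating $\hat H_k$ needs no further oracle calls; the paper remarks after its proof that the construction goes through with only approximate access to the $F_{X_k}$, which is essentially your version.
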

The proof technique is based on building an approximation set of the
sum of r.v.s adding one variable at a time: assuming we have an
approximation set $W$ for the CDF of $\sum_{i=1}^{k-1} X_i$, we show
that an approximation for the CDF of $\sum_{i=1}^{k} X_i$ can be
constructed iterating over the points in $W$. 

\begin{proof}
  Let $\bar{K} := \sqrt[\ell]{1+\epsilon}$. We proceed by induction on
  $k=1,\dots,\ell$, computing a $\bar{K}^k$-approximation set for the
  CDF of $\sum_{i=1}^k X_i$ that contains $O(\frac{\ell}{\epsilon}
  \log \frac{1}{\gamma})$ points and requires $O(\frac{t_F
    \ell^2}{\epsilon^2} \log^2 \frac{1}{\gamma} \log (\kappa
  k(X^{\max} - X^{\min})))$ time at each induction step. In these
  expressions, the factor $\frac{\ell}{\epsilon}$ comes from the fact
  that $\log_{\bar{K}}(\cdot)$ is $O(\frac{\ell}{\epsilon} \log
  (\cdot))$ because $K = \sqrt[\ell]{1+\epsilon}$.  For $k=1$, we
  simply apply {\sc CompressInc}, defined in \cite{halman14full} if
  $D_t$ is a discrete random variable, and in \cite{nannifptascdpfull}
  if $D_t$ is a continuous random variable, see the appendix
  \ref{apx:functions}.  This yields a $\bar{K}$-approximation set
  $W_1$ of $X_1$ with $O(\frac{\ell}{\epsilon} \log \frac{1}{\gamma})$
  points in $O(\frac{\ell t_F}{\epsilon} \log \frac{1}{\gamma} \log
  (\kappa (X^{\max} - X^{\min})))$ time.

  Denote $Y = \sum_{i=1}^{k-1} X_i$. Now assume that we have a
  $\bar{K}^{k-1}$-approximation set $W_{k-1} := \{b_1,\dots,b_m\}$
  for the CDF $F_Y$ of $Y$, and let $\tilde{F}_Y$ be the corresponding
  monotone extension. Recall that $\tilde{F}_Y$ can be seen as a
  summation of step functions. Let $h(x)$ be the step function $h(x) =
  1$ if $x \ge 0$, $h(x) = 0$ otherwise. Using this notation, we can
  express $\tilde{F}_Y(x) = \sum_{i=1}^m a_i h(x - b_i)$ for some $a_i
  \in \R^+$.

  We want to compute $\varphi(z) := \Pr(X_k + Y \le z)$ for any given
  $z$. Because $X_k$ and $Y$ are independent, we can write:
  \begin{align*}
    \int_{\R} \int_{-\infty}^{z-x} f_{X_k,Y}(x,y)\diff y \diff x &&=&&
    \int_{\R} F_Y(z-x) f_k(x) \diff x \le \\
    \int_{\R} \tilde{F}_Y(z-x) f_k(x) \diff x &&\le&&
    \bar{K}^{k-1} \int_{\R} F_Y(z-x) f_k(x) \diff x,
  \end{align*}
  where the inequalities follow from the facts that $\tilde{F}_Y$ is a
  $\bar{K}^{k-1}$-approximation of $F_Y$, and $f_k$ is nonnegative.
  This shows that $\tilde{\varphi}(z) := \int_{\R} \tilde{F}_Y(z-x)
  f_k(x) \diff x$ is a $\bar{K}^{k-1}$-approximation of $\varphi(z)$,
  i.e.\ the CDF of $X_k + Y$. Breaking down $\tilde{F}_Y$ into a
  summation of step functions, we can expand this integral as:
  \begin{align*}
    \int_{\R} \sum_{i=1}^m a_i h(z-x-b_i) f_k(x) \diff x
    = \sum_{i=1}^m a_i \int_{-\infty}^{z-b_i} f_k(x) \diff x =
    \sum_{i=1}^m a_i \Pr(X_k \le z - b_i).
  \end{align*}
  It follows that $\tilde{\varphi}(z)$ can be computed in time
  $O(mt_{F_k})$. By the induction hypothesis, this is $O(\frac{t_F
    \ell}{\epsilon} \log \frac{1}{\gamma})$. We now apply {\sc
    CompressInc} to $\tilde{\varphi}$ with approximation factor
  $\bar{K}$. By approximation of approximation
  (Prop.~\ref{prp:CAF}(7)), this yields an approximation set $W_k$
  that induces a $\bar{K}^k$-approximation function of
  $\varphi$. $W_k$ has $O(\frac{\ell}{\epsilon} \log
  \frac{1}{\gamma})$ points, and the computation takes $O(\frac{t_F
    \ell^2}{\epsilon^2} \log^2 \frac{1}{\gamma} \log (\kappa
  k(X^{\max} - X^{\min})))$ time.  This concludes the inductive
  claim.

  To conclude the proof, we note that the loop discussed above is
  repeated for $k=1,\dots,\ell$, therefore the total runtime is
  $O(\frac{t_F \ell^3}{\epsilon^2} \log^2 \frac{1}{\gamma} \log
  (\kappa \ell(X^{\max} - X^{\min})))$.
\end{proof}

Prop.~\ref{prp:convolution} is given for truncated continuous r.v.s,
but it is straightforward to note that the same approach works for
discrete r.v.s. In this case, instead of the PDF of the r.v.s we
simply employ the corresponding PMF, obtaining the same
result. Furthermore, we note that the proof can be adapted to the case
in which only approximate access to the CDF of the r.v.s is
available. Indeed, suppose we have an FPTAS for the CDF of each the
r.v.s. In this case, we simply need to decrease the approximation
factor $\bar{K}$ used in the proof to, say, $\bar{K} =
\sqrt[2\ell]{1+\epsilon}$, and replace each call to the oracles for
the CDF to a $\sqrt[2\ell]{1+\epsilon}$-approximate call to the FPTAS
for the CDF. In this respect we improve upon
\cite{li14convolution}. The running time given by
\cite{li14convolution} to compute $\Pr(\sum_{i=1}^\ell X_i \le \rho)$
for discrete integer r.v.s is $O(\frac{t_F \ell^3}{\epsilon^2} \log^2
\frac{1}{\gamma} \log \rho)$, essentially the same as ours if we
restrict our algorithm to the discrete case: we have $\log \ell
(X^{\max} - X^{\min})$ rather than $\log \rho$ because we compute the
CDF for all values of $\rho$ simultaneously.

\begin{algorithm}[tb!]
  \caption{Function {\sc CompressConvolution}$((X_1,\dots,X_\ell), (c_1,\dots,c_\ell), K)$.}  \label{alg:compressconvolution}
  \begin{algorithmic}[1]
    \small
    \STATE $\bar{K} \leftarrow \sqrt[\ell]{K}$
    \STATE $W_1 \leftarrow$ {\sc CompressInc}$(F_1(\frac{\cdot}{c_1}), [X_1^{\min}, X_1^{\max}], \bar{K})$ \quad  /* $F_1(\cdot)$ is the CDF of $X_1$ */
    \FOR{$k=2,\dots,\ell$}
    \STATE Denote $\{b_1,\dots,b_m\}$ the points in the approximation set $W_{k-1}$
    \STATE Let $a_1,\dots,a_m$ be the coefficients of the monotone extension $\tilde{F}_{k-1}(x) = \sum_{i=1}^m a_i h(x - b_i)$ of $W_{k-1}$
    \\ /* $h(x)$ is the unit step function: $h(x) = 1$ for $x \ge 0$, $0$ otherwise */
    \\ /* $\tilde F_{k-1}(\cdot)$ is a $\bar K^{k-1}$-approximation of the CDF of $\sum_{j=1}^{k-1} c_j X_j$ */
    \STATE Define $\tilde{\varphi}(z) := \sum_{i=1}^m a_i F_k(\frac{z - b_i}{c_k})$ \quad /* $F_k(\cdot)$ is the CDF of $X_k$ */
    \STATE $W_k \leftarrow$ {\sc CompressInc}$(\tilde{\varphi}, [\sum_{j=1}^k X_j^{\min}, \sum_{j=1}^k X_j^{\max}], \bar{K})$
    \ENDFOR
    \RETURN $W_{\ell}$ as an array of tuples sorted by their first coordinate
  \end{algorithmic}
\end{algorithm}

We define in Alg.~\ref{alg:compressconvolution} a routine {\sc
  CompressConvolution}$(\vec{X}, \vec{c}, K)$ that implements the
constructive proof of Prop.~\ref{prp:convolution}. The routine takes
as input a vector of random variables $\vec{X}$ and returns a
canonical representation of a $K$-approximation oracle function for
$\varphi(z) := \Pr(\vec{c} \cdot \vec{X} \le z)$. The query time of
the oracle is $O\big(\log (\frac{\ell}{\eps} \log
\frac{1}{\gamma})\big)$.

We can now assume that we have approximate access to the CDF of the
random variables defined by $\vec{\theta}^D \cdot \vD_t$ and
$\vec{\sigma}^D \cdot \vD_t$ that appear in the cost functions, as
detailed in Condition~\ref{con:functions}. We then need to compute the
expectation of a composition of functions using the approximate
CDF. If $\vec{\theta}^D \cdot \vD_t$ and $\vec{\sigma}^D \cdot \vD_t$
have a discrete distribution, i.e.\ case (ii) of
Condition~\ref{con:implicit}, we can rely on
\cite[Prop.~6.8]{halman17sample}. The rest of this section details
how to efficiently compute expectations involving continuous r.v.s,
thereby taking care of case (i) of Condition~\ref{con:implicit}.

Our next result shows that we can efficiently compute an expected
value of a composition of functions involving a continuous
r.v. Essentially, the proposition is the analog of
\cite[Prop.~6.8]{halman17sample} for continuous r.v.s, but the proof
technique has some significant differences, because for continuous
r.v.s we do not know how to write an easy-to-evaluate closed-form
expression for the expected value.

\begin{proposition} \label{prp:implicitexp}
Let $\xi:[A,B] \to \R^+$ be a (not necessarily monotone) convex
function. Let $K_1,K_2 \geq 1$. Let $\psi:[A,B] \to \R^+$ be an
increasing piecewise linear convex function with breakpoints $A =
a_1<\ldots<a_n <B$ and slopes $0=\Delta_0 \leq \Delta_1 < \ldots <
\Delta_n$ that $K_1$-approximates $\xi$. Let $D$ be a (not necessarily
nonnegative) truncated continuous r.v.\ with support $[D^{\min},
  D^{\max}]$ and such that $\min\{\Pr(D = D^{\min}), \Pr(D =
D^{\max})\} > 0$. Suppose $D$ admits a (generalized) PDF $F'$, and let
$F$ be the corresponding CDF. Let $\tilde F(\cdot)$ be a monotone
$K_2$-approximation of $F$ with breakpoints at $D^{\min} = d_1 <
\ldots < d_m = D^{\max}$, and let $\tilde F(d_0)=0$. Let
$f(x,d)=bx+e-d$ for some given $b,e \in \R$, and $m_i(x) = \max \{j
\;|\; d_j \leq bx+e-a_i\}$. Then
\begin{equation}
  \tilde \xi(x) =
\psi(A)+\sum_{i=1}^n (\Delta_i-\Delta_{i-1}) \left(
(bx+e-d_{m_i(x)}-a_i) \tilde F(d_{m_i(x)+1})+ \sum_{k=1}^{m_i(x)-1}
(d_{k+1}-d_k) \tilde F(d_{k+1}) \right)
\label{eq:tildexi}
\end{equation}
is a convex piecewise linear $K_1K_2$-approximation function of
$\mathbb{E}_D(\xi(f(x,D)))$ with $O(mn)$ pieces.  Furthermore, one can
construct in $O(mn\log n)$ time an oracle for $\tilde \xi(\cdot)$ in
the form of either a canonical representation, or a representation
that consists of breakpoints and slopes, each of size $O(mn)$.
\end{proposition}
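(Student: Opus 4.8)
The plan is to split the required $K_1K_2$-approximation into two stages. In the first stage I replace $\xi$ by $\psi$: since by Def.~\ref{def:apxfun} we have $\xi(y)\le\psi(y)\le K_1\xi(y)$ for all $y\in[A,B]$, and (in the DP setting in which this proposition is used) $f(x,D)=bx+e-D$ takes values in $[A,B]$ almost surely, monotonicity of the expectation gives $\mathbb{E}_D[\xi(f(x,D))]\le \mathbb{E}_D[\psi(f(x,D))]\le K_1\mathbb{E}_D[\xi(f(x,D))]$. Thus it suffices to exhibit a convex piecewise linear $K_2$-approximation of $h(x):=\mathbb{E}_D[\psi(bx+e-D)]$, after which Prop.~\ref{prp:CAF}(\ref{item:CAFapx}) (or direct chaining of the inequalities) finishes the proof. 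Writing $\psi$ in slope form, $\psi(y)=\psi(A)+\sum_{i=1}^n(\Delta_i-\Delta_{i-1})(y-a_i)^+$ (valid since $a_1=A$ and $\Delta_0=0$), linearity of expectation gives $h(x)=\psi(A)+\sum_{i=1}^n(\Delta_i-\Delta_{i-1})\,\mathbb{E}_D[(bx+e-a_i-D)^+]$.

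The second stage evaluates $\mathbb{E}_D[(c-D)^+]$ and then swaps the exact CDF for $\tilde F$. From the identity $(c-D)^+=\int_{-\infty}^{c}\mathbf{1}[D\le t]\,\diff t$ and Tonelli's theorem, $\mathbb{E}_D[(c-D)^+]=\int_{-\infty}^{c}F(t)\,\diff t=\int_{d_1}^{c}F(t)\,\diff t$ (the last step because $D$ is truncated, so $F\equiv0$ below $d_1=D^{\min}$); this identity holds for any distribution, so the point masses at the endpoints of $D$'s support cause no difficulty. Substituting $c=bx+e-a_i$ and replacing $F$ by its step-function monotone extension $\tilde F$ (equal to $\tilde F(d_{k+1})$ on $(d_k,d_{k+1}]$, extended by $0$ below $d_1$ and by $\tilde F(d_m)=1$ above $d_m$), the integral of the step function telescopes into precisely the bracket of \eqref{eq:tildexi}: the full cells $\sum_{k=1}^{m_i(x)-1}(d_{k+1}-d_k)\tilde F(d_{k+1})$ plus the partial cell $(bx+e-d_{m_i(x)}-a_i)\tilde F(d_{m_i(x)+1})$, with $m_i(x)=\max\{j:d_j\le bx+e-a_i\}$ locating the last full cell. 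Hence $\tilde\xi(x)=\psi(A)+\sum_{i=1}^n(\Delta_i-\Delta_{i-1})\int_{d_1}^{bx+e-a_i}\tilde F$, and since $F\le\tilde F\le K_2F$ pointwise with all coefficients $\Delta_i-\Delta_{i-1}\ge0$, integration preserves these inequalities and, with $\psi(A)\le K_2\psi(A)$, yields $h(x)\le\tilde\xi(x)\le K_2h(x)$; combined with the first stage this is the asserted $K_1K_2$-approximation. Arguments $bx+e-a_i$ falling outside $[d_1,d_m]$ are absorbed by the extension conventions on $\tilde F$, where the corresponding term becomes $0$ or the obvious linear continuation, still matching $\mathbb{E}_D[(c-D)^+]$.

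For the structural part, $c\mapsto\int_{d_1}^{c}\tilde F$ is nondecreasing, convex, and piecewise linear with breakpoints at $c=d_1,\dots,d_m$ (its derivative $\tilde F$ is nonnegative and nondecreasing); precomposing with the affine map $x\mapsto bx+e-a_i$ preserves convexity, multiplying by $\Delta_i-\Delta_{i-1}\ge0$ preserves it, and a nonnegative combination of convex piecewise linear functions is again convex piecewise linear. The $i$-th summand has breakpoints among the $m$ values $(d_j+a_i-e)/b$ (if $b\ne0$; if $b=0$ then $\tilde\xi$ is constant), so $\tilde\xi$ has $O(mn)$ breakpoints and $O(mn)$ pieces. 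To construct the oracle, write $\tilde\xi=\psi(A)+\sum_{i=1}^n(\Delta_i-\Delta_{i-1})G_i$ with $G_i(x)=\int_{d_1}^{bx+e-a_i}\tilde F$; each $G_i$'s breakpoint list is obtained already sorted (the map $d\mapsto(d+a_i-e)/b$ is monotone and $d_1<\dots<d_m$), together with its per-piece slope $b\,\tilde F(d_{j+1})$, in $O(m)$ time. Merging these $n$ sorted lists (total length $mn$) with a size-$n$ heap costs $O(mn\log n)$, and one left-to-right sweep maintaining the running slope $\sum_i(\Delta_i-\Delta_{i-1})\cdot(\text{current slope of }G_i)$ and running value then emits the canonical representation (equivalently the breakpoint/slope representation), of size $O(mn)$, in $O(mn\log n)$ time; an initial direct evaluation of \eqref{eq:tildexi} at the left endpoint in $O(n\log m)$ time seeds the sweep.

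I expect the main obstacle to be the bookkeeping in the second stage: pinning down the step-function convention for $\tilde F$ and the index $m_i(x)$ so that the telescoped integral is \emph{literally} the expression in \eqref{eq:tildexi}, and checking the degenerate cases (argument below $d_1$ or above $d_m$, coincident breakpoints across different summands, and the borderline $\mathbb{E}_D[\xi(f(x,D))]=0$ case). The probabilistic identity, the sandwich inequalities, and the convexity and running-time bookkeeping are then routine.
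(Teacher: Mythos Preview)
Your proof is correct and arrives at the same formula and bounds as the paper, but the key step is organized differently. The paper sandwiches $\mathbb{E}_D[\psi_i(f(x,D))]$ by constructing two auxiliary \emph{discrete} random variables $\hat D\preceq D\preceq \breve D$ (in the usual stochastic order) supported on $\{d_1,\dots,d_m\}$, computes the corresponding expectations as finite sums over atoms, and then invokes the $K_2$-approximation set property $F(d_{j+1})\le K_2F(d_j)$ to pass from the $\breve D$-sum to the $\hat D$-sum. You bypass the stochastic-ordering detour via the identity $\mathbb{E}_D[(c-D)^+]=\int_{-\infty}^{c}F(t)\,\diff t$ and replace $F$ by $\tilde F$ directly under the integral using the pointwise bounds $F\le\tilde F\le K_2F$; the telescoped step-function integral is then literally the bracket in \eqref{eq:tildexi}. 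Your route is shorter and more transparent about why the formula has the shape it does; the paper's route makes the role of the breakpoints $d_j$ as a discretization of $D$ more explicit. The structural claim (convexity, $O(mn)$ pieces) and the $O(mn\log n)$ construction via merging $n$ sorted breakpoint lists are handled the same way in both.
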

The proof technique involves decomposing
$\mathbb{E}_{D_t}(\xi(f(x,D)))$ as a finite sum involving $\tilde{F}$
for each piece of the piecewise linear function $\varphi$. The error
bound is obtained by constructing ``easy-to-handle'' discrete random
variables $\hat{D}, \check{D}$ such that $\hat{D} \preceq D_t \preceq
\check{D}$ in the usual stochastic order. Since the proof is long and
technical, it is given in appendix \ref{apx:proofimplicitexp}.

As in \cite{halman17sample}, it is easy to see that
Prop.~\ref{prp:implicitexp} can be applied to separable linear
transition functions $f(x,d)=bx+e+cd$ for any coefficient $c$, using a
transformed r.v.\ $D'$ with $D':=D/(-c)$, and a similar result holds
for convex functions approximated by a decreasing piecewise linear
convex function. As a consequence, we obtain a modified version of
\cite[Cor.~6.9]{halman17sample}, stated below.

\begin{proposition} \label{prop:exp}
Let $\xi:[A,B] \to \R^+$ be a convex function. Let $K_1, K_2 \geq
1$. Let $\psi:[A,B] \to \R^+$ be a piecewise linear convex function
with $n$ breakpoints that $K_1$-approximates $\xi$ over $[A,B]$. Let
$D$ be a (not necessarily nonnegative) continuous r.v.\ satisfying the
conditions stated in Prop.~\ref{prp:implicitexp}. Let $\tilde
F(\cdot)$ be a monotone nondecreasing $K_2$-approximation of the CDF
of $D$ with $m$ breakpoints. Let $f(x,D) = b x + e + cd$, with $b, e,
c \in \R$. Then we can construct in $O(mn\log n)$ time a convex
piecewise linear $K_1K_2$-approximation function of
$\mathbb{E}_D(\xi(f(x,D)))$ in the form of either a canonical
representation, or a representation that consists of breakpoints and
slopes, each of size $O(mn)$.
\end{proposition}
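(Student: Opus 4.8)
The plan is to reduce Proposition~\ref{prop:exp} to Proposition~\ref{prp:implicitexp}, handling the two features that Proposition~\ref{prp:implicitexp} does not directly cover: the coefficient $c$ multiplying $D$ in the transition function (as opposed to just $f(x,d)=bx+e-d$), and the fact that the approximating function $\psi$ is merely piecewise linear convex rather than assumed to be \emph{increasing} with the specific slope ordering $0=\Delta_0\le\Delta_1<\dots<\Delta_n$. The first of these is addressed exactly as remarked in the text following Proposition~\ref{prp:implicitexp}: introduce the transformed random variable $D' := D/(-c)$, so that $f(x,D) = bx + e + cD = bx + e - D'$, and apply Proposition~\ref{prp:implicitexp} with $D'$ in place of $D$. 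One must check that $D'$ still satisfies the hypotheses: it is a truncated continuous r.v.\ with support obtained by scaling (and, if $c>0$, reflecting) $[D^{\min},D^{\max}]$; its endpoints still carry positive probability mass since scaling preserves point masses; and a $K_2$-approximation $\tilde F$ of the CDF of $D$ transforms into a $K_2$-approximation of the CDF of $D'$ by the corresponding change of variable, preserving the number of breakpoints $m$. (If $c = 0$ the statement is trivial, as $\xi(bx+e)$ is already a convex function of $x$ with no randomness, so we may assume $c\neq0$.)

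Second, I would reduce a general piecewise linear convex $\psi$ with $n$ breakpoints to one that is nondecreasing with the slope structure required by Proposition~\ref{prp:implicitexp}. A convex piecewise linear function on $[A,B]$ splits at its minimizer $x^*$ into a nonincreasing piece on $[A,x^*]$ and a nondecreasing piece on $[x^*,B]$; each piece has at most $n+1$ breakpoints. On the nondecreasing piece the slopes are automatically ordered $0 \le \Delta_1 < \dots$, after shifting so that the minimum slope is treated as $\Delta_0 = 0$ — but note that $\psi$ has a strictly positive minimum value (since it $K_1$-approximates a positive $\xi$), so one can write $\psi = \psi(x^*) + (\text{nonneg.\ increasing PL convex part}) + (\text{nonneg.\ decreasing PL convex part})$, with the constant $\psi(x^*) > 0$. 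Apply Proposition~\ref{prp:implicitexp} to the increasing part and the analogous ``decreasing'' version (promised in the remark after Prop.~\ref{prp:implicitexp}) to the decreasing part, obtaining $K_1K_2$-approximations of $\mathbb{E}_D(\xi_{\mathrm{inc}}(f(x,D)))$ and $\mathbb{E}_D(\xi_{\mathrm{dec}}(f(x,D)))$. By linearity of approximation and summation of approximations (Prop.~\ref{prp:CAF}, items~\ref{item:CAFlin} and~\ref{item:CAFsum}), the sum of these two plus the constant $\psi(x^*)$ is a $K_1K_2$-approximation of $\mathbb{E}_D(\xi(f(x,D)))$, and it is convex piecewise linear with $O(mn)$ pieces since each summand has $O(mn)$ pieces.

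For the representation and running time, Proposition~\ref{prp:implicitexp} already delivers each of the two summands as either a canonical representation or a breakpoint/slope list of size $O(mn)$, constructible in $O(mn\log n)$ time. Adding a constant and summing two piecewise linear convex functions with $O(mn)$ breakpoints each can be done by merging the sorted breakpoint lists in $O(mn)$ time; converting between the two representations, if needed, costs another $O(mn\log n)$. So the overall construction is $O(mn\log n)$ time, matching the claimed bound. The main obstacle — or rather the only point requiring genuine care rather than bookkeeping — is verifying that the change of variable $D \mapsto D' = D/(-c)$ interacts correctly with the Lipschitz and positive-point-mass hypotheses and with the approximation $\tilde F$: a $\kappa$-Lipschitz CDF of $D$ yields a $(\kappa/|c|)$-Lipschitz CDF of $D'$, and one must track which of the endpoints of the support maps to $D'^{\min}$ versus $D'^{\max}$ depending on the sign of $c$, but none of this affects the stated complexity since that bound is expressed purely in terms of $m$ and $n$. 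Everything else is a direct invocation of the cited building blocks.
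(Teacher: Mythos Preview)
Your proposal is correct and follows essentially the same approach as the paper: reduce the coefficient $c$ to $-1$ via the substitution $D':=D/(-c)$ (exactly as in the remark preceding Prop.~\ref{prop:exp}), then split $\psi$ at its minimizer into a nondecreasing and a nonincreasing part and apply Prop.~\ref{prp:implicitexp} (and its decreasing analogue) to each. The only nit is that your notation $\xi_{\mathrm{inc}},\xi_{\mathrm{dec}}$ is slightly off---the monotone pieces of $\psi$ do not themselves $K_1$-approximate well-defined pieces of $\xi$---but since the construction of $\tilde\xi$ in Prop.~\ref{prp:implicitexp} only uses $\psi$, and the $K_1$ factor enters through $\xi\le\psi\le K_1\xi$ applied to the \emph{sum}, this is harmless bookkeeping rather than a gap.
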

The proof of Prop.~\ref{prop:exp} follows from an application of
Prop.~\ref{prp:implicitexp} twice, to the nondecreasing and
nonincreasing part of $\psi$. In Alg.~\ref{alg:compressexpval} we
define a routine {\sc CompressExpVal}$(\psi, (b,e,c), \tilde{F})$ that
takes as input $\psi, b, e, c, \tilde{F}$ as defined in
Prop.~\ref{prop:exp}, and returns an oracle for a
$K_1K_2$-approximation function of $\mathbb{E}_D(\xi(f(\cdot,D)))$ in
the form of a canonical representation. The query time for the oracle
is $O(\log mn)$. {\sc CompressExpVal} uses one subroutine for the
increasing part of the convex function $\psi$, called {\sc
  CompressExpValInc} and defined in Alg.~\ref{alg:compressexpvalinc},
and one for the decreasing part, called {\sc CompressExpValDec}. The
pseudocode for {\sc CompressExpValDec} is the same as {\sc
  CompressExpValInc} after ``mirroring'' the function $\psi$ given as
the first argument to transform it from decreasing to increasing.

\begin{algorithm}[tb!]
  \caption{Function {\sc CompressExpVal}$(\psi, (b, e, c), \tilde{F})$.}  \label{alg:compressexpval}
  \begin{algorithmic}[1]
    \small
    \STATE Let $[A, B]$ be the domain of $\psi$ and $x^*$ one of its minimizers
    \STATE Define $\psi_1$ as the restriction of $\psi$ to $[A, x^*]$, $\psi_2$ as the restriction of $\psi$ to $[x^*, B]$
    \STATE $W_1 \leftarrow$ {\sc CompressExpValDec}$(\psi_1, (b, e, c), \tilde{F})$
    \STATE $W_2 \leftarrow$ {\sc CompressExpValInc}$(\psi_2, (b, e, c), \tilde{F})$
    \RETURN $W_1 \cup W_2$ as an array of tuples sorted by their first coordinate
  \end{algorithmic}
\end{algorithm}

\begin{algorithm}[tb!]
  \caption{Function {\sc CompressExpValInc}$(\psi, (b, e, c), \tilde{F})$.}  \label{alg:compressexpvalinc}
  \begin{algorithmic}[1]
    \small
    \STATE Using the approximation set for r.v.~$D$ inducing $\tilde{F}$, compute an approximation set $W_{D'}$ for the transformed random variable $D' = -D/c$
    \STATE Let $\tilde{F}'$ be the monotone extension of $W_{D'}$, $d_1,\dots,d_m$ be the breakpoints of $W_{D'}$
    \STATE Let $a_1,\dots,a_n$ be the breakpoints of $\psi$, $\Delta_0,\dots,\Delta_n$ its slopes
    \STATE $S \leftarrow \text{sort}\left(\bigcup_{i=1}^n \{d_j -e + a_i : j = 1,\dots,m\}\right)$
    \STATE Compute the $m-1$ partial sums $\sum_{k=1}^{j}(d_{k+1}-d_k) \tilde F'(d_{k+1})$ for $j=1,\dots,m-1$
    \STATE $W \leftarrow \{(x, \tilde{\xi}(x) : x \in S\}$, where $\tilde{\xi}$ is defined in \eqref{eq:tildexi}
    \STATE Loop over the elements of $W$ to eliminate redundant breakpoints where the slope does not change
    \RETURN $W$ as an array of tuples sorted by their first coordinate
  \end{algorithmic}
\end{algorithm}

\section{The approximation scheme}
\label{s:scheme}
We now show how to approximate a DP that satisfies Conditions
\ref{con:sets}-\ref{con:functions}. We give a full proof of the
correctness of the algorithm for the simpler Condition
\ref{con:functions}(i), according to which the cost functions are
known explicitly. The FPTAS under Condition \ref{con:functions}(ii) is
similar, and its proof only highlights the differencs from Condition
\ref{con:functions}(i).

\subsection{Condition~\ref{con:functions}(i): explicit functions}
\label{s:schemei}

Recall that by Condition \ref{con:functions}(i) we have $g_t(I_t,
\vx_t, \vD_t) = g_t^I(I_t,\vx_t) + g_t^D(f^g_t(I_t,\vx_t,\vD_t))$; to
avoid additional notation, we will not specify the expression of
$g_t^I$ and $g_t^D$ in terms of breakpoints and slopes, but we will
exploit their structure to construct an FPTAS. (Hereafter, given any
function $f$ we use the notation $f(\cdot)$ where ``$\cdot$'' stands
for the varying argument of $f$.)

\begin{proposition}
  \label{prop:vflp}
  Suppose the DP formulation \eqref{eq:dp} satisfies Conditions
  \ref{con:sets}-\ref{con:functions}. Let $K_1, K_2, K_3 > 1$. Let
  $\hat{z}_{t+1}$ be a piecewise linear convex $K_1$-approximation of
  the value function $z_{t+1}$. For a given $I_t \in S_t$, let
  $\tilde{Z}_{t+1}(\cdot)$ be a piecewise linear convex
  $K_2$-approximation of $\mathbb{E}[\hat{z}_{t+1}(\cdot + \theta^I
    I_t + \vec{\theta}^D \cdot \vD_t)]$, and $\tilde{G}_t^D(\cdot)$ be
  a piecewise linear convex $K_3$-approximation of
  $\mathbb{E}[g_t^D(\cdot + \sigma^I I_t + \vec{\sigma}^D \cdot
    \vD_t)]$. Then the value of the following mathematical program:
  \begin{equation}
    \left.
    \begin{array}{rrcl}
      \displaystyle \bar{z}_t(I_t) := \min_{\vx_t} & g_t^I(I_t, \vx_t) +
      \tilde{G}_t^D(w) + \tilde{Z}_{t+1}(y) & & \\
      & A_t \vx_t &\ge& \vec{b}_t + \vec{\delta}_{b_t} I_t \\
      & \vec{\sigma}^x \cdot \vx_t - w &=& 0 \\
      & \vec{\theta}^x \cdot \vx_t - y &=& 0 \\
      &\vx_t &\ge& 0,
    \end{array}
    \right\}
    \label{eq:vflp}
  \end{equation}
  is a $\max \{K_1K_2, K_3\}$-approximation of the optimal value
  function $z_t(I_t)$. \eqref{eq:vflp} can be cast as an LP using the
  standard $\min\max$ reformulation of piecewise linear convex
  functions.
\end{proposition}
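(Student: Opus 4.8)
The plan is to rewrite $z_t(I_t)$ in exactly the form optimized by \eqref{eq:vflp} and then propagate approximation factors through the calculus of $K$-approximation functions (Prop.~\ref{prp:CAF}). First I would expand the Bellman recursion \eqref{eq:vf} using the decomposition $g_t(I_t,\vx_t,\vD_t)=g_t^I(I_t,\vx_t)+g_t^D(f^g_t(I_t,\vx_t,\vD_t))$ and the affine expressions for $f_t$ and $f^g_t$ from Condition~\ref{con:functions}. Since $g_t^I(I_t,\vx_t)$ does not depend on $\vD_t$, and using linearity of expectation to split the two stochastic terms (which is legitimate even though the same $\vD_t$ appears in both), one obtains
\[
z_t(I_t)=\min_{\vx_t\in\A_t(I_t)}\bigl\{\, g_t^I(I_t,\vx_t)+G_t^D(\vec{\sigma}^x\cdot\vx_t)+Z_{t+1}(\vec{\theta}^x\cdot\vx_t)\,\bigr\},
\]
where $G_t^D(w):=\mathbb{E}[g_t^D(w+\sigma^I I_t+\vec{\sigma}^D\cdot\vD_t)]$ and $Z_{t+1}(y):=\mathbb{E}[z_{t+1}(y+\theta^I I_t+\vec{\theta}^D\cdot\vD_t)]$. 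After the substitution $w=\vec{\sigma}^x\cdot\vx_t$, $y=\vec{\theta}^x\cdot\vx_t$ used in \eqref{eq:vflp}, this is structurally identical to the program \eqref{eq:vflp} with $G_t^D,Z_{t+1}$ in place of $\tilde G_t^D,\tilde Z_{t+1}$.

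Next I would pin down, pointwise in $\vx_t$, the approximation factor of the objective of \eqref{eq:vflp} relative to this exact objective. By hypothesis $\tilde G_t^D$ is a $K_3$-approximation of $G_t^D$. For the value-function term, $\hat z_{t+1}$ is a $K_1$-approximation of $z_{t+1}$, so for every fixed realization of $\vD_t$ the composition $\hat z_{t+1}(\cdot+\theta^I I_t+\vec{\theta}^D\cdot\vD_t)$ is a $K_1$-approximation of $z_{t+1}(\cdot+\theta^I I_t+\vec{\theta}^D\cdot\vD_t)$ by Prop.~\ref{prp:CAF}(4); since the expectation is a positive linear functional it preserves the two-sided inequality $\varphi\le\tilde\varphi\le K_1\varphi$, so $\mathbb{E}[\hat z_{t+1}(\cdot+\theta^I I_t+\vec{\theta}^D\cdot\vD_t)]$ is a $K_1$-approximation of $Z_{t+1}$. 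As $\tilde Z_{t+1}$ is a $K_2$-approximation of the former, Prop.~\ref{prp:CAF}(7) gives that $\tilde Z_{t+1}$ is a $K_1K_2$-approximation of $Z_{t+1}$. Now fixing any feasible $\vx_t$, composing with the linear maps $\vx_t\mapsto\vec{\sigma}^x\cdot\vx_t$ and $\vx_t\mapsto\vec{\theta}^x\cdot\vx_t$ leaves the factors unchanged (Prop.~\ref{prp:CAF}(4)), $g_t^I$ is a $1$-approximation of itself (Prop.~\ref{prp:CAF}(1)), and summing the three terms via Prop.~\ref{prp:CAF}(3) shows that the objective of \eqref{eq:vflp} is a $\max\{1,K_3,K_1K_2\}=\max\{K_1K_2,K_3\}$-approximation of the exact objective at that $\vx_t$ (the equality uses $K_1,K_2,K_3>1$).

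Finally I would pass to the minimum over $\vx_t\in\A_t(I_t)$. The argument behind Prop.~\ref{prp:CAF}(5) carries over verbatim from the minimum of two functions to the minimum over a set: the bound $\bar z_t(I_t)\ge z_t(I_t)$ is immediate from the pointwise lower bound, and $\bar z_t(I_t)\le\max\{K_1K_2,K_3\}\, z_t(I_t)$ follows by evaluating the approximate objective at a minimizer of the exact problem. (All functions involved are nonnegative by Condition~\ref{con:functions}, which also forces $z_{t+1}\ge 0$, so the $K$-approximation notions are well defined; the minimum is attained because the feasible $\vx_t$ keep $f_t(I_t,\vx_t,\vD_t)$ inside the compact interval $\S_{t+1}$ for every realization of $\vD_t$, so $w$ and $y$ range over bounded intervals and the convex objective is effectively minimized over a compact set.) This proves $z_t(I_t)\le\bar z_t(I_t)\le\max\{K_1K_2,K_3\}\, z_t(I_t)$. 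The LP reformulation is routine: the constraints of \eqref{eq:vflp} are linear, and each of $g_t^I$, $\tilde G_t^D$, $\tilde Z_{t+1}$ is piecewise linear convex, so the standard epigraph ($\min\max$) reformulation introduces one auxiliary scalar per function and one linear inequality per affine piece.

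I expect the only genuinely delicate point to be the interchange in the second paragraph --- that the $K_1$-approximation property survives the expectation operator --- since Prop.~\ref{prp:CAF} is stated only for finite combinations of functions; but this is a one-line consequence of positive linearity (monotonicity) of the expectation and brings in no new machinery. Everything else is bookkeeping with the calculus of $K$-approximation functions.
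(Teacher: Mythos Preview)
Your proposal is correct and follows essentially the same approach as the paper: expand the Bellman recursion via linearity of expectation, identify that \eqref{eq:vflp} is the exact value-function expression with two terms replaced by approximations, and derive the $\max\{K_1K_2,K_3\}$ factor from Prop.~\ref{prp:CAF} (summation of approximation and approximation of approximation). The paper's proof is a terse two-sentence version of what you wrote; you have usefully made explicit the intermediate step that the $K_1$-approximation property survives the expectation operator and that minimization over a set (not just two functions) preserves the factor, both of which the paper leaves implicit.
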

\begin{proof}
  The expression for the optimal value function is:
  $$z_t(I_t)=\min_{\vx_t \in \A_t(I_t)} \mathbb{E} \{
  g_t(I_t,\vx_t,\vD_t) \allowbreak + z_{t+1}(f_t(I_t,\vx_t,\vD_t))
  \}.$$ By linearity of the expectation and by definition of $g_t^I,
  \tilde{G}_t^D, \tilde{Z}_{t+1}, f_t, f_t^g$ and $A_t(I_t)$, it is
  straightforward to check that \eqref{eq:vflp} corresponds exactly to
  the expression of the value function, with two of the terms replaced
  by approximations. The fact that the resulting value is a $\max
  \{K_1K_2, K_3\}$ approximation of $z_t(I_t)$ follows from
  Prop.~\ref{prp:CAF} (summation of approximation and approximation of
  approximation).
\end{proof}

\begin{algorithm}[tb!]
  \caption{Procedure {\sc ApxScheme1}$(\epsilon)$ for Condition~\ref{con:functions}(i)}
  \label{alg:fptas1}
  \begin{algorithmic}[1]
    \small
    \STATE $K \leftarrow \sqrt[2T]{1+\epsilon}, \; \hat{z}_{T+1} \leftarrow g_{T+1}$
    \FOR{$t:=T$ {\bf downto} 1}
    \STATE \label{step:convolution1} $\tilde{F}_g \leftarrow$ {\sc
  CompressConvolution}$(\vD_t, \vec{\sigma}^D, K)$
    \STATE \label{step:convolution2} $\tilde{F}_z \leftarrow$ {\sc
  CompressConvolution}$(\vD_t, \vec{\theta}^D, K)$
    \STATE \label{step:gtilde} For fixed $I_t$, define $\tilde{G}_t^D \leftarrow$ {\sc CompressExpVal}$(g_t^D, (1, \sigma^I I_t, 1) , \tilde{F}_g)$ \\ /* $\tilde G_t^D(\cdot)$ is an oracle for a $K$-approximation of $\mathbb{E}[g^D_t(\cdot+\sigma^I I_t+\vec{\sigma}^D \cdot \vD_t)]$ */
    \STATE \label{step:ztilde} For fixed $I_t$, define $ \tilde{Z}_{t+1} \leftarrow$ {\sc CompressExpVal}$(\hat{z}_{t+1}, (1, \theta^I I_t, 1), \tilde{F}_z)$ \\ /* $\tilde Z_{t+1}(\cdot)$ is an oracle for a $K$-approximation of $\mathbb{E}[\hat{z}_{t+1}(\cdot+\theta^I I_t+\vec{\theta}^D \cdot \vD_t)]$ */
    \STATE \label{step:zhat} $\hat{z}_t \leftarrow$ {\sc ScaledCompressConv}$(\bar{z}_t, [\S_{t}^{\min}, \S_{t}^{\max}], K)$, where $\bar{z}_t$ is defined as in \eqref{eq:vflp}
    \ENDFOR \RETURN $\hat{z}_1$
  \end{algorithmic}
\end{algorithm}


We are now ready to state our main result. The proof consists of an
application of the different results already stated in this paper,
being careful to keep track of the runtime of the algorithm.

\begin{theorem} \label{thm:FPTAS1}
  Given a stochastic DP satisfying Conditions
  \ref{con:sets}-\ref{con:functions}(i), {\sc ApxScheme1}($\epsilon$)
  (Algorithm~\ref{alg:fptas1}) computes a $(1+\epsilon)$-approximation
  of the optimal value function $z_1$, and runs in polynomial time in
  the binary input size and $1/\epsilon$.
\end{theorem}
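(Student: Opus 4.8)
\emph{Proof proposal.} The plan is to prove correctness and polynomial running time of Alg.~\ref{alg:fptas1} separately. For \textbf{correctness}, I would argue by backward induction on $t=T+1,T,\dots,1$ the invariant that the function $\hat z_t$ computed by the algorithm is a $K^{2(T+1-t)}$-approximation of the exact value function $z_t$ of \eqref{eq:vf}. The base case $t=T+1$ is immediate because $\hat z_{T+1}=g_{T+1}=z_{T+1}$. For the inductive step, assume $\hat z_{t+1}$ is a $c_{t+1}$-approximation of $z_{t+1}$ with $c_{t+1}=K^{2(T-t)}$. By Prop.~\ref{prp:convolution} applied (via {\sc CompressConvolution}, its weighted version) to $\vec{\sigma}^D\cdot\vD_t$ and $\vec{\theta}^D\cdot\vD_t$, whose components satisfy Condition~\ref{con:implicit}, the outputs $\tilde F_g,\tilde F_z$ of steps 3--4 are $K$-approximations of the CDFs of those scalar r.v.s. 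Feeding them into {\sc CompressExpVal} together with the exact $g_t^D$ and with $\hat z_{t+1}$, Prop.~\ref{prop:exp} yields that $\tilde G_t^D(\cdot)$ is a piecewise linear convex $K$-approximation of $\mathbb E[g_t^D(\cdot+\sigma^I I_t+\vec{\sigma}^D\cdot\vD_t)]$ (the exact $g_t^D$ contributes factor $1$, the approximate CDF contributes $K$) and $\tilde Z_{t+1}(\cdot)$ is a piecewise linear convex $K$-approximation of $\mathbb E[\hat z_{t+1}(\cdot+\theta^I I_t+\vec{\theta}^D\cdot\vD_t)]$. Prop.~\ref{prop:vflp} with $K_1=c_{t+1}$, $K_2=K$, $K_3=K$ then gives that the LP value $\bar z_t(I_t)$ in \eqref{eq:vflp} is a $\max\{c_{t+1}K,K\}=c_{t+1}K$-approximation of $z_t(I_t)$ for each $I_t\in\S_t$; and Thm.~\ref{thm:int_apx_set} shows that $\hat z_t$, the output of {\sc ScaledCompressConv} on $\bar z_t$, is a $K$-approximation of $\bar z_t$. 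Composing (Prop.~\ref{prp:CAF}(\ref{item:CAFapx})) gives $\hat z_t$ a $c_{t+1}K^2=K^{2(T+1-t)}$-approximation of $z_t$, closing the induction; at $t=1$ the factor is $K^{2T}=1+\eps$ by the choice of $K$.

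Along this induction I would check the \textbf{structural hypotheses} that each invoked result needs. Positivity: since $g_{T+1}^{\min}>0$ and the $g_t$ are nonnegative, \eqref{eq:vf} gives $z_t\ge g_{T+1}^{\min}>0$, hence also $\bar z_t\ge z_t>0$, so {\sc ScaledCompressConv} is applicable. Convexity and piecewise linearity of $\bar z_t$ as a function of $I_t$: substituting $w'=\vec{\sigma}^x\cdot\vx_t+\sigma^I I_t$ and $y'=\vec{\theta}^x\cdot\vx_t+\theta^I I_t$ turns \eqref{eq:vflp} into a parametric LP whose objective $g_t^I+\tilde G_t^D(w')+\tilde Z_{t+1}(y')$ is a fixed piecewise linear convex function and in which $I_t$ appears only affinely in the constraints, so $\bar z_t$ is the value function of such a parametric LP, hence convex and piecewise linear, and a minimizer of $\bar z_t$ over $\S_t$ (needed by {\sc ScaledCompressConv}) is obtained by solving that LP with $I_t$ adjoined as a variable. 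Finally $\bar z_t$ is Lipschitz on the bounded interval $\S_t$ with constant bounded by a polynomial function of the input parameters ($U_A,U_f,\kappa,m^*,q^*,\dots$), which enters only polylogarithmically in the running time.

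For the \textbf{running time}, I would bound the size of every object produced in one pass of the loop and then multiply by $T$. Using $1/\log K=O(T/\eps)$ and $z_t^{\max}/z_t^{\min}\le (T+1)U_g/\mu$, the functions $\hat z_t,\bar z_t$ have $O(\log_K\frac{(T+1)U_g}{\mu})$ breakpoints; $\tilde F_g,\tilde F_z$ have $O(\frac{\ell}{\log K}\log\frac1\gamma)$ breakpoints (Prop.~\ref{prp:convolution}); hence $\tilde G_t^D,\tilde Z_{t+1}$ have polynomially many pieces by Prop.~\ref{prop:exp}. Consequently the LP \eqref{eq:vflp}, after the standard $\min$-$\max$ reformulation of its piecewise linear terms, has numbers of variables and constraints polynomial in $p,m^*,q^*,\ell,T,1/\eps,\log\frac{U_g}{\mu},\log\frac1\gamma$, and --- crucially --- by Thm.~\ref{thm:int_apx_set} the canonical representation of $\hat z_t$ uses numbers of size polynomial in the original input and $\log\frac1\eps$, \emph{independently of $t$}; since this representation is the only data passed from stage $t$ to stage $t-1$, the LP coefficient sizes never grow across stages. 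Each LP is therefore solved in polynomial time, each call to {\sc CompressConvolution}, {\sc CompressExpVal} and {\sc ScaledCompressConv} runs in polynomial time by its own guarantee, a single evaluation of $\bar z_t$ (constructing $\tilde G_t^D,\tilde Z_{t+1}$ and solving one LP) is polynomial, and so is step 7; summing over the $T$ iterations yields the claimed polynomial bound.

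The \textbf{main obstacle} is precisely this last point --- number-size control across the $T$ recursive stages. An LP can inflate the bit length of its optimal solution and value polynomially, so storing an arbitrary $K$-approximation set of $\bar z_t$ would, after $T$ compositions, give numbers of exponential size. Theorem~\ref{thm:int_apx_set} (i.e.\ {\sc ScaledCompressConv}) is designed to avoid this by producing a $K$-approximation set whose domain and codomain values have size bounded by a fixed polynomial of the \emph{original} input, independent of the stage; the real work in the running-time argument is to thread this invariant carefully through the loop and then routinely compose the per-subroutine time bounds. The correctness part, by comparison, is a mechanical application of the calculus of $K$-approximation functions once the structural hypotheses above have been verified.
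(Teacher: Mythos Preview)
Your proposal is correct and follows essentially the same approach as the paper: backward induction with invariant that $\hat z_t$ is a $K^{2(T+1-t)}$-approximation of $z_t$, obtained by chaining Prop.~\ref{prp:convolution}, Prop.~\ref{prop:exp}, Prop.~\ref{prop:vflp} (with $K_1=K^{2(T-t)},K_2=K_3=K$), and Thm.~\ref{thm:int_apx_set} via Prop.~\ref{prp:CAF}(\ref{item:CAFapx}), together with a polynomial runtime analysis hinging on the bounded number sizes guaranteed by {\sc ScaledCompressConv}. Your treatment of the structural hypotheses (positivity of $\bar z_t$, convexity in $I_t$ via the parametric LP, computability of the minimizer by a single LP) is if anything more explicit than the paper's, which largely takes these for granted.
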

\begin{proof}
  The proof proceeds by induction for $t=T+1,\dots,1$, showing that at
  stage $t$ we obtain a piecewise linear convex
  $K^{2(T+1-t)}$-approximation $\hat{z}_t$ of the value function $z_t$
  via an approximation set of cardinality $O(\log_K
  \frac{(T+2-t)U_g}{g^{\min}_{T+1}}) = O(\frac{T}{\epsilon} \log
  \frac{(T+2-t)U_g}{g^{\min}_{T+1}})$ since $K =
  \sqrt[2T]{1+\epsilon}$. The argument of the $\log$ follows from the
  fact that the largest value of the value function at stage $t$ is
  bounded above by $(T+2-t)U_g$, and its smallest value is bounded
  below by $g^{\min}_{T+1}$. The step $t=T+1$ is trivial, as we have
  $\hat{z}_{T+1} = g_{T+1}$, which is exactly the value function at
  the terminal stage, and we are given its piecewise linear convex
  description in the input. We now show the induction step.

  At step \ref{step:convolution1}, using the
  routine {\sc CompressConvolution}, we compute a $K$-approximation
  $\tilde{F}_g$ of the CDF of $\vec{\sigma}^D \cdot \vD_t$ in the form of a canonical representation. By Prop.~\ref{prp:convolution},
  the cardinality of the canonical representation is $O(\frac{T
    \ell}{\epsilon} \log \frac{1}{\gamma})$, and this step requires
  $O(\frac{t_F T \ell^3}{\epsilon^2} \log^2 \frac{1}{\gamma} \log
  (\kappa \ell n^* U_f))$ time. Step \ref{step:convolution2} is
  similar and computes a $K$-approximation $\tilde{F}_z$ of the CDF of
  $\vec{\theta}^D \cdot \vD_t$.

  At step \ref{step:gtilde}, for a fixed value of $I_t$ we define a
  $K$-approximation $\tilde{G}_t^D(\cdot)$ of
  $\mathbb{E}_{\vD_t}[g_t^D(\cdot + \sigma^I I_t + \vec{\sigma}^D
    \cdot \vD_t)]$, using Prop.~\ref{prop:exp} with parameters set to
  $\xi=\psi=g_t^D, \; n=O(m_t), \; m=O(\frac{T \ell}{\epsilon} \log
  \frac{1}{\gamma}), \; K_1=1, \; K_2=K$. This is possible because by
  Condition~\ref{con:functions}(i) we have a description of $g_t^D$ in
  terms of breakpoints and slopes, which is equivalent to a
  $1$-approximation set of $g_t^D$ with $O(m_t)$ points. We remark
  that no actual computation is involved in this step because we did
  not fix a value of $I_t$ yet (it will be determined later in step
  \ref{step:zhat}), but including step \ref{step:gtilde} in the
  algorithm helps us for the analysis. By Prop.~\ref{prop:exp}, given
  a value for $I_t$, computing $\tilde{G}_t^D$ in the form of a
  canonical representation of size $O(\frac{m_t T \ell}{\epsilon} \log
  \frac{1}{\gamma})$, requires $O(\frac{m_t T \ell}{\epsilon} \log
  \frac{1}{\gamma} \log m_t)$ time.

  At step \ref{step:ztilde}, for a fixed value of $I_t$ we define a
  $K$-approximation $\tilde{Z}_{t+1}(\cdot)$ of $\mathbb{E}[\hat
    z_{t+1}(\cdot + \theta^I I_t + \vec{\theta}^D \cdot \vD_t)]$,
  using Prop.~\ref{prop:exp} with parameters set to $\xi=\psi=\hat
  z_{t+1}, \; n=O(\frac{T}{\epsilon} \log
  \frac{(T+1-t)U_g}{g^{\min}_{T+1}}), \; m=O(\frac{T\ell}{\epsilon}
  \log \frac{1}{\gamma}), \; K_1=1$ and $K_2=K$. Given a value for
  $I_t$, computing $\tilde{Z}_{t+1}$ in the form of a canonical representation of size $O(\frac{T^2
    \ell}{\epsilon^2} \log \frac{(T+1-t)U_g}{g^{\min}_{T+1}}
  \log\frac{1}{\gamma})$, requires $O(\frac{T^2 \ell}{\epsilon^2} \log
  \frac{(T+1-t)U_g}{g^{\min}_{T+1}} \log\frac{1}{\gamma} \log
  \frac{T}{\epsilon} \log \log \frac{(T+1-t)U_g}{g^{\min}_{T+1}})$
  time.  

  The main workhorse of the FPTAS is step \ref{step:zhat}, that
  computes an approximation $\hat z_t$ for the value function $z_t$, using
  the function $\bar{z}_t$ as defined in \eqref{eq:vflp}. Note that by
  Prop.~\ref{prop:vflp} with parameters set to $K_1=K^{2(T-t)}$ and
  $K_2=K_3=K$, $\bar{z}_t$ is $K^{2(T-t)+1}$-approximation of
  $z_t$. The routine {\sc ScaledCompressConv} can be used because a
  minimum of $\bar{z}_t$ can be found exactly with the solution of a
  single LP. Applying Thm.~\ref{thm:int_apx_set} to $\bar{z}_t$,
  coupled with Def.~\ref{def:apxset} and approximation of
  approximation (Prop.~\ref{prp:CAF}(7))) we obtain a
  $K^{2(T+1-t)}$-approximation $\hat z_t$ of $z_t$ in the form of a canonical representation of size
  $O(\frac{T}{\epsilon} \log \frac{(T+2-t)U_g}{g^{\min}_{T+1}})$. This
  concludes the induction claim. It remains to show that the runtime of this step is polynomial. By Thm.~\ref{thm:int_apx_set}, the
  runtime of step \ref{step:zhat} is $O(t_{\bar{z}} \log
  \frac{\kappa_{\bar{z}_t} U_S}{\epsilon g^{\min}_{T+1}} \log_K
  \frac{(T+2-t)U_g}{g^{\min}_{T+1}})$. Note that by \eqref{eq:vflp}
  the Lipschitz constant $\kappa_{\bar{z}_t}$ is bounded above by
  $(3U_f U_A^2)^{(T+1-t)} \kappa$. We now proceed to bound
  $t_{\bar{z}}$, i.e., the time to construct and compute
  \eqref{eq:vflp} at a given value of $I_t$.

  We first analyze the size of the LP \eqref{eq:vflp}. The variables
  $y,w$ used in \eqref{eq:vflp} can be substituted out. Thus,
  \eqref{eq:vflp} has $p + 3$ variables: $p$ variables in the
  description of $\A_t(I_t)$, plus one variable for the minmax
  formulation of each of the convex functions $g_t^I(I_t,\vec{\cdot}),
  \; \tilde{G}_t^D(\cdot)$ and $\tilde{Z}_{t+1}(\cdot)$. The LP has
  $O(m + q_t + \frac{m_t T \ell}{\epsilon} \log \frac{1}{\gamma} +
  \frac{T^2 \ell}{\epsilon^2} \log \frac{(T+1-t)U_g}{g^{\min}_{T+1}}
  \log\frac{1}{\gamma})$ constraints plus $p$ nonnegativity
  constraints: $m$ constraints come from $\A_t(I_t)$, $q_t$ from the
  minmax formulation of the piecewise linear convex cost $g_t^I$, the
  remaining constraints come from the minmax formulation of
  $\tilde{G}_t^D$ and $\tilde{Z}_{t+1}$, whose number of pieces is
  discussed above. The size of the numbers in the LP is bounded by
  $O(\max\{U_A, U_f, \log \frac{1}{\epsilon} + \log
  \frac{(T+2-t)U_g}{g^{\min}_{T+1}}\})$. An LP with these
  characteristics is solved in $O((m + q_t + \frac{m_t T
    \ell}{\epsilon} \log \frac{1}{\gamma} + \frac{T^2
    \ell}{\epsilon^2} \log \frac{(T+1-t)U_g}{g^{\min}_{T+1}} \log
  \frac{1}{\gamma} + p)^{1.5}p^2\max\{U_A, U_f, \log
  \frac{1}{\epsilon} + \log \frac{(T+2-t)U_g}{g^{\min}_{T+1}}\})$
  arithmetic operations, following \cite{renegar88}. In addition, the
  piecewise linear representation for the functions $\tilde{G}_t^D$
  and $\tilde{Z}_{t+1}$ is computed for each fixed $I_t$, which
  requires additional $O(\frac{m_t T \ell}{\epsilon} \log
  \frac{1}{\gamma} \log m_t + \frac{T^2 \ell}{\epsilon^2} \log
  \frac{(T+1-t)U_g}{g^{\min}_{T+1}} \allowbreak \log\frac{1}{\gamma}
  \log \frac{T}{\epsilon} \log \log
  \frac{(T+1-t)U_g}{g^{\min}_{T+1}})$ time.

  Hence, the total runtime to compute a single value of $\bar z_t$ is
  \begin{align}
    t_{\bar z_t}=O\bigg( \Big( \big(m + q_t + \frac{m_t T \ell}{\epsilon}
    \log \frac{1}{\gamma} + \frac{T^2 \ell}{\epsilon^2} \log
    \frac{(T+1-t)U_g}{g^{\min}_{T+1}} \log \frac{1}{\gamma} +
    p\big)^{1.5}p^2 \notag \\ \max\Big\{U_A, U_f, \log \frac{1}{\epsilon} +
    \log \frac{(T+2-t)U_g}{g^{\min}_{T+1}}\Big\}\Big) \label{eq:tbarz} \\
    + \frac{m_t T \ell}{\epsilon} \log \frac{1}{\gamma} \log m_t +
    \frac{T^2 \ell}{\epsilon^2} \log \frac{(T+1-t)U_g}{g^{\min}_{T+1}}
    \log\frac{1}{\gamma} \log \frac{T}{\epsilon} \log \log
    \frac{(T+1-t)U_g}{g^{\min}_{T+1}} \bigg). \notag
  \end{align}

  Step \ref{step:zhat} has to be repeated $T$ times. Adding the time
  to compute $\tilde{F}_z$ and $\tilde{F}_g$, we obtain the total
  runtime of the approximation scheme
  (Algorithm~\ref{alg:fptas1}):
  \begin{equation*}
    O\left( t_{\bar{z}_1}
    \frac{T^2}{\epsilon} \log \frac{\kappa_{\bar z_1} U_S}{\epsilon
      g^{\min}_{T+1}} \log \frac{(T+1)U_g}{g^{\min}_{T+1}} +
    \frac{t_F T^2
      \ell^3}{\epsilon^2} \log^2 \frac{1}{\gamma} \log (\kappa \ell
    n^* U_f) \right),
  \end{equation*}
  where $t_{\bar{z}_0}$ is as in \eqref{eq:tbarz}. If we perform the
  substitution, the dependency on $\epsilon$ is at most
  $\frac{1}{\epsilon^5}\log^2 \frac{1}{\epsilon}$, the dependency on
  $T$ is at most $T^6 \log^3 T$, the dependency on $\ell$ is at most
  $\ell^3$, and the dependency on $p$ is at most $p^{3.5}$. If
  $\kappa, U_A$ and $U_f$ are small enough so that the Lipschitz
  constant $\kappa_{\bar z_1}$ is bounded by a constant, independent
  of the problem instance, then the dependency on $T$ decreases to at
  most $T^5 \log^3 T$.
\end{proof}

\subsection{Condition~\ref{con:functions}(ii): oracle model}
\label{s:schemeii}

We provide an FPTAS via a reduction to the FPTAS for
Condition~\ref{con:functions}(i). To adapt the FPTAS to the case in
which $g_{T+1}, g_t^D$ are Lipschitz continuous univariate convex
functions accessed by value oracles, we want to construct piecewise
linear convex approximations of these functions with bounded error.
In general, relative approximations for Lipschitz continuous
univariate convex functions that are described via value oracles may
not be computable in finite time,
cfr.~\cite[Cor.~2.5]{nannifptascdpfull}. However, approximations that
attain small multiplicative plus additive error (i.e., with both
additive and relative error) can be found in polynomial time
\cite[Prop.~3.7]{nannifptascdpfull}. The optimum of \eqref{eq:dp} has
value of at least $g^{\min}_{T+1} > 0$ by assumption, therefore a
small additive error in each of the $T$ backward recursion steps can
be absorbed without forsaking the approximation guarantee. In
particular, we slightly reduce the relative approximation factor at
each stage, while allowing an additive error of the order of $\epsilon
g^{\min}_{T+1}/2(2T+1)$.

We now give more details. We start by formally defining
$(\Sigma,\Pi)$-approximation functions, as introduced in
\cite[Def.~2.10]{nannifptascdpfull}. $\Sigma$ denotes an additive
approximation error, $\Pi = 1+\epsilon$ a multiplicative approximation
error; $\Pi$ has the same role as $K$ in this paper, and after the
main definition we revert to our usual notation using $K$ for relative
error.
\begin{definition}
  \label{def:spapxfun}
  Let $\epsilon > 0, \Sigma > 0, \Pi = 1 + \epsilon$ and let $\varphi
  : [A,B] \to \R^+$. We say that $\tilde{\varphi} : [A,B] \to \R^+$
  is a {\em $(\Sigma,\Pi)$-approximation function} of $\varphi$ (or
  more briefly, a $(\Sigma,\Pi)$-approximation of $\varphi$) if for
  all $x \in [A,B]$ we have $\varphi(x) \le \tilde{\varphi}(x) \le
  \Pi\varphi(x) + \Sigma$.
\end{definition}
It is possible to efficiently construct piecewise linear convex
$(\Sigma,\Pi)$-approximation functions for Lipschitz continuous
univariate convex functions that are described via value oracles. For
space reasons, we do not give full details here, but we refer the
reader to \cite{nannifptascdpfull} or the appendix \ref{apx:sigpi}.
The next three results are generalizations of
Prop.~\ref{prp:implicitexp}, Prop.~\ref{prop:exp} and
Prop.~\ref{prop:vflp}; the proofs are almost identical, but we use
Prop.~\ref{prp:CAFSP} for the error bounds.


\begin{proposition} \label{prp:implicitexpSig}
Let $\xi:[A,B] \to \R^+$ be a (not necessarily monotone) convex
function. Let $K_1,K_2 \geq 1$ and $\Sig>0$. Let $\psi:[A,B] \to \R^+$ be an
increasing piecewise linear convex function with breakpoints $A =
a_1<\ldots<a_n <B$ and slopes $0=\Delta_0 \leq \Delta_1 < \ldots <
\Delta_n$ that $(\Sig,K_1)$-approximates $\xi$. Let $D$ be a (not necessarily
nonnegative) truncated continuous r.v.\ with support $[D^{\min},
  D^{\max}]$ and such that $\min\{\Pr(D = D^{\min}), \Pr(D =
D^{\max})\} > 0$. Suppose $D$ admits a (generalized) PDF $F'$, and let
$F$ be the corresponding CDF. Let $\tilde F(\cdot)$ be a monotone
$K_2$-approximation of $F$ with breakpoints at $D^{\min} = d_1 <
\ldots < d_m = D^{\max}$, and let $\tilde F(d_0)=0$. Let
$f(x,d)=bx+e-d$ for some given $b,e \in \R$, and $m_i(x) = \max \{j
\;|\; d_j \leq bx+e-a_i\}$. Then
\begin{equation}
  \tilde \xi(x) =
\psi(A)+\sum_{i=1}^n (\Delta_i-\Delta_{i-1}) \left(
(bx+e-d_{m_i(x)}-a_i) \tilde F(d_{m_i(x)+1})+ \sum_{k=1}^{m_i(x)-1}
(d_{k+1}-d_k) \tilde F(d_{k+1}) \right)
\label{eq:tildexiSig}
\end{equation}
is a convex piecewise linear $(\Sig K_2,K_1K_2)$-approximation function of
$\mathbb{E}_D(\xi(f(x,D)))$ with $O(mn)$ pieces.  Furthermore, one can
construct in $O(mn\log n)$ time an oracle for $\tilde \xi(\cdot)$ in
the form of either a canonical representation, or a representation
that consists of breakpoints and slopes, each of size $O(mn)$.
\end{proposition}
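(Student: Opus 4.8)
The plan is to run the proof of Proposition~\ref{prp:implicitexp} (appendix~\ref{apx:proofimplicitexp}) essentially unchanged for the structural part, and to replace the relative-error accounting by the $(\Sig,\Pi)$-calculus of Prop.~\ref{prp:CAFSP} for the error bound. First I would observe that the derivation of the closed form \eqref{eq:tildexiSig} is identical to that of \eqref{eq:tildexi}. Writing $\psi(y)=\psi(A)+\sum_{i=1}^n(\Delta_i-\Delta_{i-1})(y-a_i)^+$ (valid because $\Delta_0=0$ and the slopes are nondecreasing), composing with the affine map $f(x,d)=bx+e-d$, taking $\mathbb{E}_D$, and finally substituting the monotone step function $\tilde F$ for the CDF $F$ yields exactly \eqref{eq:tildexiSig}. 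Consequently the combinatorial quantities are unaffected: $\tilde\xi$ is convex piecewise linear with $O(mn)$ pieces (one candidate breakpoint per pair $(i,j)$, from the location where $bx+e-a_i$ crosses $d_j$), the $m-1$ prefix sums $\sum_{k=1}^{j}(d_{k+1}-d_k)\tilde F(d_{k+1})$ are precomputed once, and the oracle --- either a canonical representation or breakpoints-and-slopes, each of size $O(mn)$ --- is built in $O(mn\log n)$ time by the same argument. None of this step uses whether $\psi$ approximates $\xi$ in the relative or the $(\Sig,\Pi)$ sense.

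For the error bound I would argue in two steps. \textbf{(a)} Composing the hypothesis $\xi(y)\le\psi(y)\le K_1\xi(y)+\Sig$ with $y=f(x,D)$ and integrating against the law of $D$ --- a positive linear operator that sends the constant $\Sig$ to $\Sig$ --- gives
\[ \mathbb{E}_D[\xi(f(x,D))]\;\le\;\mathbb{E}_D[\psi(f(x,D))]\;\le\;K_1\,\mathbb{E}_D[\xi(f(x,D))]+\Sig, \]
so $\mathbb{E}_D[\psi(f(x,D))]$ is a $(\Sig,K_1)$-approximation of $\mathbb{E}_D(\xi(f(x,D)))$. \textbf{(b)} Since $d\mapsto\psi(f(x,d))$ is monotone in $d$ and the weights $\Delta_i-\Delta_{i-1}\ge 0$, the stochastic-dominance construction of appendix~\ref{apx:proofimplicitexp} (sandwiching $D$ between discrete r.v.s $\hat D\preceq D\preceq\check D$ whose CDFs match $\tilde F$ at the breakpoints) shows that replacing $F$ by the monotone $K_2$-approximation $\tilde F$ scales the value by a factor in $[1,K_2]$, i.e.\ $\mathbb{E}_D[\psi(f(x,D))]\le\tilde\xi(x)\le K_2\,\mathbb{E}_D[\psi(f(x,D))]$. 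Chaining (a) and (b),
\[ \mathbb{E}_D[\xi(f(x,D))]\;\le\;\tilde\xi(x)\;\le\;K_2\bigl(K_1\,\mathbb{E}_D[\xi(f(x,D))]+\Sig\bigr)\;=\;K_1K_2\,\mathbb{E}_D[\xi(f(x,D))]+\Sig K_2, \]
which is exactly the $(\Sig K_2,K_1K_2)$-approximation claim; equivalently, this is the ``composition'' and ``approximation of approximation'' parts of Prop.~\ref{prp:CAFSP} applied to the pair (multiplicative factor, additive error), for which $(K_1,\Sig)$ followed by a further factor $K_2$ combine to $(K_1K_2,\Sig K_2)$.

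The part I expect to need the most care is step \textbf{(b)}: checking that the passage from $F$ to $\tilde F$ is purely multiplicative (factor in $[1,K_2]$) on the \emph{entire} expression \eqref{eq:tildexiSig}, including the partial boundary term $(bx+e-d_{m_i(x)}-a_i)\,\tilde F(d_{m_i(x)+1})$ and not merely the telescoping sum. This is precisely the technical core already established for Proposition~\ref{prp:implicitexp}; because that argument operates on the ratio $\tilde F/F$ and never interacts with the additive constant $\Sig$ (which was absorbed in step (a) as a constant under $\mathbb{E}_D$), it carries over verbatim, with the additive error simply riding along and getting multiplied by $K_2$.
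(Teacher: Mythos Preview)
Your proposal is correct and matches the paper's approach exactly: the paper states that the proof is ``almost identical'' to that of Prop.~\ref{prp:implicitexp}, replacing the relative-error calculus by Prop.~\ref{prp:CAFSP}, which is precisely your decomposition into steps (a) and (b). Your identification of step (b) as a pure $K_2$-approximation of $\mathbb{E}_D[\psi(f(x,D))]$ --- so that approximation-of-approximation yields $(\Sigma K_2,K_1K_2)$ --- is the right way to see it, and the structural/runtime claims carry over verbatim.
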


\begin{proposition} \label{prop:expSig}
Let $\xi:[A,B] \to \R^+$ be a convex function. Let $K_1, K_2 \geq 1$
and $\Sig>0$. Let $\psi:[A,B] \to \R^+$ be a piecewise linear convex
function with $n$ breakpoints that $(\Sig, K_1)$-approximates $\xi$
over $[A,B]$. Let $D$ be a (not necessarily nonnegative) continuous
r.v.\ satisfying the conditions stated in
Prop.~\ref{prp:implicitexp}. Let $\tilde F(\cdot)$ be a monotone
nondecreasing $K_2$-approximation of the CDF of $D$ with $m$
breakpoints. Let $f(x,D) = b x + e + cd$, with $b, e, c \in \R$. Then
we can construct in $O(mn\log n)$ time a convex piecewise linear $(K_2
\Sig, K_1K_2)$-approximation function of $\mathbb{E}_D(\xi(f(x,D)))$
in the form of either a canonical representation, or a representation
that consists of breakpoints and slopes, each of size $O(mn)$.
\end{proposition}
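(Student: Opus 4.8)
The plan is to follow the proof of Prop.~\ref{prop:exp}, replacing Prop.~\ref{prp:implicitexp} by its $(\Sigma,\Pi)$-counterpart Prop.~\ref{prp:implicitexpSig} and tracking the error with the calculus of $(\Sigma,\Pi)$-approximation functions (Prop.~\ref{prp:CAFSP}) rather than Prop.~\ref{prp:CAF}; one point in the error accounting needs care and is flagged at the end. First I would remove the coefficient $c$: exactly as in the remark following Prop.~\ref{prp:implicitexp}, one rewrites $f(x,D)=bx+e+cD$ as a transition of the canonical form $b'x+e'-d'$ by passing to a transformed r.v.\ $D'$ (equal to $-cD$, or to $cD$ together with a mirrored copy of the function, the precise choice depending on the sign of $c$). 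In every case the CDF of $D'$ is a horizontal rescaling of the CDF of $D$, so the given monotone $K_2$-approximation $\tilde F$ rescales to a monotone $K_2$-approximation of the CDF of $D'$ with the same number $m$ of breakpoints, and the reduction is for free.

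Next I would split $\psi$ into monotone parts. Let $x^\ast\in\arg\min_{[A,B]}\psi$ and set $\psi_\uparrow(x):=\psi(\max\{x,x^\ast\})$, $\psi_\downarrow(x):=\psi(\min\{x,x^\ast\})-\psi(x^\ast)$, so that $\psi_\uparrow$ is nonnegative nondecreasing, $\psi_\downarrow$ is nonnegative nonincreasing, both are piecewise linear convex with $O(n)$ breakpoints in total, and $\psi=\psi_\uparrow+\psi_\downarrow$. I would then apply Prop.~\ref{prp:implicitexpSig} or its decreasing analog to each of $\psi_\uparrow,\psi_\downarrow$ --- the correct variant being picked according to the sign of $c$ and the monotonicity of the piece, as in the proof of Prop.~\ref{prop:exp} --- \emph{feeding each monotone piece to the proposition as its own $1$-approximant}. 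Since a piecewise linear convex function $(0,1)$-approximates itself, this produces convex piecewise linear $(0,K_2)$-approximations $\tilde\xi_\uparrow,\tilde\xi_\downarrow$ of $\mathbb{E}_D[\psi_\uparrow(f(x,D))]$ and $\mathbb{E}_D[\psi_\downarrow(f(x,D))]$, each with $O(mn)$ pieces and constructible in $O(mn\log n)$ time. Their sum $\tilde\xi:=\tilde\xi_\uparrow+\tilde\xi_\downarrow$ is then, by the summation rule of Prop.~\ref{prp:CAFSP} (the additive terms being $0$), a convex piecewise linear $(0,K_2)$-approximation of $\mathbb{E}_D[\psi(f(x,D))]$ with $O(mn)$ pieces.

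To finish, note that since $\psi$ is a $(\Sigma,K_1)$-approximation of $\xi$ at every point, plugging $f(x,D)$ into the inequality $\xi\le\psi\le K_1\xi+\Sigma$ and taking expectation shows that $\mathbb{E}_D[\psi(f(x,D))]$ is a $(\Sigma,K_1)$-approximation of $\mathbb{E}_D[\xi(f(x,D))]$. Composing this with the $(0,K_2)$-approximation from the previous paragraph, via the approximation-of-approximation rule of Prop.~\ref{prp:CAFSP}, yields that $\tilde\xi$ is a $(K_2\Sigma,K_1K_2)$-approximation of $\mathbb{E}_D[\xi(f(x,D))]$, as claimed. For the output format and running time: each call to Prop.~\ref{prp:implicitexpSig} already delivers a canonical representation or a breakpoint/slope list of size $O(mn)$; merging the two sorted breakpoint lists, adding the slopes, and deleting redundant breakpoints takes an additional $O(mn)$, and converting between the two representations is $O(mn)$, so the total stays $O(mn\log n)$.

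The one delicate point is the error bookkeeping. The tempting shortcut --- also splitting $\xi$ into monotone parts $\xi_\uparrow,\xi_\downarrow$, observing that $\psi_\uparrow,\psi_\downarrow$ are $(\Sigma,K_1)$-approximations of them, and applying Prop.~\ref{prp:implicitexpSig} to each pair --- propagates the additive error through both halves and produces an additive term $2K_2\Sigma$ instead of $K_2\Sigma$. The remedy, built into the plan above, is to carry out the monotone split with purely multiplicative error (each monotone piece approximates \emph{itself}, with additive error $0$), so that the summation does not compound additive terms, and to charge the single additive error $\Sigma$ only once, at the comparison between $\mathbb{E}_D[\psi(\cdot)]$ and $\mathbb{E}_D[\xi(\cdot)]$. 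Everything else --- the case analysis on the sign of $c$, the breakpoint counts, and the format conversions --- is routine and mirrors the proof of Prop.~\ref{prop:exp}.
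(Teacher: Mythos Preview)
Your proposal is correct and follows the same route as the paper, which merely says the proof is ``almost identical'' to that of Prop.~\ref{prop:exp} using the $(\Sigma,\Pi)$-calculus of Prop.~\ref{prp:CAFSP}. Your careful bookkeeping --- applying Prop.~\ref{prp:implicitexpSig} to each monotone piece of $\psi$ as its own $(0,1)$-approximant and charging the single additive $\Sigma$ only once at the comparison $\mathbb{E}_D[\psi(f(\cdot,D))]$ versus $\mathbb{E}_D[\xi(f(\cdot,D))]$ --- is exactly what the construction in Alg.~\ref{alg:compressexpval} does (it operates on $\psi$ alone), and is indeed needed to land on $K_2\Sigma$ rather than $2K_2\Sigma$; the paper's one-line sketch leaves this implicit.
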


\begin{proposition}
  \label{prop:vflpSig}
  Suppose the DP formulation \eqref{eq:dp} satisfies Conditions
  \ref{con:sets}-\ref{con:functions}. Let $K_1, K_2, K_3 > 1, \;
  \Sig>0$. Let $\hat{z}_{t+1}$ be a piecewise linear convex
  $K_1$-approximation of the value function $z_{t+1}$. For a given
  $I_t \in S_t$, let $\tilde{Z}_{t+1}(\cdot)$ be a piecewise linear
  convex $K_2$-approximation of $\mathbb{E}[\hat{z}_{t+1}(\cdot +
    \theta^I I_t + \vec{\theta}^D \cdot \vD_t)]$, and
  $\tilde{G}_t^D(\cdot)$ be a piecewise linear convex $(\Sig,
  K_3)$-approximation of $\mathbb{E}[g_t^D(\cdot + \sigma^I I_t +
    \vec{\sigma}^D \cdot \vD_t)]$. Then the value of the following
  mathematical program:
  \begin{equation}
    \left.
    \begin{array}{rrcl}
      \displaystyle \bar{z}_t(I_t) := \min_{\vx_t} & g_t^I(I_t, \vx_t) +
      \tilde{G}_t^D(w) + \tilde{Z}_{t+1}(y) & & \\
      & A_t \vx_t &\ge& \vec{b}_t + \vec{\delta}_{b_t} I_t \\
      & \vec{\sigma}^x \cdot \vx_t - w &=& 0 \\
      & \vec{\theta}^x \cdot \vx_t - y &=& 0 \\
      &\vx_t &\ge& 0,
    \end{array}
    \right\}
    \label{eq:vflpSig}
  \end{equation}
  is a $(\Sig, \max \{K_1K_2, K_3\})$-approximation of the optimal value
  function $z_t(I_t)$. \eqref{eq:vflpSig} can be cast as an LP using the
  standard $\min\max$ reformulation of piecewise linear convex
  functions.
\end{proposition}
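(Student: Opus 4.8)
The plan is to follow the proof of Prop.~\ref{prop:vflp} essentially verbatim, replacing each appeal to the calculus of $K$-approximation functions (Prop.~\ref{prp:CAF}) by its $(\Sigma,\Pi)$-analog (Prop.~\ref{prp:CAFSP}) and carrying the additive error $\Sig$ along. First I would start from the Bellman recursion $z_t(I_t)=\min_{\vx_t \in \A_t(I_t)} \mathbb{E}\{g_t(I_t,\vx_t,\vD_t)+z_{t+1}(f_t(I_t,\vx_t,\vD_t))\}$. Using Condition~\ref{con:functions} to write $g_t = g_t^I + g_t^D\circ f^g_t$, together with the affine forms of $f_t$ and $f^g_t$ and linearity of expectation, the objective decomposes as $g_t^I(I_t,\vx_t) + \mathbb{E}[g_t^D(\vec{\sigma}^x\cdot\vx_t + \sigma^I I_t + \vec{\sigma}^D\cdot\vD_t)] + \mathbb{E}[z_{t+1}(\vec{\theta}^x\cdot\vx_t + \theta^I I_t + \vec{\theta}^D\cdot\vD_t)]$. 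Introducing $w := \vec{\sigma}^x\cdot\vx_t$ and $y := \vec{\theta}^x\cdot\vx_t$ as in \eqref{eq:vflpSig} turns the feasible region into exactly $\A_t(I_t)$ together with the two defining equalities, so that the objective of \eqref{eq:vflpSig} is the exact value-function objective with two of its terms replaced by approximations.

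Next I would bound, pointwise in $\vx_t$, the objective of \eqref{eq:vflpSig} against the exact objective. Since $\hat z_{t+1}$ is a $K_1$-approximation of $z_{t+1}$, monotonicity of expectation shows that $\mathbb{E}[\hat z_{t+1}(\cdot+\theta^I I_t+\vec{\theta}^D\cdot\vD_t)]$ is a $K_1$-approximation of $\mathbb{E}[z_{t+1}(\cdot+\theta^I I_t+\vec{\theta}^D\cdot\vD_t)]$; composing with the $K_2$-approximation $\tilde Z_{t+1}$ and invoking approximation of approximation (Prop.~\ref{prp:CAF}(7)) then gives that $\tilde Z_{t+1}$ is a $K_1K_2$-approximation of the $z_{t+1}$-term. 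The term $g_t^I(I_t,\vx_t)$ appears unmodified, i.e.\ it is a $(0,1)$-approximation of itself, and $\tilde G_t^D$ is a $(\Sig,K_3)$-approximation of the $g_t^D$-term by hypothesis. Summing the three summands and applying the summation rule of Prop.~\ref{prp:CAFSP}, the objective of \eqref{eq:vflpSig} is a $(\Sig,\max\{1,K_1K_2,K_3\})$-approximation of the exact objective; because $K_1,K_2,K_3>1$ this is $(\Sig,\max\{K_1K_2,K_3\})$. The additive part is not amplified, since only one of the three summands carries additive error.

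Finally I would use the minimization rule for $(\Sigma,\Pi)$-approximations from Prop.~\ref{prp:CAFSP}: if $\tilde h$ is a $(\Sig,\Pi)$-approximation of $h$ over a common domain, then $\min \tilde h$ is a $(\Sig,\Pi)$-approximation of $\min h$ — one inequality because $\tilde h \ge h$, the other by evaluating $\tilde h$ at a minimizer of $h$. Taking $h$ to be the exact objective and $\tilde h$ the objective of \eqref{eq:vflpSig}, both over $\A_t(I_t)$, yields that $\bar z_t(I_t)$ is a $(\Sig,\max\{K_1K_2,K_3\})$-approximation of $z_t(I_t)$, as claimed. The statement that \eqref{eq:vflpSig} is an LP follows from the usual epigraph (minmax) reformulation of the finitely-many-piece piecewise linear convex functions $g_t^I(I_t,\cdot)$, $\tilde G_t^D$ and $\tilde Z_{t+1}$: each contributes one auxiliary variable and a bounded number of linear constraints. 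I do not expect a genuine obstacle here; the only point needing care is ensuring that $\Sig$ propagates through the summation and the minimization without picking up a multiplicative factor, which is precisely what Prop.~\ref{prp:CAFSP} is designed to guarantee, and the rest is identical to Prop.~\ref{prop:vflp}.
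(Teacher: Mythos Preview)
Your proposal is correct and follows essentially the same approach as the paper, which simply states that the proof is almost identical to that of Prop.~\ref{prop:vflp} but using the $(\Sigma,\Pi)$-calculus of Prop.~\ref{prp:CAFSP} for the error bounds. Your write-up is in fact more detailed than the paper's, explicitly spelling out the decomposition, the approximation-of-approximation step for $\tilde Z_{t+1}$, the summation rule, and the minimization argument.
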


\begin{algorithm}[tb!]
  \caption{Procedure {\sc ApxScheme2}$(\epsilon)$ for Condition~\ref{con:functions}(ii)}
  \label{alg:fptas2}
  \begin{algorithmic}[1]
    \small
    \STATE $\bar{\epsilon} \leftarrow \frac{\epsilon}{2T+1}, K \leftarrow 1 + \bar{\epsilon}, \bar{K} \leftarrow 1 + \frac{\bar{\epsilon}}{2}$
    \STATE Using Prop.~\ref{prp:directconvSP}, let $\hat z_{T+1}$ be a $(\frac{g^{\min}_{T+1}\bar{\epsilon}}{2}, \bar{K} )$-approximation of $g_{T+1}$
    \FOR{$t:=T$ {\bf downto} 1}
    \STATE Compute  a $(\frac{g^{\min}_{T+1}\bar{\epsilon}}{2}, K)$-approximation of $g^D_{t}$, called $\tilde g^D_{t}$, using Prop.~\ref{prp:directconvSP}
    \STATE \label{step2:convolution1} $\tilde{F}_g \leftarrow$ {\sc CompressConvolution}$(\vD_t, \vec{\sigma}^D, K)$
    \STATE \label{step2:convolution2} $\tilde{F}_z \leftarrow$ {\sc CompressConvolution}$(\vD_t, \vec{\theta}^D, \bar{K})$
    \STATE \label{step2:gtilde} For fixed $I_t$, define $\tilde{G}_t^D \leftarrow$ {\sc CompressExpVal}$(\tilde{g}_t^D, (1, \sigma^I I_t, 1) , \tilde{F}_g)$ \\ /* $\tilde G_t^D(\cdot)$ is an oracle for a $(\frac{g^{\min}_{T+1}\bar{\epsilon}K}{2}, K^2)$-approximation of $\mathbb{E}[g^D_t(\cdot+\sigma^I I_t+\vec{\sigma}^D \cdot \vD_t)]$ */
    \STATE \label{step2:ztilde} For fixed $I_t$, define $ \tilde{Z}_{t+1} \leftarrow$ {\sc CompressExpVal}$(\hat{z}_{t+1}, (1, \theta^I I_t, 1), \tilde{F}_z)$ \\ /* $\tilde Z_{t+1}(\cdot)$ is an oracle for a $\bar{K}$-approximation of $\mathbb{E}[\hat{z}_{t+1}(\cdot+\theta^I I_t+\vec{\theta}^D \cdot \vD_t)]$ */
    \STATE \label{step2:zhat} $\hat{z}_t \leftarrow$ {\sc ScaledCompressConv}$(\bar{z}_t, [\S_{t}^{\min}, \S_{t}^{\max}], K)$, where $\bar{z}_t$ is defined as in \eqref{eq:vflpSig}
    \ENDFOR \RETURN $\hat{z}_1$
  \end{algorithmic}
\end{algorithm}

Equipped with these concepts, we analyze the FPTAS for the case of
Condition \ref{con:functions}(ii).

\begin{theorem} \label{thm:FPTAS2}
  Given a stochastic DP satisfying Conditions
  \ref{con:sets}-\ref{con:functions}(ii), {\sc ApxScheme2}($\epsilon$)
  (Algorithm~\ref{alg:fptas2}) computes a $(1+\epsilon)$-approximation
  of the optimal value function $z_1$, and runs in polynomial time in
  the binary input size and $1/\epsilon$.
\end{theorem}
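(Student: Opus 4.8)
The plan is to derive Algorithm~\ref{alg:fptas2} as a reduction to the explicit–function case of Theorem~\ref{thm:FPTAS1}. The only obstruction to reusing {\sc ApxScheme1} verbatim is that $g_{T+1}$ and the $g_t^D$ are now given by value oracles rather than explicit piecewise linear convex descriptions, and pure relative piecewise linear approximations of oracle-described Lipschitz continuous convex functions may not be computable in finite time (cf.~\cite[Cor.~2.5]{nannifptascdpfull}). I would therefore first replace those functions by piecewise linear convex $(\Sigma,\Pi)$-approximations carrying a tiny additive error, using Prop.~\ref{prp:directconvSP} (appendix~\ref{apx:sigpi}): with $\bar{\epsilon}=\epsilon/(2T+1)$, $K=1+\bar{\epsilon}$ and $\bar{K}=1+\bar{\epsilon}/2$ as in Algorithm~\ref{alg:fptas2}, take $\hat z_{T+1}$ to be a $(\tfrac{g^{\min}_{T+1}\bar\epsilon}{2},\bar K)$-approximation of $g_{T+1}=z_{T+1}$ and, at each stage $t$, $\tilde g_t^D$ to be a $(\tfrac{g^{\min}_{T+1}\bar\epsilon}{2},K)$-approximation of $g_t^D$; Prop.~\ref{prp:directconvSP} guarantees that these have polynomially many breakpoints and polynomially bounded number sizes, which is exactly what keeps the subsequent LPs of polynomial size. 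From here the algorithm is {\sc ApxScheme1} run on the now-explicit data $\hat z_{T+1}$, $\tilde g_t^D$, with the multiplicative factors replaced by $K$ and $\bar K$.

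The heart of the proof is a backward induction on $t=T+1,\dots,1$ showing that $\hat z_t$ is a piecewise linear convex $(\Sigma_t,\Pi_t)$-approximation of $z_t$, stored as an approximation set of cardinality $O(\tfrac{T}{\epsilon}\log\tfrac{(T+2-t)U_g}{g^{\min}_{T+1}})$, where $\Pi_t$ and $\Sigma_t$ obey the recursion $\Pi_t=K\max\{K^2,\bar K\Pi_{t+1}\}$, $\Sigma_t=K\bigl(\tfrac{g^{\min}_{T+1}\bar\epsilon K}{2}+\bar K\Sigma_{t+1}\bigr)$, with $\Pi_{T+1}=\bar K$, $\Sigma_{T+1}=\tfrac{g^{\min}_{T+1}\bar\epsilon}{2}$. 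The base case is the chosen approximation of $g_{T+1}$. For the step I would follow steps \ref{step2:convolution1}--\ref{step2:zhat} of Algorithm~\ref{alg:fptas2}, tracking errors with the $(\Sigma,\Pi)$-calculus (Prop.~\ref{prp:CAFSP}): {\sc CompressConvolution} produces $K$- and $\bar K$-approximate CDF oracles (Prop.~\ref{prp:convolution}); Prop.~\ref{prop:expSig} then turns $\tilde g_t^D$ into a $(\tfrac{g^{\min}_{T+1}\bar\epsilon K}{2},K^2)$-approximation $\tilde G_t^D$ of $\mathbb{E}[g_t^D(\cdot+\sigma^I I_t+\vec{\sigma}^D\cdot\vD_t)]$, and, using that $\hat z_{t+1}$ is a $1$-approximation of itself, a $\bar K$-approximation $\tilde Z_{t+1}$ of $\mathbb{E}[\hat z_{t+1}(\cdot+\theta^I I_t+\vec{\theta}^D\cdot\vD_t)]$; since by linearity of expectation $\mathbb{E}[\hat z_{t+1}(\cdot)]$ is a $(\Sigma_{t+1},\Pi_{t+1})$-approximation of $\mathbb{E}[z_{t+1}(\cdot)]$, summing the three objective terms and minimising over $\vx_t$ in \eqref{eq:vflpSig} (Prop.~\ref{prop:vflpSig} together with the $(\Sigma,\Pi)$-calculus) gives that $\bar z_t$ is a $\bigl(\tfrac{g^{\min}_{T+1}\bar\epsilon K}{2}+\bar K\Sigma_{t+1},\max\{K^2,\bar K\Pi_{t+1}\}\bigr)$-approximation of $z_t$; finally {\sc ScaledCompressConv} with factor $K$ --- legitimate because $\bar z_t\ge z_t\ge g^{\min}_{T+1}>0$, so a minimiser of $\bar z_t$ is attained and computed by one LP --- yields $\hat z_t$ with the stated $(\Sigma_t,\Pi_t)$. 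Unrolling the recursions with the choice $\bar\epsilon=\epsilon/(2T+1)$ (tightening the hidden constants in Prop.~\ref{prp:directconvSP} by a fixed factor if needed) gives $\Pi_1\le 1+\epsilon/2$ and $\Sigma_1\le\tfrac{\epsilon}{2}g^{\min}_{T+1}$; since $z_1(I_1)\ge g^{\min}_{T+1}$, we get $z_1\le\hat z_1\le\Pi_1 z_1+\Sigma_1\le (1+\tfrac{\epsilon}{2})z_1+\tfrac{\epsilon}{2}z_1=(1+\epsilon)z_1$, as required.

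For the running time I would observe that it equals the bound established in the proof of Theorem~\ref{thm:FPTAS1} (the approximation-set cardinalities, LP sizes, number sizes and Lipschitz constants are governed by the same estimates, now with the explicit descriptions $\tilde g_t^D$, $\hat z_{t+1}$ in place of the input piecewise linear functions), plus the cost of the at most $T+1$ calls to Prop.~\ref{prp:directconvSP}, which is polynomial in the binary input size (including $\log\kappa$ and $\log 1/g^{\min}_{T+1}$) and in $1/\epsilon$; hence the whole scheme runs in time polynomial in the binary input size and $1/\epsilon$. The main technical obstacle is the error bookkeeping of the second paragraph: one must verify that the additive slack is not blown up as it passes through the expectation operator --- each use of Prop.~\ref{prop:expSig} multiplies it by the CDF–approximation factor, and {\sc ScaledCompressConv} multiplies it by $K$ --- so that, after $T$ stages and the final conversion of the additive guarantee into a relative one via $z_1\ge g^{\min}_{T+1}$, the total error still lies within the factor $1+\epsilon$; balancing the per-stage relative and additive contributions is precisely why the slacks $\tfrac{g^{\min}_{T+1}\bar\epsilon}{2}$ and the reduced factor $\bar K=1+\bar\epsilon/2$ appear in Algorithm~\ref{alg:fptas2}.
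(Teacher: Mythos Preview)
Your proposal is correct in outline and uses essentially the same building blocks as the paper, but the error bookkeeping takes a genuinely different route. The paper does \emph{not} track an accumulated additive error $\Sigma_t$ through the recursion. Instead, it converts the additive slack into a purely relative one \emph{at every stage}, using the lower bound $z_t(I_t)\ge g^{\min}_{T+1}$ valid for all $t$: at the base step the $(\tfrac{g^{\min}_{T+1}\bar\epsilon}{2},\bar K)$-approximation of $g_{T+1}$ is shown to be a pure $K$-approximation; in the inductive step the only additive term comes from $\tilde G_t^D$ (not from $\hat z_{t+1}$, which is already purely relative by induction), and that single term $\tfrac{g^{\min}_{T+1}\bar\epsilon K}{2}$ is again absorbed against $z_t\ge g^{\min}_{T+1}$. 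The inductive claim is therefore the clean statement that $\hat z_t$ is a $K^{2(T+1-t)+1}$-approximation of $z_t$, with no $\Sigma$ appearing.

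Your approach of carrying a $(\Sigma_t,\Pi_t)$ pair and absorbing only at $t=1$ also works, but the specific numerical claims $\Pi_1\le 1+\epsilon/2$ and $\Sigma_1\le \tfrac{\epsilon}{2}g^{\min}_{T+1}$ do not follow from the parameter choices in Algorithm~\ref{alg:fptas2}: unrolling your recursion gives $\Pi_1=K^{T+2}\bar K^{T-1}$ and $\Sigma_1\approx (T+1)\tfrac{g^{\min}_{T+1}\bar\epsilon}{2}$, whose combined relative error is roughly $\epsilon$ only asymptotically in $T$ and can exceed $\epsilon$ for small $T$. The fix is not in Prop.~\ref{prp:directconvSP} but in the choice of $\bar\epsilon$ (e.g.\ a constant-factor shrinkage), so your hedge points at the wrong place. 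The paper's per-stage conversion avoids this bookkeeping entirely and makes the constants in Algorithm~\ref{alg:fptas2} work as stated; your variant would have the advantage of generalizing to settings where a uniform positive lower bound on $z_t$ is only available at the terminal stage.
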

\begin{proof}
  We outline here only the differences from the proof of
  Thm.~\ref{thm:FPTAS1}, focusing on the computation of the
  approximation factor. The runtime analysis is very similar to that
  of Thm.~\ref{thm:FPTAS1}, with one main modification: $g^D_t$ is not
  described by $m_t$ breakpoints and slopes, but by an approximation
  set of size $O\left(\frac{T}{\epsilon} \log
  \frac{U_g}{g^{\min}_{T+1}\epsilon}\right)$ computed in the course of
  the algorithm. This has repercussions through the analysis; however,
  it is easy to verify that it yields only a polynomial difference
  with respect to the runtime of Thm.~\ref{thm:FPTAS1}.

  The proof proceeds by induction for $t=T+1,\dots,1$, showing that at
  stage $t$ we obtain a piecewise linear convex
  $K^{2(T+1-t)+1}$-approximation $\hat{z}_t$ of the value function
  $z_t$ via an approximation set of cardinality $O(\log_K
  \frac{(T+2-t)U_g}{g^{\min}_{T+1}\epsilon}) = O(\frac{T}{\epsilon}
  \log \frac{(T+2-t)U_g}{g^{\min}_{T+1}\epsilon})$. The last equality
  is due to the fact that $K = 1 + \frac{\epsilon}{(2T + 1)}$. The
  error bound $K^{2T+1} \le 1 + \epsilon$ is a consequence of
  the inequality $\left(1 + \frac{x}{n}\right)^n \le 1 + 2x$, which
  holds for every $0 \le x \le 1$ and $n \in \N$.

  The step $t=T+1$ is trivial, because $\hat{z}_{T+1}$ is a
  $(\frac{g^{\min}_{T+1}\bar{\epsilon}}{2}, \bar{K})$-approximation of
  $g_{T+1}$. By definition of $(\Sig,\Pi)$-approximation functions
  $\hat{z}_{T+1}(I_{T+1}) \geq z_{T+1}(I_{T+1})$ for all
  $I_{T+1}$. Furthermore,
  \begin{align*}
  \hat{z}_{T+1}(I_{T+1}) &\leq \frac{g^{\min}_{T+1}\bar{\epsilon}}{2} +
  \left(1 + \frac{\bar{\epsilon}}{2}\right) z_{T+1}(I_{T+1}) =
  \frac{g^{\min}_{T+1}\epsilon}{2(2T+1)} + \left(1 +
  \frac{\epsilon}{2(2T+1)}\right) z_{T+1}(I_{T+1}) \\ & \leq
  \left(1 + \frac{\epsilon}{2T+1}\right)z_{T+1}(I_{T+1}) \leq K
  z_{T+1}(I_{T+1}),
  \end{align*}
  where the second inequality follows from the fact that $\tilde
  g^{\min}_{T+1}$ is a lower bound for $g_{T+1}$ and consequently for
  $z_{T+1}$. By Prop.~\ref{prp:directconvSP}, we obtain a piecewise
  linear convex explicit description of $\hat g_{T+1}$ via an
  approximation set of the stated cardinality.  We now show the
  induction step.


  At step \ref{step2:gtilde}, for a fixed value of $I_t$ we define a
  $(\frac{g^{\min}_{T+1}\bar{\epsilon}K}{2}, K^2)$-approximation
  $\tilde{G}_t^D(\cdot)$ of $\mathbb{E}_{\vD_t}[g_t^D(\cdot + \sigma^I
    I_t + \vec{\sigma}^D \cdot \vD_t)]$, using Prop.~\ref{prop:expSig}
  with parameters set to $\xi=g_t^D, \; \psi=\tilde g_t^D, \;
  n=O\left(\frac{T}{\eps} \log \frac{U_g}{g^{\min}_{T+1}\epsilon}
  \right), \; m=O(\frac{T \ell}{\epsilon} \log
  \frac{1}{\gamma}), \; K_1= K,
  \Sigma=\frac{g^{\min}_{T+1}\bar{\epsilon}}{2}, K_2= K$.


  At step \ref{step2:ztilde}, for a fixed value of $I_t$ we define a
  $\bar{K}$-approximation $\tilde{Z}_{t+1}(\cdot)$ of
  $\mathbb{E}[\hat z_{t+1}(\cdot + \theta^I I_t + \vec{\theta}^D \cdot
    \vD_t)]$, using Prop.~\ref{prop:exp} with parameters set to
  $\xi=\psi=\hat z_{t+1}, \; n=O(\frac{T}{\epsilon} \log
  \frac{(T+1-t)U_g}{g^{\min}_{T+1}\epsilon}), \; m=O(\frac{T\ell}{\epsilon}
  \log \frac{1}{\gamma}), \; K_1=1$ and $K_2=\bar{K}$. Given a
  value for $I_t$, we compute $\tilde{Z}_{t+1}$ in the form of a
  canonical representation of size $O(\frac{T^2 \ell}{\epsilon^2} \log
  \frac{(T+1-t)U_g}{g^{\min}_{T+1}\epsilon} \log\frac{1}{\gamma})$.

  At step \ref{step2:zhat} we compute an approximation $\hat z_t$ for
  the value function $z_t$, using the function $\bar{z}_t$ as defined
  in \eqref{eq:vflpSig}. By Prop.~\ref{prop:vflpSig} with parameters
  set to $K_1=K^{2(T-t)+1}$, $K_2=\bar{K}$, $K_3=K^2$, and $\Sigma =
  \frac{g^{\min}_{T+1}\bar{\epsilon}K}{2}$, we have that $\bar{z}_t$
  is a $(\frac{g^{\min}_{T+1}\bar{\epsilon}K}{2},
  K^{2(T-t)+1}\bar{K})$-approximation of $z_t$. This is also a
  $K^{2(T+1-t)}$-approximation of it, because:
  \begin{align*}
    \hat{z}_{t}(I_t) &\leq \frac{g^{\min}_{T+1}\bar{\epsilon}K}{2} +
    K^{2(T-t)+1}\left(1 + \frac{\bar{\epsilon}}{2}\right) z_{t}(I_t) \leq
    K^{2(T-t)+1}\left(\frac{\bar{\epsilon}}{2K^{2(T-t)}} + 1 + \frac{\bar{\epsilon}}{2}\right)z_{t}(I_t) \\ &\leq
    K^{2(T-t)+1}\left(1 + \bar{\epsilon}\right)z_{t}(I_t) = K^{2(T-t)+2} z_{t}(I_t),
  \end{align*}
  where we used the facts that $g^{\min}_{T+1} \le z_t(I_t)$ for all
  $I_t$, and $K^{2(T-t)} \ge 1$ since $t \le T$.  The rest of the
  proof is identical to Thm.~\ref{thm:FPTAS1}, updating the runtime as
  described at the beginning of this proof.

\end{proof}

\section{Concluding remarks}
\label{s:conclusions}
This paper presents an FPTAS for stochastic DPs with continuous scalar
state spaces and polyhedral action spaces. To construct an
approximation algorithm, we introduce several tools within the
framework of $K$-approximation sets and functions. More specifically,
we show how to approximately compute the convolution of a finite
number of continuous random variables, the expectation of a convex
function applied to a linear transformation of a continuous random
variable, and how to bound the size of the numbers of a
$K$-approximation set. Combining these tools, we obtain an FPTAS for a
general class of DP models. These models can be seen as multistage
stochastic LPs with one variable linking the stages. The most
important open question is whether our results can be extended to the
case of two (or more) variables linking the stages, or, in other
words, two-dimensional state spaces for the DP. Under an oracle model
for the cost functions, this paper shows that the problem is
intractable even for the two-dimensional case. If the cost functions
are known explicitly, the tools developed here do not suffice to
settle the approximability status of the problem. This is left for
future research.



\bibliographystyle{siamplain}
\bibliography{fptas,apx}

\clearpage

\appendix

\section*{Appendix}

\section{Additional routines}
\label{apx:functions}
We report here the pseudocode and running time of the routines taken
from \cite{halman14full,nannifptascdpfull} referenced in the main text
of the paper. Given a monotone nondecreasing function $\varphi$, we
define a routine {\sc FuncSearchInc}$(\varphi,D,\ell,u)$ that looks
for a point $x \in D$ such that $\ell \le \varphi(x) \le
u$. Implementing this function is straightforward for both discrete
domains and real intervals, using binary search.

In Alg.~\ref{alg:compressinc} we formally define the routine {\sc
  CompressInc} that constructs an oracle for a $K$-approximation function of a monotone nondecreasing $\vp$ in the form of a canonical
representation.

\begin{algorithm}[b!]
  \caption{Function {\sc CompressInc}($\varphi, [A,B], K$).}
  \begin{algorithmic}[1]
    \small
    \STATE $W \leftarrow$ {\sc ApxSetInc}$(\varphi, [A,B], K)$
    \RETURN $\{(x,\varphi(x)) \; | \; x \in W\}$ as an array of tuples sorted by their first coordinate
  \end{algorithmic}
  \label{alg:compressinc}
\end{algorithm}

The main routine used in Algorithm \ref{alg:compressinc} is {\sc
  ApxSetInc}. We give its pseudocode in Algorithm \ref{alg:apxsetinc}
for continuous functions that are bounded away from zero; its
counterpart for functions over discrete domains is straightforward. It
is shown in \cite{nannifptascdpfull} that {\sc ApxSetInc} determines a
$K$-approximation set with $O\left(\frac{1}{\epsilon} \log
\frac{\epsilon \varphi^{\max}}{\varphi^{\min}}\right)$ points in
$O\left((1+t_{\varphi}) (\frac{1}{\epsilon}\log \frac{\epsilon
  \varphi^{\max}}{\varphi^{\min}}) \log ((B-A)\kappa)\right)$
time. For a function defined over a discrete domain $D$, the running
time of {\sc ApxSetInc} becomes $O\left((1+t_{\varphi})
(\frac{1}{\epsilon}\log \frac{\epsilon
  \varphi^{\max}}{\varphi^{\min}}) \log |D|\right)$, see
\cite{halman14full}.

\begin{algorithm}[b!]
  \caption{{\sc Function ApxSetInc}($\varphi, [A, B], K$)}
  \begin{algorithmic}[1]
    \small
    \STATE $x \leftarrow A$, $W \leftarrow \{A, B\}$
    \WHILE{$x < B$}
      \STATE $x \leftarrow$ {\sc FuncSearchInc}$(\varphi,[x,B],\frac{K+1}{2}\varphi(x),
      K\varphi(x))$
      \STATE $W \leftarrow W \cup \{x\}$
    \ENDWHILE
    \RETURN $W$
  \end{algorithmic}
  \label{alg:apxsetinc}
\end{algorithm}

\section{Proof of Prop.~5.2}
\label{apx:proofimplicitexp}
\begin{proof}
For $i=0,\dots,n$, define:
$$\psi_i(y) := \begin{cases} (\Delta_i-\Delta_{i-1})(y-a_i) &
  \textrm{if } y \geq a_i, \\ 0 & \textrm{otherwise}. \end{cases}$$
Because $\psi$ is piecewise linear, for all $y \in [A, B]$ we have
$\psi(y)=\psi(A)+\sum_{i=1}^n \psi_i(y)$. We have:
\begin{equation}
\label{eq:expvalxi}
\begin{array}{ll}
  \mathbb{E}_D(\xi(f(x,D))) & \leq \mathbb{E}_D(\psi(f(x,D))) \\
& = \mathlarger{\int}_{d_1}^{d_m} \psi(f(x,d)) F'(d) \diff d\\
    & = \psi(A)+ \mathlarger{\int}_{d_1}^{d_m} \sum_{i=1}^n \psi_i(f(x,d)) F'(d) \diff d\\
    & = \psi(A)+\sum_{i=1}^n \big( \mathlarger{\int}_{d_1}^{d_m} \psi_i(f(x,d))F'(d) \diff d \big) \\
    & = \psi(A)+\sum_{i=1}^n (\Delta_i-\Delta_{i-1}) \mathlarger{\int}_{d_1}^{d_m} \max\{0, (bx+e-d -a_i)\} F'(d) \diff d \\
    & = \psi(A)+\sum_{i=1}^n (\Delta_i-\Delta_{i-1}) \mathlarger{\int}_{d_1}^{\max \{d : bx+e-d \geq a_i\}} (bx+e-d -a_i)  F'(d) \diff d,
\end{array}
\end{equation}
where the first inequality is due to $\psi$ being a
$K_1$-approximation function of $\xi$, and the rest are algebraic
manipulations.

We construct a discrete r.v.\ $\hat{D}$ that takes
values $d_1,\dots,d_{m-1}$ with:
\begin{equation*}
\Pr(\hat{D} = d_j) = \begin{cases} F(d_2) & \textrm{if } j=1 \\
  F(d_{j+1}) - F(d_{j}) & \textrm{if } j=2,\dots,m-1.
\end{cases}
\end{equation*}
It follows that the CDF $\hat{F}$ of $\hat{D}$ is $\hat{F}(d) = \max
\{F(d_{j+1}) : d_j \le d, j=1,\dots,m-1\}$.  In order to compute
expected values of continuous functions of $\hat{D}$ using the
classical integration approach, we define the generalized PDF of
$\hat{D}$ as follows:
\begin{equation*}
  \hat{F}'(d) := \delta(d-d_1)F(d_2) + \sum_{j=2}^{m-1} \delta(d-d_j) (F(d_{j+1})-F(d_{j})).
\end{equation*}
Notice that $\hat{D} \preceq D$ in the usual stochastic order,
because $\Pr(\hat{D} > d) \le \Pr(D > d)$ for all $d$. Since $\hat{D}
\preceq D$ and $(bx+e-d -a_i)$ is a decreasing function in $d$,
it follows that:
\begin{align*}
\mathlarger{\int}_{d_1}^{\max \{d : bx+e-d \geq a_i\}} (bx+e-d -a_i)
F'(d) \diff d \le \mathlarger{\int}_{d_1}^{\max \{d : bx+e-d \geq
  a_i\}} (bx+e-d -a_i) \hat{F}'(d) \diff d  = \\
(bx+e-d_1-a_i)F(d_2) + \sum_{j=2}^{m_i(x)} (bx+e-d_j -a_i) (F(d_{j+1}) - F(d_j)) = \\
(bx+e-d_1-a_i)\tilde{F}(d_2) + \sum_{j=2}^{m_i(x)} (bx+e-d_j -a_i) (\tilde{F}(d_{j+1}) - \tilde{F}(d_j)) = \\
(bx+e-d_{m_i(x)}-a_i) \tilde{F}(d_{m_i(x)+1}) + \sum_{k=1}^{m_i(x)-1}(d_{k+1} - d_k) \tilde{F}(d_{k+1}).
\end{align*}
Putting everything together in \eqref{eq:expvalxi}, we obtain:
\begin{equation*}
  \mathbb{E}_D(\xi(f(x,D))) \le \psi(A)+\sum_{i=1}^n (\Delta_i-\Delta_{i-1}) \left((bx+e-d_{m_i(x)}-a_i) \tilde{F}(d_{m_i(x)+1}) + \sum_{k=1}^{m_i(x)-1}(d_{k+1} - d_k) \tilde{F}(d_{k+1})\right).
\end{equation*}
Because $\psi(x) \le K_1\xi(x)$ for all $x$ and $F'$ is nonnegative, we
can write:
\begin{equation}%
\label{eq:psixipart1}
\begin{array}{rl}
\displaystyle \mathbb{E}_D(\psi(f(x,D))) &= \mathlarger{\int}_{d_1}^{d_m} \psi(f(x,d)) F'(d) \diff d\\
\displaystyle&\leq \displaystyle K_1 \mathlarger{\int}_{d_1}^{d_m} \xi(f(x,d)) F'(d) \diff d = K_1
\mathbb{E}_D(\xi(f(x,D))).
\end{array}
\end{equation}

We now construct a discrete r.v.\ $\breve{D}$ that takes
values $d_1,\dots,d_m$ with $\Pr(\breve{D} = d_j) = F(d_j) -
F(d_{j-1}) = \tilde{F}(d_j) - \tilde{F}(d_{j-1})$. It follows that the
CDF $\breve{F}$ of $\breve{D}$ is $\breve{F}(d) = \max \{F(d_j) : d_j
\le d, j=1,\dots,m\}$. As before, we define the generalized PDF of
$\breve{D}$ as follows:
\begin{equation*}
  \breve{F}'(d) := \sum_{j=1}^m \delta(d-d_j) (F(d_j)-F(d_{j-1})).
\end{equation*}
Notice that $D \preceq \breve{D}$ in the usual stochastic order,
because $\Pr(D > d) \le \Pr(\breve{D} > d)$ for all $d$. Since $D
\preceq \breve{D}$ and $(bx+e-d -a_i)$ is a decreasing function in
$d$, it follows that:
\begin{eqnarray*}
\mathlarger{\int}_{d_1}^{\max \{d : bx+e-d \geq a_i\}} (bx+e-d -a_i)
F'(d) \diff d \ge \mathlarger{\int}_{d_1}^{\max \{d : bx+e-d \geq
  a_i\}} (bx+e-d -a_i) \breve{F}'(d) \diff d  = \\
\sum_{j=1}^{m_i(x)} (bx+e-d_j -a_i) (F(d_{j}) - F(d_{j-1})) = \sum_{j=1}^{m_i(x)} (bx+e-d_j -a_i) (\tilde{F}(d_{j}) - \tilde{F}(d_{j-1})) = \\
(bx+e-d_{m_i(x)}-a_i) \tilde{F}(d_{m_i(x)}) + \sum_{k=1}^{m_i(x)-1}(d_{k+1} - d_k) \tilde{F}(d_{k}) \ge \\
\frac{1}{K_2}(bx+e-d_{m_i(x)}-a_i) \tilde{F}(d_{m_i(x)+1}) + \frac{1}{K_2}\sum_{k=1}^{m_i(x)-1}(d_{k+1} - d_k) \tilde{F}(d_{k+1}),
\end{eqnarray*}
where the last inequality follows from the fact that $F(d_{j+1}) \le
K_2 F(d_j)$ for $j=1,\dots,m-1$ by definition of $K$-approximation set
for monotone function. Then we can write:
\begin{align}
  \mathbb{E}_D(\psi(f(x,D))) = \psi(A)+\sum_{i=1}^n (\Delta_i-\Delta_{i-1}) \mathlarger{\int}_{d_1}^{\max \{d : bx+e-d \geq a_i\}} (bx+e-d -a_i)  F'(d) \diff d \ge  \notag \\
  \psi(A) +  \notag \\ \frac{1}{K_2}\sum_{i=1}^n (\Delta_i-\Delta_{i-1}) \left( (bx+e-d_{m_i(x)}-a_i) \tilde{F}(d_{m_i(x)+1}) + \sum_{k=1}^{m_i(x)-1}(d_{k+1} - d_k) \tilde{F}(d_{k+1}) \right) \ge \notag \\
 \frac{\tilde \xi(x)}{K_2}. \label{eq:psixipart2}
\end{align}
By combining inequalities \eqref{eq:psixipart1} and
\eqref{eq:psixipart2} we get the desired approximation ratio.

It is easy to verify that $\tilde \xi$ has increasing slopes and is
therefore a convex piecewise linear increasing function. To conclude,
we discuss how to compute a representation of $\tilde \xi$ in terms of
breakpoints and slopes. By examining the expression:
\begin{equation*}
  \tilde \xi(x) =
  \psi(A)+\sum_{i=1}^n (\Delta_i-\Delta_{i-1}) \left(
  (bx+e-d_{m_i(x)}-a_i) \tilde F(d_{m_i(x)+1})+
  \sum_{k=1}^{m_i(x)-1} (d_{k+1}-d_k) \tilde F(d_{k+1}) \right),
\end{equation*}
we see that the slope of each term of the summation changes whenever
$m_i(x)$ changes. There are at most $m$ such changes for each term,
and their location can be computed in $O(m)$ time because for term $i$
the breakpoints are of the form $d_j - e + a_i$ for $j=1,\dots,m$. We
then obtain, in $O(mn)$ time, $n$ sorted lists with $m$ elements
each. These lists can be merged in $O(mn\log n)$ time, yielding a
superset of the breakpoints of $\xi(x)$. To compute the slopes we only
need an additional $O(m)$ time to preprocess the $m-1$ partial sums
$\sum_{k=1}^{m-1}(d_{k+1}-d_k) \tilde F(d_{k+1})$, since the value of
$\tilde F$ at all queried points is known and available in the
approximation set that induces $\tilde F$. The overall time
requirement is therefore $O(mn \log n)$.
\end{proof}

\section{$(\Sigma, \Pi)$-approximation functions and their calculus}
\label{apx:sigpi}
These results are taken from \cite{nannifptascdpfull}.
\begin{proposition} [Adapted from Prop.~3.7 in \cite{nannifptascdpfull}]
 \label{prp:directconvSP}
Let $\vp:[A,B] \ra \R^{+}$ be a $\kappa$-Lipschitz continuous convex
function. Then, for every constants $\Sig > 0$ and $\Pi=1+\eps > 1$, one
can construct a piecewise-linear convex $(\Sig,\Pi)$-approximation
function $\breve \vp$ for $\vp$ with $p:=O\left(\frac{1}{\eps} \log
\frac{\eps \varphi^{\max}}{\Sig}\right)$ pieces in \break
$O\left((1+t_{\varphi}) (\frac{1}{\eps} \log \frac{\eps
  \varphi^{\max}}{\Sig})\log \frac{\kappa(B-A)}{\Sig}\right)$ time,
with explicitly computed breakpoints and slopes. Moreover, the value
of $\breve{\varphi}(\cdot)$ can be determined in $\log p$ time at any
point in $[A,B]$.
\end{proposition}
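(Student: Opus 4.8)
The plan is to reduce the statement to the monotone case and then build the breakpoints greedily, letting the value of $\varphi$ grow by roughly a factor $\Pi$ together with an additive slack $\Sigma$ between consecutive breakpoints. First I would use ternary search (or golden-section search) to locate a point $x^\ast \in [A,B]$ at which $\varphi$ is within additive error $\Sigma$ of its minimum; by the Lipschitz bound this requires shrinking the search interval to length $O(\Sigma/\kappa)$, i.e.\ $O\big((1+t_\varphi)\log\frac{\kappa(B-A)}{\Sigma}\big)$ time. On $[A,x^\ast]$ the function $\varphi$ is, up to a sliver of $\varphi$-variation at most $\Sigma$ that can be absorbed into the additive budget, nonincreasing, and on $[x^\ast,B]$ it is nondecreasing; it therefore suffices to handle the nondecreasing case, mirror it for the nonincreasing part, and glue the two piecewise-linear pieces at $x^\ast$. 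The glued function is convex: on the left part the chord slopes are non-positive and increasing, on the right part they are non-negative and increasing.

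For the nondecreasing case I would construct breakpoints $A = x_0 < x_1 < \dots < x_p = B$ iteratively: given $x_i$, find by binary search (valid since $\varphi$ is nondecreasing) the largest $x_{i+1} \le B$ with $\varphi(x_{i+1}) \le \Pi\varphi(x_i) + \Sigma$, stopping the search once the $\varphi$-variation over the candidate interval drops below $\Theta(\Sigma)$; one such search costs $O\big((1+t_\varphi)\log\frac{\kappa(B-A)}{\Sigma}\big)$. Let $\breve\varphi$ be the piecewise-linear interpolant of the points $(x_i,\varphi(x_i))$. Since $\varphi$ is convex, every segment of $\breve\varphi$ is a chord of $\varphi$ and hence lies above it, so $\breve\varphi \ge \varphi$; and the chord slopes are nondecreasing, so $\breve\varphi$ is convex. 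For the upper bound, fix $x \in [x_i,x_{i+1}]$: because $\breve\varphi$ is linear on $[x_i,x_{i+1}]$ and $\varphi$ is nondecreasing, $\breve\varphi(x) \le \max\{\varphi(x_i),\varphi(x_{i+1})\} = \varphi(x_{i+1}) \le \Pi\varphi(x_i) + \Sigma \le \Pi\varphi(x) + \Sigma$, so $\breve\varphi$ is indeed a $(\Sigma,\Pi)$-approximation of $\varphi$.

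It remains to bound $p$ and assemble the running time. I would split the iterations at the threshold $\varphi(x_i) = \Sigma/\epsilon$. While $\varphi(x_i) \le \Sigma/\epsilon$ the target $\Pi\varphi(x_i)+\Sigma$ exceeds $\varphi(x_i)+\Sigma$, so even after subtracting the $\Theta(\Sigma)$ binary-search slack (with suitable constants) each step raises $\varphi$ by $\Omega(\Sigma)$, and this phase lasts $O(1/\epsilon)$ steps. Once $\varphi(x_i) > \Sigma/\epsilon$ we have $\Sigma < \epsilon\varphi(x_i)$, so $\Pi\varphi(x_i)+\Sigma \ge (1+\epsilon)\varphi(x_i)$ and each step multiplies $\varphi$ by at least $1+\Omega(\epsilon)$; since $\varphi \le \varphi^{\max}$ this phase lasts $O\big(\frac{1}{\epsilon}\log\frac{\epsilon\varphi^{\max}}{\Sigma}\big)$ steps. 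Hence $p = O\big(\frac{1}{\epsilon}\log\frac{\epsilon\varphi^{\max}}{\Sigma}\big)$, and multiplying by the $O\big((1+t_\varphi)\log\frac{\kappa(B-A)}{\Sigma}\big)$ cost per step (the initial minimizer search being dominated by this) gives the claimed construction time; storing the $p$ breakpoints with their slopes in a sorted array yields $O(\log p)$ evaluation time via binary search. The step I expect to be the main obstacle is the piece count: one has to interleave the additive-$\Sigma$ and multiplicative-$\Pi$ regimes cleanly and check that truncating each binary search early does not wipe out the per-step progress in either regime (the monotone reduction, the convexity of $\breve\varphi$, and the $(\Sigma,\Pi)$ error bound are then routine). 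Alternatively, the statement can be invoked directly from \cite[Prop.~3.7]{nannifptascdpfull}.
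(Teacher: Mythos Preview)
The paper does not give its own proof of this proposition: it is stated in the appendix under the heading ``These results are taken from \cite{nannifptascdpfull}'' and is simply imported from that reference, so there is nothing in the paper to compare your argument against. Your closing remark that ``the statement can be invoked directly from \cite[Prop.~3.7]{nannifptascdpfull}'' is in fact exactly what the paper does.

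That said, your sketch is sound and is the standard construction one would expect to find in the cited reference: locate an approximate minimizer by ternary/golden-section search, reduce to the monotone case on each side, and build breakpoints greedily by binary search so that $\varphi$ grows by roughly $\Pi\varphi(x_i)+\Sigma$ between consecutive points; the two-regime counting (additive $\Sigma$-steps while $\varphi\le\Sigma/\epsilon$, multiplicative $(1+\epsilon)$-steps thereafter) yields the stated $O\big(\tfrac{1}{\epsilon}\log\tfrac{\epsilon\varphi^{\max}}{\Sigma}\big)$ piece bound. The one place to be careful, which you flag, is the gluing at the approximate minimizer $x^\ast$: because $x^\ast$ need not be the exact argmin, $\varphi$ may still vary by up to $\Sigma$ on a small neighbourhood around the true minimum, and the chord slopes near $x^\ast$ are not automatically of the ``right'' sign. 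The usual fix (also in \cite{nannifptascdpfull}) is to insert a short constant segment at level $\varphi(x^\ast)$ over the interval on which $\varphi$ lies within $\Sigma$ of $\varphi(x^\ast)$; this preserves convexity (a flat piece between a nonincreasing and a nondecreasing convex arc) and the extra additive error is absorbed into the $\Sigma$ budget. With that detail handled, your argument is complete.
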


\begin{proposition}[{\bf Calculus of $(\Sigma,\Pi)$-approximation Functions}]
  \label{prp:CAFSP} For $i=1,2$ let $\Sigma_i \geq 0, \; \Pi_i
  \geq 1$, let $\varphi_i:D \ra \R^+$ be an arbitrary function over
  continuous domain $D$, and let $\tilde{{\varphi}_i}:D \ra \R^+$ be a
  $(\Sigma_i,\Pi_i)$-approximation of $\varphi_i$. Let $\psi_1:D \ra
  D$, and let $\alpha_i \in \R^+$. The following rules hold:
\begin{enumerate}[noitemsep]
\item $\varphi_1$ is a (0,1)-approximation of itself,
\item \label{item:CAFSPlin}(linearity of appr.)
  $ \alpha_1  \tilde{\varphi_1} + \alpha_2 $ is a $(\alpha_1  \Sigma_1,\Pi_1)$-approximation of $\alpha_1  \varphi_1 + \alpha_2 $,
\item \label{item:CAFSPsum}(summation of appr.)
  $\tilde{\varphi_1}+\tilde{\varphi_2}$ is a $(\Sigma_1+\Sigma_2,\max\{\Pi_1,\Pi_2\})$-approximation of $\varphi_1 + \varphi_2$,
\item \label{item:CAFSPtra} (composition of appr.) $\tilde{\varphi_1}(\psi_1)$ is a $(\Sigma_1,\Pi_1)$-approximation of $\varphi_1(\psi_1)$,
\item \label{item:CAFSPmin}(minimization of appr.)
  $\min \{\tilde{\varphi_1},\tilde{\varphi_2}\}$ is a $(\max\{\Sigma_1,\Sigma_2\},\max\{\Pi_1,\Pi_2\})$-approximation of $\min\{\varphi_1,\varphi_2\}$,
\item \label{item:CAFSPmax}(maximization of appr.)
  $\max \{\tilde{\varphi_1},\tilde{\varphi_2}\}$ is a $(\max\{\Sigma_1,\Sigma_2\},\max\{\Pi_1,\Pi_2\})$-approximation of $\max\{\varphi_1,\varphi_2\}$,
\item \label{item:CAFSPapx} (approximation of appr.) If $\varphi_2=\tilde{\varphi_1}$ then
  $\tilde{\varphi_2}$ is a $(\Sigma_2+\Pi_2\Sigma_1,\Pi_1\Pi_2)$-approximation of $\varphi_1$.
\end{enumerate}
\end{proposition}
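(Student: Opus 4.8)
The plan is to verify the seven rules one by one, directly from Definition~\ref{def:spapxfun}: $\tilde\varphi$ is a $(\Sigma,\Pi)$-approximation of $\varphi$ exactly when $\varphi(x)\le\tilde\varphi(x)\le\Pi\varphi(x)+\Sigma$ for every $x$ in the common domain $D$. This is the $(\Sigma,\Pi)$-analogue of the calculus of $K$-approximation functions (Prop.~\ref{prp:CAF}), and the argument parallels the proof of that proposition; the only new element is tracking how the additive slacks $\Sigma_i$ propagate through each operation. I would fix an arbitrary point $x\in D$ and argue pointwise, invoking repeatedly the standing hypotheses $\Pi_i\ge1$, $\Sigma_i\ge0$, $\alpha_i\ge0$, and the nonnegativity of all functions involved.

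Rules~1 and~4 are immediate: the first is $\varphi_1(x)\le\varphi_1(x)\le1\cdot\varphi_1(x)+0$, and the fourth is just the defining double inequality for $\tilde\varphi_1$ evaluated at $\psi_1(x)\in D$. For Rule~2 (linearity) I would multiply the chain $\varphi_1(x)\le\tilde\varphi_1(x)\le\Pi_1\varphi_1(x)+\Sigma_1$ by $\alpha_1\ge0$ and add $\alpha_2\ge0$; the one point to note is that $\Pi_1\alpha_1\varphi_1(x)+\alpha_1\Sigma_1+\alpha_2\le\Pi_1(\alpha_1\varphi_1(x)+\alpha_2)+\alpha_1\Sigma_1$, which uses $\Pi_1\ge1$ and $\alpha_2\ge0$. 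Rule~3 (summation) follows by adding the two chains and bounding $\Pi_1\varphi_1(x)+\Pi_2\varphi_2(x)\le\max\{\Pi_1,\Pi_2\}(\varphi_1(x)+\varphi_2(x))$, again using nonnegativity.

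For Rules~5 and~6 I would argue by cases according to which of the two functions attains the relevant extremum at $x$. The lower bounds $\min\{\tilde\varphi_1(x),\tilde\varphi_2(x)\}\ge\min\{\varphi_1(x),\varphi_2(x)\}$ and $\max\{\tilde\varphi_1(x),\tilde\varphi_2(x)\}\ge\max\{\varphi_1(x),\varphi_2(x)\}$ hold termwise from $\tilde\varphi_i\ge\varphi_i$. For the upper bound in the min case, if $\varphi_1(x)=\min\{\varphi_1(x),\varphi_2(x)\}$ then $\min\{\tilde\varphi_1(x),\tilde\varphi_2(x)\}\le\tilde\varphi_1(x)\le\Pi_1\varphi_1(x)+\Sigma_1\le\max\{\Pi_1,\Pi_2\}\min\{\varphi_1(x),\varphi_2(x)\}+\max\{\Sigma_1,\Sigma_2\}$; the max case is symmetric, picking the index that realizes $\max\{\tilde\varphi_1(x),\tilde\varphi_2(x)\}$.

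Rule~7 (approximation of approximation) is the only place where two additive slacks genuinely combine. Substituting $\varphi_2=\tilde\varphi_1$ into $\tilde\varphi_2(x)\le\Pi_2\varphi_2(x)+\Sigma_2$ and then applying $\tilde\varphi_1(x)\le\Pi_1\varphi_1(x)+\Sigma_1$ yields $\tilde\varphi_2(x)\le\Pi_1\Pi_2\varphi_1(x)+\Pi_2\Sigma_1+\Sigma_2$, while the lower bound is the chain $\varphi_1(x)\le\tilde\varphi_1(x)=\varphi_2(x)\le\tilde\varphi_2(x)$. I do not anticipate a real obstacle: the whole proof is bookkeeping. The only spots requiring mild care are making sure the inherited additive error is multiplied by the correct $\Pi$ factor (Rule~7) and invoking $\Pi_i\ge1$ whenever a constant is absorbed into the multiplicative term (Rule~2).
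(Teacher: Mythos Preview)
Your proof is correct; each of the seven verifications goes through exactly as you describe, and the only subtle points (absorbing the additive constant $\alpha_2$ into the multiplicative factor in Rule~2 via $\Pi_1\ge 1$, and the order of the slacks $\Pi_2\Sigma_1+\Sigma_2$ in Rule~7) are handled properly. The paper itself does not give a proof of this proposition: it simply imports the statement from \cite{nannifptascdpfull}, so there is nothing to compare your argument against beyond noting that your direct pointwise verification is the natural (and standard) one.
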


\end{document}